\newtheorem{thm}{Theorem}[section]
\newtheorem{cor}[thm]{Corollary}
\newtheorem{lem}[thm]{Lemma}
\newtheorem{cla}[thm]{Claim}
\algrenewcommand\algorithmiccomment[2][\normalsize]{{#1\hfill\(\triangleright\) \emph{#2}}}
\newcommand{\IID}{i.i.d\xspace}
\newcommand{\algoname}{\textsc{fair-bias}\xspace}
\newcommand{\E}{\mathbb{E}}
\newcommand{\STD}{\mathrm{std}}
\newcommand{\ALG}{\mathrm{ALG}}
\newcommand{\OPT}{\mathrm{OPT}}
\newcommand{\SUP}{\mathrm{SUP}}
\newcommand{\support}{\mathrm{support}}
\newcommand{\e}{\varepsilon}
\newcommand{\tsty}{\textstyle}
\newcommand{\calD}{\mathcal{D}}
\newcommand{\calU}{\mathcal{U}}
\newcommand{\calI}{\mathcal{I}}
\newcommand{\Tk}{\mathbb{T}_k}
\title{Stochastic Online Metric Matching} 
\author[1]{Anupam Gupta\thanks{Email address: \url{anupamg@cs.cmu.edu}.}\thanks{Supported in part by NSF awards CCF-1536002, CCF-1540541, and CCF-1617790, and the Indo-US Joint Center for Algorithms Under Uncertainty.}}
\author[2]{Guru Guruganesh\thanks{Email address: \url{gurug@google.com}}\thanks{Work done in part while the author was at Carnegie Mellon University.}}
\author[3]{Binghui Peng\thanks{Email address: \url{pbh15@mails.tsinghua.edu.cn}.}\thanks{Work done in part while the author was visiting Carnegie Mellon University.}}
\author[1]{David Wajc\thanks{Email address: \url{dwajc@cs.cmu.edu}.}\thanks{Supported in part by NSF grants CCF-1618280, CCF-1814603, CCF-1527110, NSF CAREER award CCF-1750808 and a Sloan Research Fellowship.}}
\affil[1]{Carnegie Mellon University and University of Pittsburgh}
\affil[2]{Google Research}
\affil[3]{Tsinghua University}
\date{\vspace{-0.5cm}}
\begin{document}

\maketitle
\begin{abstract}
  We study the minimum-cost metric perfect matching problem 
  under online i.i.d arrivals. 
  We are given a fixed metric with a server at each of the
  points, and then requests arrive online, each drawn independently from
  a known probability distribution over the points. Each request has to be
  matched to a free server, with cost equal to the distance. The goal is
  to minimize the expected total cost of the matching.

  Such stochastic arrival models have been widely studied for the
  \emph{maximization} variants of the online matching problem; however,
  the only known result for the \emph{minimization} problem is a tight
  $O(\log n)$-competitiveness for the random-order arrival model. This
  is in contrast with the adversarial model, where an optimal
  competitive ratio of $O(\log n)$ has long been conjectured and remains
  a tantalizing open question. 
	
  In this paper, we show improved results in the i.i.d arrival model.
  We show how the i.i.d model can be used to give substantially better
  algorithms: our main result is an
  $O((\log \log \log n)^2)$-competitive algorithm in this model. Along
  the way we give a $9$-competitive algorithm for the line and tree
  metrics. Both results imply a strict separation between the i.i.d model and the adversarial and random order models, both for general metrics and these much-studied metrics.
\end{abstract}

\section{Introduction}

We study the minimum-cost metric (perfect) matching problem under online
i.i.d.\ arrivals. In this problem, we are given a fixed metric $(S,d)$
with a server at each of the $n = |S|$ points. Then $n$ requests arrive
online, where each request is at a location that is drawn independently
from a known probability distribution $\calD$ over the points. Each such
arriving request has to be matched immediately and irrevocably to a free
server, whereupon it incurs a cost equal to distance of its location to
this server. The goal is to minimize the total expected cost.

The minimization version of online matching was first considered in the
standard adversarial setting by Khuller et al.~\cite{khuller1994line}
and Kalyanasundaram and Pruhs~\cite{kalyanasundaram1993online}; both
papers showed $(2n-1)$-competitive deterministic algorithms, and proved
that this was tight for, say, the star metric. After about a decade, a
randomized algorithm with an $O(\log^3 n)$-competitiveness was given by
Meyerson et al.~\cite{meyerson2006randomized}; this was improved to
$O(\log^2 n)$ by Bansal et al.~\cite{bansal2007log}, which remains the
best result known. (Recall that the maximization version of matching
problems have been very widely studied, but they use mostly unrelated
techniques.)

The competitive ratio model with adversarial online arrivals is often
considered too pessimistic, since it assumes an all-powerful adversary.
One model to level the playing field, and to make the model perhaps
closer to practice, is to restrict the adversary's power. Two models
have been popular here: the \emph{random-order arrivals} (or
\emph{secretary}) model, and the \emph{i.i.d.} model defined above. The
random-order model is a \emph{semi-random} model, in which the
worst-case input is subjected to random perturbations. Specifically, the
adversary chooses a \emph{set} of requests, which are then presented to
the algorithm in a uniformly random order. The min-cost online matching
problem in this random-order model was studied by Raghvendra, who gave a
tight $O(\log n)$-competitive algorithm~\cite{raghvendra2016robust}. The
random-order model also captures the i.i.d.\ setting, so the natural
goal is to get a better algorithm for the i.i.d.\ model.
Indeed, our main result for the i.i.d.\ model gives exactly such a result:
\begin{thm}[Main Theorem]
	\label{thm:main}
	There is an $O((\log \log \log n)^2)$-competitive algorithm for online
	minimum-cost metric perfect matching in the i.i.d.\ setting.
\end{thm}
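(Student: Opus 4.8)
The plan is to reduce the problem on a general metric to the tree case --- where the $9$-competitive algorithm promised in the abstract applies --- while paying only a $\mathrm{poly}(\log\log\log n)$ factor for the reduction, rather than the $O(\log n)$ factor that a black-box embedding into a hierarchically separated tree (HST) would cost (and which would merely reprove the random-order bound). A convenient first move is to replace the random target $\OPT$ by a deterministic one. Since $\calD$ is known, one can write down the \emph{expected instance}: the fractional transportation problem that ships one unit of supply from each server to the demand vector $n\cdot\calD$; call its value $\OPT^*$. A coupling between the realized arrivals and the expected demands, together with a concentration bound --- here the \emph{independence} of the arrivals is what makes concentration available --- shows that $\E[\OPT]$ is, up to constants, $\OPT^*$ plus the cost of ``repairing'' the (square-root-sized) demand fluctuations. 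So it suffices to be $\mathrm{poly}(\log\log\log n)$-competitive against $\OPT^*$ together with these fluctuation terms.

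The heart of the argument must exploit i.i.d.\ concentration to trade the $\log n$ distortion of a single HST embedding for a $\mathrm{poly}(\log\log\log n)$ loss. I would work with a hierarchical decomposition of $(S,d)$ into clusters at geometrically decreasing diameter scales $\Delta,\Delta/2,\Delta/4,\dots$ and track, scale by scale, the discrepancy between the \emph{realized} number of requests inside each cluster and its \emph{expected} number $n\cdot\calD(\cdot)$. The payoff of the i.i.d.\ model is that each such count is a sum of independent indicators, hence concentrated, so the realized sub-instance inside a cluster looks --- up to a lower-order fluctuation correction --- like the expected sub-instance, which, being known in advance, can be pre-routed near-optimally. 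I expect the competitive ratio then to factor as a product of two quantities: the number of scales (equivalently, the depth of a recursion on the instance size) that genuinely require online decisions, and the overhead incurred at each such scale --- essentially a union bound over the couplings used, or the distortion of an HST over a suitably coarsened point set. The i.i.d.\ concentration should let one bound both of these by $O(\log\log\log n)$ --- roughly, an iterated logarithm of $n$ --- giving $O((\log\log\log n)^2)$ overall. Making these two bounds precise, in particular explaining why only an iterated-logarithmic number of scales is ``expensive'', is the step I expect to be the main obstacle and the one carrying essentially all of the difficulty.

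Finally I would assemble the bound: sum the per-scale guarantees over the expensive scales; bound the total fluctuation cost by a geometric-type series and charge it to $\OPT^*$ (the fluctuation at a scale of diameter $\sim 2^t$ contributes cost $\sim 2^t$ times a square-root term, and summing these against the decomposition is dominated by $\OPT^*$ up to a small factor); and convert competitiveness against $\OPT^*$ plus fluctuations back into competitiveness against $\E[\OPT]$ via the coupling of the first step. The tree/line base case and the transportation and coupling setup are comparatively routine; all of the difficulty, and the reason the bound is a power of $\log\log\log n$ rather than of $\log n$, lies in the scale-by-scale concentration analysis of the previous paragraph.
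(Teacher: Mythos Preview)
Your proposal is a framework, not a proof: the step you yourself flag as ``the main obstacle and the one carrying essentially all of the difficulty'' --- why only an iterated-logarithmic number of distance scales should be expensive --- is left as a hope, and nothing in the sketch indicates a mechanism that would force it. Hierarchical decompositions at geometric scales have $\Theta(\log(\text{aspect ratio}))$ levels; concentration of cluster occupancies tells you that the per-level fluctuation cost is small, but it does not by itself collapse the number of levels, and you give no reason the expensive scales should number $O(\log\log\log n)$ rather than $O(\log n)$ or $O(\log\log n)$. Without that, the plan does not improve on the $O(\log n)$ that a direct HST embedding already gives.

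The paper's route is structurally different. First, it reduces (via a transshipment argument) to the uniform distribution over server locations. It then runs a single concrete algorithm, \textsc{fair-bias}: at each step, with free-server set $S_k$, it computes a min-cost fractional $b$-matching $M(S_k)$ between $S_k$ and the uniform demand on $S$, and matches the arriving request according to it. The crucial structural fact is that $S_k$ is a uniformly random $k$-subset of $S$, so $\E[\ALG]=\sum_{k=1}^n \E_{S_k\sim\calU_k}[M(S_k)]$. Thus the analysis is indexed by the \emph{time step} $k$ (equivalently, the number of free servers), not by a distance scale. Two per-step bounds are then combined: an implicit-tree-embedding bound $\E[M(S_k)]\le \frac{O(\log n)}{\sqrt{nk}}\E[\OPT]$, and a balls-and-bins bound $\E[M(S_k)]\le \E[\OPT]/(\E[N_k]-k)$, where $N_k$ is the total load of the $k$ most loaded bins when $n$ balls are thrown into $n$ bins. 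The latter is obtained by coupling the random $k$-set of free servers with the top-$k$ most requested locations and comparing both $M(S_k)$ and $\OPT$ to the same biregular $b$-matching between that $k$-set and its complement. Since $\E[N_k]-k=\Theta\!\big(k\cdot\frac{\log(n/k)}{\log\log(n/k)}\big)$ for $k\le C_0 n$, one gets $\E[M(S_k)]\le C\cdot\frac{\log\log(n/k)}{k\log(n/k)}\E[\OPT]$. Using the tree-embedding bound for $k\le n/\log^2 n$ and the balls-and-bins bound for $n/\log^2 n\le k\le n/2$, the sum over $k$ evaluates to $O((\log\log\log n)^2)\cdot\E[\OPT]$; the $(\log\log\log n)^2$ arises precisely from integrating $\frac{\log\log(n/k)}{k\log(n/k)}$ over $[n/\log^2 n,\,C_0 n]$. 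None of this is a scale-by-scale HST argument, and the balls-and-bins coupling is the idea your sketch is missing.
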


Observe that the competitiveness here is better than the lower bounds of
$\Omega(\log n)$ known for the worst-case and random-order models. 

\medskip\noindent\textbf{Matching on the Line and Trees.}  There has also been
much interest in solving the problem for the line metric: a
deterministic lower bound of $(9+\e)$ for some $\e > 0$ is known,
showing it is strictly harder than the optimal search (or ``cow-path'')
problem, which it generalizes~\cite{fuchs2005online}. However, getting
better results for the line than for general metrics has been elusive:
an $O(\log n)$-competitive \emph{randomized} algorithm for line metrics
(and for doubling metrics) was given by~\cite{gupta2012online}. In the
\emph{deterministic} setting, recently Nayyar and
Raghvendra~\cite{nayyar2017input} gave an $O(\log^2 n)$-competitive algorithm, whose competitive ratio was subsequently proven to be $O(\log n)$ by
Raghvendra~\cite{raghvendra2018optimal},
improving on the $o(n)$-competitive algorithm of Antoniadis et
al.~\cite{antoniadis2014n}. To the best of our knowledge, nothing better
is known for tree metrics than for general metrics in both the
adversarial and the random-order models. Our second result  for the i.i.d.~model is a constant-competitive algorithm for tree metrics.
\begin{thm}[Algorithm for Trees] \label{thm:tree}
	There is a $9$-competitive algorithm
	for online minimum-cost metric perfect matching on tree metrics in the
	i.i.d.~setting.
\end{thm}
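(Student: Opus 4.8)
The plan is to exploit the tree structure through an edge-by-edge decomposition of the matching cost, and to design an online algorithm that is within a constant factor of optimal on \emph{each edge separately}, thereby avoiding the $\Theta(\log n)$ loss inherent in embedding the metric into an HST. For an edge $e$ of the tree, deleting $e$ splits the vertices into two parts; let $T_e$ be one of them, let $|T_e|$ be the number of servers it contains, and let $R_e$ be the (random) number of the $n$ requests that land in $T_e$. Since the cost of matching a request to a server decomposes as $\sum_{e} d(e)$ over the edges $e$ on their tree path, the cost of \emph{any} matching equals $\sum_e d(e)\cdot(\#\text{pairs whose path uses }e)$; and the offline optimum uses edge $e$ exactly $|R_e-|T_e||$ times (the net imbalance across $e$, simultaneously realizable on all edges by the laminar structure of the tree). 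Hence $\E[\OPT]=\sum_e d(e)\cdot\E\big[\,|R_e-|T_e|\,|\,\big]$, with $R_e\sim\mathrm{Bin}(n,\calD(T_e))$, so it suffices to build an online matching with $\E[\#\text{crossings of }e]\le 9\cdot\E\big[\,|R_e-|T_e|\,|\,\big]$ for every edge $e$.

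For the algorithm I would use a \emph{stay-local} rule: root the tree arbitrarily, and when a request arrives at $v$, let $T_u$ be the smallest subtree on the root-path of $v$ that still contains a free server (this is well-defined, since when the $i$-th request arrives only $i-1$ servers are taken), and match the request to a free server inside $T_u$, descending from $u$ into the child subtree that is \emph{furthest from saturation}, measured using the known expected demand $n\cdot\calD(\cdot)$ versus the number of free servers. The purpose of this rule is the following invariant: a request that originates in $T_e$ is matched \emph{outside} $T_e$ only when $T_e$ is already saturated, and a server of $T_e$ is given to an outside request only when that request was \emph{forced down} into $T_e$. Writing $U_e$ and $D_e$ for the numbers of such up- and down-crossings of $e$ over the run, conservation of flow across $e$ gives $U_e-D_e=R_e-|T_e|$, and therefore
\[
\#\text{crossings of }e \;=\; U_e+D_e \;=\; |R_e-|T_e||\;+\;2\min(U_e,D_e).
\]

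It then remains to prove $\E[\min(U_e,D_e)]\le 4\,\E\big[\,|R_e-|T_e|\,|\,\big]$, which would yield the per-edge factor $1+2\cdot 4=9$ and hence the theorem. Intuitively, $\min(U_e,D_e)$ counts the ``wasteful'' crossings that the offline optimum avoids but that the online rule incurs because it committed a server too early; such a commitment can only happen when the relevant region is near-saturated, and this is where the i.i.d.\ assumption is essential: the number of requests in any region is a binomial, so near-saturation of a (near-)balanced subtree is precisely an event for which the offline optimum also pays $\Omega(\sqrt{\mathrm{Var}})=\Omega(\E[\,|R_e-|T_e|\,|\,])$ in expectation. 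Concretely I would split into the case $n\cdot\calD(T_e)\le|T_e|$ (a server-rich subtree, where one bounds $\E[U_e]$, since every up-crossing witnesses a full saturation of $T_e$, which in turn forces enough excess arrivals-plus-down-crossers into $T_e$ to be tied by a concentration bound to $\E[|R_e-|T_e||]$) and the symmetric case $n\cdot\calD(T_e)>|T_e|$; in each case the wasteful crossings get charged to the fluctuation $|R_e-n\calD(T_e)|$ and the structural imbalance $|n\calD(T_e)-|T_e||$, each of which is $O\!\big(\E[\,|R_e-|T_e|\,|\,]\big)$.

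The main obstacle is exactly this last step. A subtree can be saturated not only by its own arrivals but by down-crossings that cascade from above, so $U_e$ and $D_e$ are not controlled by $R_e$ alone; the inequality must be established by an amortized/charging argument over the whole run — or, equivalently, via a potential function tracking, for each subtree, (free servers) minus (expected remaining arrivals) — that crucially uses binomial concentration for every region simultaneously. I expect that this charging argument, together with the ``furthest-from-saturation'' descent rule being exactly strong enough to make the down-crossing term small, and with pushing the constants to line up at $9$, is where essentially all the work lies; the edge decomposition, the definition of the algorithm, and the conservation identity are routine.
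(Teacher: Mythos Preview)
Your proposal takes a fundamentally different route from the paper, and that route has a genuine gap.

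The paper does \emph{not} analyze a stay-local rule edge by edge. Instead it (i) reduces an arbitrary request distribution $\calD$ to the uniform distribution over server locations, turning an $\alpha$-competitive algorithm into a $(2\alpha+1)$-competitive one via a transshipment argument; and (ii) for the uniform case runs an LP-guided algorithm (\algoname): with current free servers $S_k$, it solves the min-cost fractional $b$-matching $M(S_k)$ between $S_k$ (capacities $1/k$) and all $n$ locations (capacities $1/n$), and matches the arriving request by sampling a server according to this fractional matching. The key structural fact is that $S_k$ is then a \emph{uniformly random} $k$-subset of $S$, so $\E[\ALG]=\sum_{k} \E_{S_k\sim\calU_k}[M(S_k)]$. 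On a tree, both $\E[\OPT]$ and each $M(S_k)$ decompose over edges as mean absolute deviations of binomials across the corresponding cut; comparing these termwise via $\STD(X)/\sqrt{2}\le \E|X-\E X|\le \STD(X)$ gives $\E_{S_k}[M(S_k)]\le 2\,\E[\OPT]/\sqrt{nk}$, and summing over $k$ yields a factor $4$. The ``$9$'' in the theorem is then $2\cdot 4+1$ from the transshipment reduction --- it has nothing to do with a $1+2\cdot 4$ decomposition of crossings.

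Your edge decomposition and the identity $U_e+D_e=|R_e-|T_e||+2\min(U_e,D_e)$ are correct, but the entire content of the result is the step you yourself flag as ``where essentially all the work lies'': bounding $\E[\min(U_e,D_e)]\le 4\,\E\big[|R_e-|T_e||\big]$ \emph{simultaneously for every edge}. The obstacle you name --- that saturation of $T_e$ can be caused by down-crossings cascading from above, so $U_e$ and $D_e$ are coupled across all edges and not governed by $R_e$ alone --- is exactly what makes this hard, and nothing in the sketch (the case split on $n\calD(T_e)\lessgtr |T_e|$, ``charge to fluctuation plus structural imbalance'', an unspecified potential) indicates how to control it, let alone with the specific constant $4$. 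The algorithm is also not fully specified: ``furthest from saturation'' is ambiguous once nested subtrees carry opposite-sign imbalances, and you give no argument that \emph{any} descent rule makes the stay-local scheme $O(1)$-competitive. In short, what you have is the setup; the paper's LP-based algorithm and its uniformity-of-$S_k$ invariant are precisely the device that replaces the charging argument you are missing.
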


Observe that the competitiveness here is better than the lower bound of $9+\epsilon$ for line metrics in the worst-case model.

\medskip\noindent\textbf{Max-Weight Perfect Matching.}  Recently, Chang
et al.~\cite{chang2018dispatch} presented a
$\nicefrac{1}{2}$-competitive algorithm for the \emph{maximum}-weight
perfect matching problem in the i.i.d.\ setting. We show that our
algorithm is versatile, and that a small change to our algorithm gives
us a maximization variant matching this factor of $\nicefrac12$. Our approach differs from that of \cite{chang2018dispatch}, in that we match an arriving request based on the realization of free servers, while they do so based on the ``expected realization''. See Appendix~\ref{sec:max-weight-matching} for details.

\subsection{Our Techniques}\label{sec:technqiues}

Both Theorems~\ref{thm:main} and~\ref{thm:tree} are achieved by the same
algorithm. The first observation guiding this algorithm is that
we may assume that the distribution $\calD$ of request locations is just
the uniform distribution on the server locations. (In
Appendix~\ref{sec:transshipment} we show how this assumption can be
removed with a constant factor loss in the competitiveness.)  Our
algorithm is inspired by the following two complementary
consequences of the uniformity of~$\calD$.
\begin{itemize}
\item Firstly, each of the $n-t+1$ free servers' locations at time $t$
  are equally likely to get a request in the future, and as such they
  should be left unmatched with equal probability. Put otherwise, we
  should match to them with equal probability of $1/(n-t+1)$.  However,
  matching \emph{any} arriving request to any free server with
  probability $1/(n-t+1)$ is easily shown to be a bad choice.
\item So instead, we rely on the second observation: the $t^\textrm{th}$
  request is equally likely to arrive at each of the $n$ server
  locations. This means we can couple the matching of free server
  locations with the location of the next request, to guarantee a
  marginal probability of $1/(n-t+1)$ for each free server to be matched
  at time $t$.
\end{itemize}
Indeed, the constraints that each location is matched at time $t$ with
probability $1/n$ (i.e., if it arrives) and each of the free servers are
matched with marginal probability $1/(n-t+1)$ can be expressed as a
\emph{bipartite flow} instance, which guides the coupling used by the
algorithm. Loosely speaking, our algorithm is fairly intuitive. It finds
a min-cost fractional matching between the current open server locations
and the expected arrivals, and uses that to match new requests. The
challenge is to bound the competitive ratio---in contrast to previously
used approaches (for the maximization version of the problem) it
does not just try to match vertices using a fixed template of choices,
but rather dynamically recomputes a template after each arrival.

A major advantage of this approach is that we understand the distribution of the open servers. 
We maintain the invariant that after $t$ steps, the set of free servers form a
uniform random $(n-t)$-subset of $[n]$---the randomness being over our
choices, and over the randomness of the input. This allows us to relate
the cost of the algorithm in the $t^\textrm{th}$ step to the expected
cost of this optimal flow between the original $n$ points and a
uniformly random subset of $(n-t)$ of these points. The latter expected
cost is just a statistic based on the metric, and does not depend on our
algorithm's past choices. For paths and trees, we bound this quantity
explicitly by considering the variance across edge-cuts in the 
tree---this gives us the proof of Theorem~\ref{thm:tree}.

Since general metrics do not have any usable cut structure, we need a
different idea for Theorem~\ref{thm:main}. We show that tree-embedding
results can be used either explicitly in the algorithm or just implicitly 
in the proof, but both give an $O(\log n)$ loss. To avoid this loss, we 
use a different
balls-and-bins argument to improve our algorithm's competitiveness to
$O((\log \log n))^2)$. 
In particular, we provide better bounds on our algorithm's per-step cost 
in terms of $\E[OPT]$ and the expected load of the $k$ most loaded bins in
a balls 
and bins process, corresponding to the number of requests in the $k$ most frequently-requested servers. 
Specifically, we show that $\E[OPT]$ is bounded in terms of the expected \emph{imbalance} between the number of requests and servers in these top $k$ server locations.
Coupling this latter uniform $k$-tuple with the uniform $k$-tuple of free servers left by our algorithm,
we obtain our improved bounds on the per-step cost of our algorithm in terms of $\E[OPT]$
and these bins' load, from which we obtain our improved $O((\log \log
n)^2)$ competitive ratio.
Interestingly, combining both balls and bins and tree embedding bounds for the per-step cost of step $k$ (appealing to different bounds for different ranges of $k$) gives us a further improvement: we
prove that our algorithm is $O((\log \log \log n)^2)$ competitive.

\subsection{Further Related Work}
\label{sec:further-related-work}

I.i.d.~stochastic arrivals have been studied for various online
problems, e.g., for Steiner tree/forest~\cite{GGLS08}, set
cover~\cite{GGLMSS13}, and k-server~\cite{DEHLS}. Closer to our work, 
stochastic arrivals have been widely studied in the online matching
literature, though so far mostly for maximization variants. Much of this
work was motivated by applications to online advertising, for which the
worst-case optimal $(1-\nicefrac1e)$-competitive ratios
\cite{karp1990optimal,mehta2007adwords,aggarwal2011online} seem
particularly pessimistic, given the financial incentives involved and
time-learned information about the distribution of
requests. Consequently, many stochastic arrival models have been
studied, and shown to admit better than $1-\nicefrac1e$ competitive
guarantees.  The stochastic models studied for online matching and
related problems, in increasing order of attainable competitive ratios,
include random order
(e.g., \cite{goel2008online,karande2011online,mahdian2011online}),
unknown i.i.d.---where the request distribution is unknown---(e.g.,
\cite{devanur2012asymptotically,mirrokni2012simultaneous}), and known
i.i.d.~(e.g.,
\cite{feldman2009online,bahmani2010improved,brubach2016new}).
Additional work has focused on interpolating between adversarial and
stochastic input (e.g.,
\cite{essfandiari2015online,mahdian2007allocating}). See Mehta's survey
\cite{mehta2013online} and recent work
\cite{cohen2018randomized,huang2018match,huang2018online,huang2019tight,gamlath2019online,naor2018near}
for more details. The long line of work on online matching, both under
adversarial and stochastic arrivals, have yielded a slew of algorithmic
design ideas, which unfortunately do not seem to carry over to
minimization problems, nor to perfect matching problems.

As mentioned above, the only prior work for stochastic online matching with
minimization objectives was the random order arrival result of
Raghvendra \cite{raghvendra2016robust}.  We are hopeful that our work
will spur further research in online minimum-cost perfect matching under
stochastic arrivals, and close the gap between our upper bounds and the
(trivial) lower bounds for the problem.
\section{Our Algorithm}
\label{sec:our-algorithm}

In this section we present our main algorithm, together with some of its
basic properties. Throughout the paper we assume that the distribution
over request locations is uniform over the $n$ servers' locations. We
show in \Cref{sec:transshipment} that this assumption is WLOG: it
increases the competitive ratio by at most a constant. 
In particular, we show the following.
\begin{restatable}{lem}{transshipment}\label{transshipment}
	Given an $\alpha$-competitive algorithm $\ALG_{\mathcal{U}}$  for the \emph{uniform} distribution over server locations, $\mathcal{U}$,
	we can construct a $(2\alpha+1)$-competitive algorithm $\ALG_{\mathcal{D}}$ for any 
	distribution $\mathcal{D}$. 
\end{restatable}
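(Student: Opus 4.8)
The plan is to build $\ALG_{\mathcal{D}}$ as a black-box wrapper around $\ALG_{\mathcal{U}}$ that ``transships'' each request through the uniform distribution. First I would fix an optimal coupling of a pair $(v,u)$ with $v\sim\mathcal{D}$, $u\sim\mathcal{U}$, i.e.\ one attaining the transportation distance $W := \min\{\E[d(v,u)] : v\sim\mathcal{D},\,u\sim\mathcal{U}\}$. The algorithm maintains, in parallel, a virtual run of $\ALG_{\mathcal{U}}$ on the same set of $n$ servers. When the $t$-th real request arrives at $v_t\sim\mathcal{D}$, draw $u_t$ from the conditional law of $u$ given $v=v_t$ under the optimal coupling, feed the virtual request $u_t$ to $\ALG_{\mathcal{U}}$, let $w_t$ be the server it matches $u_t$ to, and match the real request $v_t$ to $w_t$. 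Since the set of servers occupied so far is identical in the real and virtual runs, $w_t$ is free, so $\ALG_{\mathcal{D}}$ is well defined and online. Crucially, $u_1,\dots,u_n$ are i.i.d.\ $\mathcal{U}$ (each is marginally uniform, and they are independent because the $v_t$'s are independent and the per-step auxiliary randomness is fresh), so $\ALG_{\mathcal{U}}$'s guarantee applies to the virtual run: $\E[\ALG_{\mathcal{U}}]\le\alpha\,\E[\OPT_{\mathcal{U}}]$, where $\OPT_{\mathcal{U}}$ denotes the offline optimum on $n$ i.i.d.\ $\mathcal{U}$-requests.

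Next I would charge the cost twice by the triangle inequality. On the one hand, $\ALG_{\mathcal{D}}=\sum_t d(v_t,w_t)\le\sum_t d(v_t,u_t)+\sum_t d(u_t,w_t)=\sum_t d(v_t,u_t)+\ALG_{\mathcal{U}}$, so taking expectations gives $\E[\ALG_{\mathcal{D}}]\le nW+\alpha\,\E[\OPT_{\mathcal{U}}]$. On the other hand, taking an optimal offline matching of the real instance and moving each virtual request $u_t$ to wherever $v_t$ is matched shows, for every realization, $\OPT_{\mathcal{U}}\le\sum_t d(u_t,v_t)+\OPT_{\mathcal{D}}$; since $\E[\OPT_{\mathcal{U}}]$ depends only on the marginal law of the virtual instance we may evaluate it under this coupling, giving $\E[\OPT_{\mathcal{U}}]\le nW+\E[\OPT_{\mathcal{D}}]$. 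Combining the two yields $\E[\ALG_{\mathcal{D}}]\le(1+\alpha)\,nW+\alpha\,\E[\OPT_{\mathcal{D}}]$.

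It remains to show $nW\le\E[\OPT_{\mathcal{D}}]$, which I regard as the crux. Let $\hat{\mathcal{D}}:=\tfrac1n\sum_t\delta_{v_t}$ be the random empirical distribution of the realized requests; then a realized offline optimum is exactly $n$ times the transportation distance $W_1(\hat{\mathcal{D}},\mathcal{U})$ between $\hat{\mathcal{D}}$ and the uniform server distribution $\mathcal{U}$, so $\E[\OPT_{\mathcal{D}}]=n\,\E[W_1(\hat{\mathcal{D}},\mathcal{U})]$. The map $\mu\mapsto W_1(\mu,\mathcal{U})$ is convex (a convex combination of couplings is a coupling of the convex combinations, and transport cost is linear in the coupling), and $\E[\hat{\mathcal{D}}]=\mathcal{D}$, so Jensen's inequality gives $\E[W_1(\hat{\mathcal{D}},\mathcal{U})]\ge W_1(\mathcal{D},\mathcal{U})=W$, hence $\E[\OPT_{\mathcal{D}}]\ge nW$. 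Plugging this in, $\E[\ALG_{\mathcal{D}}]\le(1+\alpha)\E[\OPT_{\mathcal{D}}]+\alpha\,\E[\OPT_{\mathcal{D}}]=(2\alpha+1)\,\E[\OPT_{\mathcal{D}}]$, as claimed. The routine parts are the coupling setup and the two triangle-inequality charges; the step requiring care is recognizing that $nW\le\E[\OPT_{\mathcal{D}}]$ follows from convexity of optimal transport together with $\E[\hat{\mathcal{D}}]=\mathcal{D}$.
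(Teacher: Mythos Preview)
Your proof is correct and follows essentially the same approach as the paper: both construct $\ALG_{\mathcal{D}}$ by transshipping each incoming request through an optimal $\mathcal{D}$-to-$\mathcal{U}$ coupling, then establish the three inequalities $\E[\ALG_{\mathcal{D}}]\le nW+\E[\ALG_{\mathcal{U}}]$, $\E[\OPT_{\mathcal{U}}]\le nW+\E[\OPT_{\mathcal{D}}]$, and $nW\le\E[\OPT_{\mathcal{D}}]$ (the paper writes the coupling as an explicit LP with $M=nW$). The only cosmetic difference is in the last step: you invoke convexity of $\mu\mapsto W_1(\mu,\mathcal{U})$ and Jensen on the empirical distribution, whereas the paper observes directly that the expected matching $p_{i,j}$ is a feasible transshipment plan and hence costs at least $n\cdot LP$---these are the same argument in different clothing.
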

Focusing on the uniform distribution over server locations, our algorithm is loosely the following: in each round of the
algorithm, we compute an optimal fractional matching between remaining
free servers and remaining requests (in expectation). Now when a new
request arrives, we just match the newly-arrived request according to
this matching.

\subsection{Notation}\label{sec:notation}

Our analysis will consider $k$-samples from the set $S = [n]$ both with
and without replacement. We will set up the following notation to distinguish them:
\begin{itemize}
\item Let $\calI_k$ be the distribution over $k$-sub-multisets of
  $S=[n]$ obtained by taking $k$ i.i.d.\ samples from the uniform
  distribution over $S$.  (E.g., $\calI_n$ is the request set's distribution.)
\item Let $\calU_k$ be the distribution over $k$-subsets of $S$ obtained
  by picking a uniformly random $k$-subset from $\binom{S}{k}$.
\end{itemize}
In other words, $\calI_k$ is the distribution obtained by picking $k$
elements from $S$ uniformly \emph{with} replacement, whereas $\calU_k$
is \emph{without} replacement.
  
For a sub-(multi)set $T\subseteq S$ of servers, let $M(T)$ denote the
optimal fractional min-cost $b$-matching in the bipartite graph induced
between $T$ and the set of all locations $S$, with overall unit capacity
on either side. That is, the capacity for each node in $T$ is $1/|T|$
and the capacity for each node in $S$ is $1/n$. So, if we denote by
$d_{i,j}$ the distance between locations $i$ and $j$, we let $M(T)$
correspond to the following linear program.
\begin{align*}
	M(T) := \min & \sum_{i\in T,j\in S}  d_{i, j} \cdot x_{i,j} \tag{$M(\cdot)$}\\
	\text{s.t. }  \sum_{j \in S} x_{i,j} & = \tsty \frac{1}{|T|} \qquad \forall i\in T \\
	\sum_{i \in T} x_{i,j} & = \tsty \frac{1}{n}  \qquad\,\,\,\, \forall j \in S \\
	x &\geq 0
\end{align*}
We emphasize that in the above LP,
several servers in $S$ (and likewise in $T$) may happen to be at the same point in the metric
space, and hence there is a separate constraint for each such point
$j$ (and likewise $i$). Slightly abusing notation, we let $M(T)$ denote both the LP and
its optimal value, when there is no scope for confusion.

\subsection{Algorithm Description}
\label{sec:algorithm}

The algorithm works as follows: at each time $k$, if $S_k\subseteq S$ is
the current set of free servers, we compute the fractional assignment
$M(S_k)$, and assign the next request randomly according to it. As
argued above, since each free server location is
equally likely to receive a request later (and therefore it is worth not
matching it), it seems fair to leave each free server unmatched
with equal probability. Put otherwise, it is only fair to match each of
these servers with equal probability. Of course, matching any arriving
request to a free server chosen uniformly at random can be a terrible
strategy. In particular, it is easily shown to be
$\Omega(\sqrt{n})$-competitive for $n$ servers equally partitioned among
a two-point metric. 
Therefore, to obtain good expected matching cost, we should bias
servers' matching probability according to the arrived request, and in particular we should bias it according to $M(S_k)$.
This intuition guides our algorithm \algoname,
and also inspires its name.
\begin{algorithm}[h] 
	\caption{\algoname}
	\label{alg:main-algo}
	\begin{algorithmic}[1]
	    \State $S_n \gets S.$  \Comment{$S_k$ is the set of free servers, with $|S_k| = k$.}
	    \For{time step $k = n, n - 1, \cdots, 1$}
	    \State compute optimal fractional matching $M(S_k)$, denoted by $x^{S_k}$.
	    \Event{arrival of request $r_k=r$}
	    \State randomly choose server $s$ from $S_k$, where $s_i$ is chosen w/prob.\ $p_i = n \cdot x_{s_{i},r}^{S_{k}}$.
	    \State assign $r$ to $s$.
	    \EndEvent
	    \State $S_{k - 1} \gets S_{k} \setminus \{s\}$.
	    \EndFor
	\end{algorithmic}
\end{algorithm}

A crucial property of our algorithm is that the set $S_k$ of free servers at each time $k$
happens to be a uniformly random $k$-subset of $S$. Recall that
\algoname assigns each arriving request according to the assignment
$M(S_k)$. This means that to analyze the algorithm, it suffices to
relate the optimal assignment cost $\OPT$ to the optimal assignment
costs for uniformly random subsets $S_k$, as follows.

\begin{lem}\label{structure}(Structure Lemma)
	For each time $k$, the set $S_k$ is a uniformly-drawn $k$-subset of $S$; i.e., $S_k\sim \calU_k$. Consequently, the algorithm's cost is $$\E[ALG] = \sum_{k = 1}^{n}\E_{S_k \sim \;\calU_k}[M(S_k)].$$
\end{lem}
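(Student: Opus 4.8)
The two claims — that $S_k \sim \calU_k$ for every $k$, and the consequent expression for $\E[\ALG]$ — are naturally proved together, the first by downward induction on $k$ from $n$ to $1$, and the second by linearity of expectation once the first is in hand. Let me sketch the induction.

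Base case: $S_n = S$ is deterministically the unique $n$-subset of $S$, so trivially $S_n \sim \calU_n$. For the inductive step, suppose that at the start of step $k$ we have $S_k \sim \calU_k$. I want to show that after matching request $r_k$ and removing the chosen server $s$, the resulting set $S_{k-1} = S_k \setminus \{s\}$ is distributed as $\calU_{k-1}$. The key point is to compute, for a fixed target $(k-1)$-subset $A \subseteq S$, the probability that $S_{k-1} = A$. This happens exactly when $S_k = A \cup \{j\}$ for some $j \notin A$, request $r_k$ is whatever it is, and the algorithm chooses to match server $j$. So I would write
\begin{align*}
\Pr[S_{k-1} = A] &= \sum_{j \notin A} \Pr[S_k = A \cup \{j\}] \cdot \Pr[\text{server } j \text{ matched} \mid S_k = A \cup \{j\}].
\end{align*}
The first factor is $1/\binom{n}{k}$ by the inductive hypothesis. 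For the second factor, I need the crucial observation that, conditioned on $S_k = A\cup\{j\}$, the marginal probability that \algoname matches a \emph{particular} free server $s_i$ is exactly $1/k$. Indeed, $r_k$ is uniform over the $n$ locations, and given $r_k = r$ the algorithm matches $s_i$ with probability $p_i = n \cdot x^{S_k}_{s_i, r}$; averaging over $r$ gives $\sum_{r \in S} \frac{1}{n}\cdot n\cdot x^{S_k}_{s_i,r} = \sum_{r\in S} x^{S_k}_{s_i,r} = \frac{1}{|S_k|} = \frac{1}{k}$, using the first (row) constraint of the LP $M(S_k)$. (I should also check that the $p_i$ are genuinely a probability distribution over $S_k$ for each fixed $r$: the column constraint $\sum_{i} x^{S_k}_{s_i,r} = 1/n$ gives $\sum_i p_i = 1$, and $x\ge 0$ gives $p_i \ge 0$.) Plugging in, $\Pr[S_{k-1}=A] = \sum_{j\notin A} \frac{1}{\binom{n}{k}}\cdot\frac1k = \frac{n-(k-1)}{k\binom{n}{k}} = \frac{1}{\binom{n}{k-1}}$, which is exactly the $\calU_{k-1}$ probability, as desired.

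For the cost expression: in step $k$, conditioned on $S_k$ and on the arriving request $r_k = r$, the expected cost paid is $\sum_i p_i \cdot d_{s_i,r} = \sum_i n\, x^{S_k}_{s_i,r}\, d_{s_i,r}$. Averaging over the uniform choice of $r_k$, the expected step-$k$ cost conditioned on $S_k$ is $\sum_{r\in S}\frac1n\sum_i n\, x^{S_k}_{s_i,r} d_{s_i,r} = \sum_{i, r} d_{s_i,r}\, x^{S_k}_{s_i,r} = M(S_k)$, the LP optimum. Taking expectations over $S_k \sim \calU_k$ (which the induction justifies) and summing over $k$ via linearity of expectation yields $\E[\ALG] = \sum_{k=1}^n \E_{S_k\sim\calU_k}[M(S_k)]$.

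I expect the main subtlety to be bookkeeping the conditioning carefully: one must ensure the distribution of $r_k$ is independent of $S_k$ (it is — the $k$-th request is drawn i.i.d.\ and in particular independently of the algorithm's past coin flips and past arrivals, which are what determine $S_k$), so that the averaging over $r_k$ inside the inductive step is legitimate. Once that independence is stated cleanly, both the marginal-$1/k$ computation and the cost computation are short and rely only on the two equality constraints of the LP $M(\cdot)$. There is no hard inequality or approximation here — the lemma is exact — so the whole proof is essentially this one induction plus linearity of expectation.
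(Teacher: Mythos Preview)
Your proof is correct and follows essentially the same approach as the paper: downward induction on $k$ using the LP row constraint to show each free server is matched with marginal probability $1/k$, followed by the observation that the expected step-$k$ cost conditioned on $S_k$ equals $M(S_k)$. Your extra remarks (verifying the $p_i$ form a valid distribution via the column constraint, and noting the independence of $r_k$ from $S_k$) are welcome sanity checks that the paper leaves implicit, but otherwise the arguments coincide.
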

\begin{proof}
	The proof of the first claim is a simple induction from $n$ down to $1$. The base case of
	$S_n$ is trivial. For any $k$-subset
	$T = \{s_1, \cdots, s_k\} \subseteq S$, 
	\begin{align*}
		\Pr \left[S_k = T  \right] 
		&= \sum_{s \in S\setminus T} \Pr\left[S_{k + 1} = T \cup
		\{s\}  \right] \cdot \Pr\left[ r_{k+1} \text{ assigns to } s \mid S_{k + 1} = T \cup \{s\}\right]\\
		&=(n - k)\cdot \frac{1}{\binom{n}{k + 1}}\cdot \frac{1}{k + 1}
		= \frac{1}{\binom{n}{k}},
	\end{align*}
	where the second equality follows from induction and the fact that
	\begin{gather*}
		\Pr \left[ r_{k+1} \text{ assigned to } s \mid S_{k + 1} =
		T \cup \{s\}\right] = \sum_{r \in S} x^{S_{k+1}}_{s,r} = \frac{1}{k + 1}.
	\end{gather*}
	To compute the algorithm's cost, we consider some set $S_k=T$ of $k$ free servers. 
	Since the request $r_k=r$ is chosen with probability $1/n$, following which we 
	match it to some free server $s\in S_k$ with probability $n\cdot x^{S_k}_{s,r}$, 
	we find that the next edge matched by the algorithm has expected cost
	$$\E[d_{s,r_k} \mid S_k = T] = \sum_r \frac{1}{n}\cdot \sum_{s\in T} n\cdot x^T_{s,r} \cdot d_{s,r} = M(T).$$ 
	Therefore, the expected cost of the algorithm is indeed
	\begin{align*}
		\E[ALG] &= \sum_{k = 1}^{n}\E[d_{s, r_k}]
		= \sum_{k = 1}^{n} \sum_{T\in \binom{S}{k}} \Pr_{S_k \sim \calU_k}[S_k = T]\cdot \E[d_{s, r_k} \mid S_k = T] \\
		&= \sum_{k = 1}^{n} \sum_{T\in \binom{S}{k}} \Pr_{S_k \sim \calU_k}[S_k = T]\cdot M(T) 
		= \sum_{k = 1}^{n}\E_{S_k \sim \calU_k}[M(S_k)]. \qedhere
	\end{align*}
\end{proof}

The structure lemma implies that we may assume from now on that the set
of free servers $S_k$ is drawn from $\calU_k$. In what follows, unless
stated otherwise, we have $S_k\sim \calU_k$. More importantly, 
\Cref{structure} implies that to bound our algorithm's competitive ratio by $\alpha$, it
suffices to show that $\sum_k \E[M(S_k)] \leq \alpha\cdot \E[\OPT]$. This is
exactly the approach we use in the following sections.

\section{Bounds for General Metrics}
\label{sec:general}

In \Cref{sec:trees} we will show that algorithm \algoname is $O(1)$-competitive for line metrics (and 
more generally tree metrics), by relying on variance bounds of the number of matches across tree edges in $OPT$ and $M(S_k)$, our algorithm's guiding LP.
For general metrics, if we first embed the metric in a low-stretch tree metric \cite{fakcharoenphol2004tight}
(blowing up the expected cost of $\E[\OPT]$ by $O(\log n)$) and run algorithm
\algoname on the obtained metric, we immediately obtain an $O(\log
n)$-competitive algorithm.
In fact, explicitly embedding
the input metric in a tree metric is not necessary in order to obtain this result using our algorithm. 
By relying on an \emph{implicit} tree embedding, we obtain the following lemma (mirroring the variance-based bound underlying our result for tree metrics). This lemma's proof is deferred to
\Cref{sec:implicit-tree}.

\begin{restatable}{lem}{TreeEmbed}
	\label{tree-embedding-bound}
	$\E_{S_k\sim \calU_k}[M(S_k)] \leq \frac{O(\log n)}{\sqrt{nk}}\cdot \E[\OPT]$.
\end{restatable}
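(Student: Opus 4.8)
The plan is to bound $M(S_k)$ from above and $\E[\OPT]$ from below on a \emph{common} random tree metric dominating $d$, so that the tree's cut structure reduces everything to one‑dimensional concentration/anti‑concentration for the number of points landing on one side of each tree edge. First I would sample a random tree metric $d_T$ (with $S$ as its leaves) via the FRT embedding~\cite{fakcharoenphol2004tight}, so that $d(i,j)\le d_T(i,j)$ always and $\E_T[d_T(i,j)]\le O(\log n)\cdot d(i,j)$. Since $d_T\ge d$ pointwise, the LP $M(\cdot)$ computed with distances $d_T$ has value at least the one computed with $d$, for every server multiset; hence, using that $S_k\sim\calU_k$ is independent of $T$, $\E_{S_k}[M(S_k)]\le \E_T\,\E_{S_k}[M_{d_T}(S_k)]$. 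It therefore suffices to bound $\E_{S_k}[M_{d_T}(S_k)]$ for each fixed tree $T$, and then average over $T$.

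Next I would use the standard fact that a min‑cost transportation problem on a tree decomposes over edges: for a tree with edge weights $w_e$ and two measures of equal total mass on the leaves, the optimal cost is $\sum_e w_e\cdot(\text{imbalance across }e)$, where the imbalance across $e$ is the absolute difference of the masses on either side (every feasible flow must route at least this much across $e$, and on a tree all these lower bounds are achieved simultaneously). Applied to $M_{d_T}(S_k)$ — supply $1/k$ on each node of $S_k$, demand $1/n$ on each node of $S$ — this gives, writing $A_e$ for the leaf‑set on one side of $e$ and $a_e=|A_e|$,
$$M_{d_T}(S_k)=\sum_e w_e\left|\frac{|S_k\cap A_e|}{k}-\frac{a_e}{n}\right|,\qquad \OPT_{d_T}=\sum_e w_e\,\bigl|R_e-a_e\bigr|,$$
where $R_e$ is the number of the $n$ requests landing in $A_e$.

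Now the estimate is purely one‑dimensional. For $S_k\sim\calU_k$, the count $|S_k\cap A_e|$ is hypergeometric with mean $ka_e/n$ and variance at most $\tfrac{k\,a_e(n-a_e)}{n^2}$, so by Cauchy–Schwarz $\E_{S_k}\bigl|\tfrac{|S_k\cap A_e|}{k}-\tfrac{a_e}{n}\bigr|\le \tfrac1{\sqrt{nk}}\sqrt{\mathrm{Var}(R_e)}$, since $R_e\sim\mathrm{Bin}(n,a_e/n)$ has $\mathrm{Var}(R_e)=\tfrac{a_e(n-a_e)}{n}$. Conversely $R_e$ is \emph{anti}‑concentrated: a Hölder/fourth‑moment argument (the fourth central moment of a binomial is $O(\mathrm{Var}+\mathrm{Var}^2)$, and $\mathrm{Var}(R_e)=\tfrac{a_e(n-a_e)}{n}\ge 1-\tfrac1n=\Omega(1)$ for $1\le a_e\le n-1$) gives $\E_R|R_e-a_e|\ge\Omega\bigl(\sqrt{\mathrm{Var}(R_e)}\bigr)$; for $a_e\in\{0,n\}$ both sides vanish. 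Combining yields the per‑edge bound $\E_{S_k}\bigl|\tfrac{|S_k\cap A_e|}{k}-\tfrac{a_e}{n}\bigr|\le \tfrac{O(1)}{\sqrt{nk}}\,\E_R|R_e-a_e|$; summing against $w_e$ gives $\E_{S_k}[M_{d_T}(S_k)]\le \tfrac{O(1)}{\sqrt{nk}}\,\E_R[\OPT_{d_T}]$ for every $T$. Averaging over $T$ and using $\E_{T,R}[\OPT_{d_T}]\le O(\log n)\,\E_R[\OPT]$ — apply the stretch guarantee term‑by‑term to the $d$‑optimal matching of each fixed request realization — gives $\E_{S_k}[M(S_k)]\le \tfrac{O(\log n)}{\sqrt{nk}}\,\E[\OPT]$, as claimed.

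The main obstacle, and the only genuinely probabilistic step, is the per‑edge comparison: on each edge $M(S_k)$ contributes a \emph{standard‑deviation}-type quantity while $\OPT$ contributes a \emph{mean‑absolute‑deviation}-type quantity, so one needs $\E|R_e-a_e|=\Omega(\sqrt{\mathrm{Var}(R_e)})$ with a universal constant uniformly over all edges — including leaf edges with $a_e=1$, where the binomial essentially degenerates to a Poisson. Getting this anti‑concentration bound uniformly (as opposed to the easy reverse concentration bound used for $M(S_k)$) is the crux; the tree‑decomposition of the transportation LP is routine but should be stated with a little care, and the $O(\log n)$ loss enters only through the FRT stretch.
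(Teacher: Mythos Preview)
Your proposal is correct and follows essentially the same architecture as the paper: reduce to trees via FRT, then on each tree compare $\E_{S_k}[M_{d_T}(S_k)]$ to $\E[\OPT_{d_T}]$ edge by edge via concentration for the former and anti-concentration for the latter, and finally pay the $O(\log n)$ stretch only on the $\OPT$ side. The minor technical differences are that you bound the hypergeometric deviation directly (the paper instead passes from $\calU_k$ to $\calI_k$ via its Replacement Lemma and then bounds a binomial), and you use a self-contained fourth-moment/H\"older argument for $\E|R_e-a_e|\ge\Omega(\mathrm{std}(R_e))$ where the paper invokes the Berend--Tassa inequality; both shortcuts are valid and arguably a bit cleaner than the paper's route.
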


Summing over all values of $k \in [n]$, we find that \algoname is
$O(\log n)$-competitive on general metrics. While this bound is no
better than that of Raghvendra's $t$-net algorithm for
random order arrival \cite{raghvendra2016robust} (and therefore for i.i.d arrivals), the result
will prove useful in our overall bound for our
algorithm. 
In Sections \ref{sec:poisson} and \ref{sec:loglog}, we use a
different balls-and-bins argument to decrease our bounds on the
algorithm's competitive ratio considerably, to $O((\log \log n))^2)$, by considering the imbalance between number of requests and servers 
in the top $k$ most requested locations. (The former quantity corresponds to the load of the $k$ most loaded bins in a balls and bins process -- motivating our interest in this process.) Finally, in
\Cref{sec:logloglog}, we combine this improved bound with the one from
Lemma~\ref{tree-embedding-bound}, summing different bounds for different ranges of $k$, to prove our main result: an 
$O((\log \log \log n)^2)$ bound for our algorithm's competitive ratio.

\subsection{Balls and Bins: The Poisson Paradigm}\label{sec:poisson}

For our results, we need some technical facts about the classical
balls-and-bins process. 

The following standard lemma from \cite[Theorem 5.10]{mitzenmacher2005probability} allows us to use the
Poisson distribution to approximate monotone functions on the bins.  For
$i \in [n]$, let $X_i^{m}$ be a random variable denoting the number of
balls that fall into the $i^{th}$ bin, when we throw $m$ balls into $n$
bins. Let $Y_i^{m}$ be independent draws from the Poisson distribution
with mean $m/n$.
  
\begin{lem}\label{lem:poisson_approx}
	Let $f(x_1, \cdots, x_n)$ be a
	non-negative function such that $\E[f(X_1^{m}, \cdots, X_{n}^{m})]$ is
	either monotonically increasing or decreasing with $m$,
	then
	$$\E[f(X_1^{m}, \cdots, X_{n}^{m})] \leq 2\cdot \E[f(Y_1^{m}, \cdots, Y_{n}^{m})].$$
\end{lem}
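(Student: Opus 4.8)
The plan is to prove Lemma~\ref{lem:poisson_approx} by following the classical argument from \cite{mitzenmacher2005probability}: condition the Poisson experiment on the total number of balls and observe that, conditioned on throwing exactly $m$ balls, the Poisson-allocated bin counts $(Y_1^m,\dots,Y_n^m)$ have exactly the same joint distribution as the exact-$m$ allocation $(X_1^m,\dots,X_n^m)$. This is because a vector of independent Poissons, conditioned on its sum, is multinomial with uniform cell probabilities, which is precisely the law of $m$ balls thrown independently into $n$ bins. So if $N = \sum_i Y_i^m \sim \mathrm{Poisson}(m)$, then $\E[f(Y_1^m,\dots,Y_n^m) \mid N = j] = \E[f(X_1^{j},\dots,X_n^{j})]$ for every $j \ge 0$.

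Next I would use the monotonicity hypothesis. Suppose first that $g(j) := \E[f(X_1^{j},\dots,X_n^{j})]$ is monotonically increasing in $j$. Then
\begin{align*}
	\E[f(Y_1^m,\dots,Y_n^m)] &= \sum_{j=0}^{\infty} \Pr[N = j]\cdot g(j) \;\geq\; \sum_{j=m}^{\infty} \Pr[N=j]\cdot g(j) \;\geq\; g(m)\cdot \Pr[N \geq m].
\end{align*}
Since the mean of $N$ is $m$, we have $\Pr[N \ge m] \ge \Pr[N = m] \ge \frac{1}{2}$ — in fact the standard bound is that a Poisson random variable with integer mean $m$ takes a value at least its mean with probability at least $1/2$ (the mode is at $m$, and one can show $\Pr[N\ge m]\ge 1/2$ using, e.g., the fact that the median of a Poisson is within $1$ of the mean, or a direct comparison of the probabilities of $N=m-i$ and $N=m+i$). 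Rearranging gives $g(m) = \E[f(X_1^m,\dots,X_n^m)] \le 2\,\E[f(Y_1^m,\dots,Y_n^m)]$, as desired. The decreasing case is symmetric: one restricts the sum to $j \le m$ instead, uses $g(j) \ge g(m)$ there, and invokes $\Pr[N \le m] \ge 1/2$ (which again holds for a Poisson with integer mean).

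The main obstacle — really the only nontrivial point — is establishing $\Pr[N \ge m] \ge 1/2$ and $\Pr[N \le m] \ge 1/2$ for $N \sim \mathrm{Poisson}(m)$ with $m$ a positive integer; all other steps are either definitional (the Poissonization identity) or one-line manipulations. I would handle this by the elementary pairing argument: writing $p_k = \Pr[N=k] = e^{-m} m^k / k!$, one checks the ratio $p_{m+i}/p_{m-i} = \prod$ of terms that can be shown to be at least $1$ for the relevant range, so that $\Pr[N > m] \ge \Pr[N < m]$ and hence $\Pr[N \ge m] \ge 1/2$; the symmetric inequality follows the same way. Since this is a standard fact about the Poisson distribution, I would likely just cite it (e.g. \cite[Theorem 5.10]{mitzenmacher2005probability} already packages this) rather than reprove it in full. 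Note also the hypothesis should be read as: the function $m \mapsto \E[f(X_1^m,\dots,X_n^m)]$ is monotone, which is exactly what drives the one-sided truncation of the Poisson sum.
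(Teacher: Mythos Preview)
Your proposal is correct and reproduces exactly the standard Poissonization argument from \cite[Theorem~5.10]{mitzenmacher2005probability}; the paper itself does not give a proof of this lemma but simply cites that reference. One minor slip: you wrote $\Pr[N \ge m] \ge \Pr[N = m] \ge \tfrac12$, which is not literally true (the point mass $\Pr[N=m]$ is of course small), but you immediately state the correct fact $\Pr[N \ge m] \ge \tfrac12$ for Poisson with integer mean, so the argument stands.
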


A classic result states that for $m=n$ balls, the maximum bin load is $\Theta(\log n / \log \log n)$~w.h.p. (see e.g., \cite[Lemmas 5.1, 5.12]{mitzenmacher2005probability}). The following lemma is a partial generalization of this result. 
Its proof, which relies on the Poisson approximation of \Cref{lem:poisson_approx}, is deferred to \Cref{sec:app-proofs}.
\begin{restatable}{lem}{topkballsandbins}\label{lem:top-k-balls-and-bins}
	Let $n$ balls be thrown into $n$ bins, each ball thrown independently and
	uniformly at random. Let $L_j$ be the load of the $j^{th}$ heaviest
	bin, and $N_k := \sum_{j \leq k} L_j$ be the number of balls in the $k$
	most loaded bins. There exists a constant $C_0>0$ such that for any
	$k \leq C_0 n$,
	$$\E[N_k] \geq \Omega\left(k\cdot \frac{\log (n/k)}{\log \log (n/k)}\right).$$
\end{restatable}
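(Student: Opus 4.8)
The plan is to reduce the with-replacement balls-and-bins process to the Poisson model via \Cref{lem:poisson_approx}, and then to bound the load of the top $k$ bins from below by a direct, more elementary estimate on independent Poisson random variables. First I would check that the relevant functional is monotone so that \Cref{lem:poisson_approx} applies. The quantity $N_k$ is the sum of the $k$ largest of the $X_i^m$; equivalently $N_k = \max_{|A|=k} \sum_{i \in A} X_i^m$. Adding one more ball can only increase each coordinate $X_i^m$ in a suitable coupling (throw the new ball into some bin), and the $k$-largest-sum functional is coordinatewise monotone, so $\E[N_k]$ is monotonically increasing in $m$. Hence it suffices, losing only a constant factor, to lower-bound $\E[\widehat N_k]$ where $\widehat N_k$ is the sum of the $k$ largest of $n$ independent $\mathrm{Poisson}(1)$ variables $Y_1,\dots,Y_n$ (here $m=n$, so the mean is $m/n=1$).

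The key step is a lower bound on the $k$-th order statistic of $n$ i.i.d.\ $\mathrm{Poisson}(1)$ variables. Set $\ell := c \cdot \frac{\log(n/k)}{\log\log(n/k)}$ for a small constant $c>0$. Using the standard Poisson tail estimate $\Pr[Y_i \geq \ell] \geq e^{-1}/\ell! \geq (\text{something like})\ (c'/\ell)^{\ell}$ and the bound $\ell! \le \ell^\ell$, one computes that for a suitable choice of $c$ we have $\Pr[Y_i \ge \ell] \ge 2k/n$: taking logarithms, $\log \Pr[Y_i \ge \ell] \ge -\ell\log\ell - O(\ell) \ge -\tfrac12\log(n/k)$ once $\ell \le \tfrac{1}{2}\log(n/k)/\log\log(n/k) \cdot (1+o(1))$, i.e.\ for small enough $c$; this gives $\Pr[Y_i \ge \ell] \ge (k/n)^{1/2} \cdot (\text{const}) \ge 2k/n$ when $k \le C_0 n$ for a small constant $C_0$ — actually $(k/n)^{1/2} \geq 2(k/n)$ exactly when $k/n \le 1/4$, so $C_0 = 1/4$ works here, and this is where the hypothesis $k \le C_0 n$ is used. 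Let $Z := \#\{i : Y_i \ge \ell\}$; then $Z$ is binomial with mean $\mu := n\cdot\Pr[Y_i\ge\ell] \ge 2k$, and by a Chernoff bound $\Pr[Z \ge k] \ge \Pr[Z \ge \mu/2] \ge 1 - e^{-\mu/8} \ge 1 - e^{-k/4} \ge \Omega(1)$ (in fact $\ge 1/2$ for $k \ge 2$; the cases $k=0,1$ are trivial). On the event $\{Z \ge k\}$, the $k$ heaviest bins each have load at least $\ell$, so $\widehat N_k \ge k\ell$. Therefore $\E[\widehat N_k] \ge \Pr[Z\ge k]\cdot k\ell \ge \Omega(k\ell) = \Omega\!\big(k \cdot \tfrac{\log(n/k)}{\log\log(n/k)}\big)$, and combining with \Cref{lem:poisson_approx} finishes the proof.

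A couple of small technical points to handle carefully. One should restrict to $k \le n/(\log n)^{10}$ or so in order that $\log(n/k)$ and $\log\log(n/k)$ behave nicely and $\ell \ge 1$; for $k$ in the remaining range $n/\mathrm{polylog}(n) \le k \le C_0 n$ the claimed bound $\E[N_k] = \Omega(k \log(n/k)/\log\log(n/k))$ is $\Omega(k \cdot \mathrm{poly}(\log\log n))$, which follows trivially from $\E[N_k] \ge \E[N_{n/(\log n)^{10}}] = \Omega(n \log\log n/\log\log\log n)$ by monotonicity of $\E[N_k]$ in $k$ — so a clean way to organize the argument is: prove the bound for $k$ up to $n/(\log n)^{10}$ via the Poisson argument above, then extend to all $k \le C_0 n$ by monotonicity, checking that the target function $k \mapsto k\log(n/k)/\log\log(n/k)$ only grows by a constant factor over that range. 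I expect the main obstacle to be the bookkeeping in the Poisson-tail computation — pinning down the constant $c$ in $\ell$ so that $\Pr[Y_i \ge \ell] \ge 2k/n$ holds uniformly over the relevant range of $k$ — rather than any conceptual difficulty; everything else is a routine Chernoff bound plus the black-box Poisson approximation of \Cref{lem:poisson_approx}.
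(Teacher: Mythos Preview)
Your overall strategy---Poisson approximation, then a tail bound on $\Pr[Y_i\ge\ell]$, then a Chernoff bound on the number of high bins---is exactly the paper's approach, but you apply \Cref{lem:poisson_approx} in the wrong direction. That lemma only says $\E[f(X)]\le 2\,\E[f(Y)]$; with $f=N_k$ this gives an \emph{upper} bound on the balls-and-bins quantity in terms of the Poisson one, whereas you need a \emph{lower} bound. Lower-bounding $\E[\widehat N_k]$ therefore tells you nothing about $\E[N_k]$. The paper gets around this by applying \Cref{lem:poisson_approx} not to $N_k$ but to the indicator of the bad event $\{L_k<t/2\}$ (with $t=\log(n/k)/\log\log(n/k)$): this indicator is monotone decreasing in $m$, so the lemma gives $\Pr[L_k<t/2]\le 2\Pr[M_k<t/2]$, and then your Chernoff argument on $Z=\#\{i:Y_i\ge t/2\}$ upper-bounds the right-hand side by $2e^{-\Omega(k)}$. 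This yields $\Pr[L_k\ge t/2]\ge 1-O(e^{-k})$ and hence $\E[N_k]\ge k\cdot(t/2)\cdot(1-O(e^{-k}))$, which is the desired bound. Your Poisson-tail computation (showing $\E[\sum_i Z_i]\ge\Omega(k)$) is essentially correct and matches the paper's inequality chain.

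A smaller issue: your extension to the range $n/(\log n)^{10}\le k\le C_0 n$ via monotonicity of $\E[N_k]$ does not work as stated. Over that range the target $k\log(n/k)/\log\log(n/k)$ grows by a factor of roughly $(\log n)^{10}$, not a constant, so the value $\E[N_{n/(\log n)^{10}}]$ is far too small to cover $k\approx C_0 n$. There is no need for a separate regime: once you use the indicator trick above, the argument runs uniformly for all $k$ with $n/k$ larger than an absolute constant, and that constant is what determines $C_0$.
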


In the next lemma, whose proof is likewise deferred to \Cref{sec:app-proofs}, we rely on a simple Chernoff bound to give a weaker lower bound for $\E[N_k]$ that holds for
all $k \leq n/2$. 
\begin{restatable}{lem}{topkchernoff} \label{lem:top-k-bins-and-balls1}
  For sufficiently large $n$ and any $k \leq n/2$, we have $\E[N_k] \geq 1.5k$.
\end{restatable}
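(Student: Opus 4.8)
The plan is to lower-bound $\E[N_k]$ by relating it to two elementary statistics of the configuration obtained by throwing $n$ balls into $n$ bins: the number $Z$ of bins receiving at least two balls, and the number $P$ of non-empty bins. Writing the sorted loads as $L_1 \ge L_2 \ge \cdots \ge L_n$, we have $L_i \ge 2$ exactly for $i \le Z$ and $L_i \ge 1$ exactly for $i \le P$. The key is the deterministic inequality
\[ N_k \;=\; \sum_{i=1}^k L_i \;\ge\; \min\{\,2k,\; k+Z,\; n\,\}, \qquad \text{valid for all } k \le n. \]
I would prove this by a short case split: if $Z \ge k$ then each of the top $k$ bins has load $\ge 2$, so $N_k \ge 2k$; if $Z < k$ but $P \ge k$, then the top $Z$ bins contribute $\ge 2$ balls each and the next $k-Z$ bins are non-empty and contribute $\ge 1$ each, so $N_k \ge k+Z$; and if $P < k$, the top $k$ bins already contain all $n$ balls, so $N_k = n$.

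Specializing to $k \le n/2$ gives $2k \le n$ and $k + \min(k,Z) \le 2k \le n$, so the inequality collapses to $N_k \ge k + \min(k,Z)$ and hence $\E[N_k] \ge k + \E[\min(k,Z)]$. It then remains to show $\E[\min(k,Z)] \ge k/2$ for every $k \le n/2$ once $n$ is large. First, $Z \le n/2$ deterministically, since $Z$ bins with $\ge 2$ balls already account for $\ge 2Z$ of the $n$ balls; write $m := \lfloor n/2\rfloor$, so $Z \le m$ always. Next, using $\E[\min(k,Z)] = \sum_{j=1}^{k}\Pr[Z\ge j]$ and the fact that $j \mapsto \Pr[Z\ge j]$ is non-increasing, the partial averages $\tfrac1k\sum_{j=1}^k \Pr[Z\ge j]$ are non-increasing in $k$, so for $k \le m$,
\[ \frac{\E[\min(k,Z)]}{k} \;\ge\; \frac{\E[\min(m,Z)]}{m} \;=\; \frac{\E[Z]}{m}, \]
the last equality using $Z \le m$. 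Finally $\E[Z] = n\cdot\Pr[\mathrm{Bin}(n,\tfrac1n)\ge 2] = n\big(1-(1-\tfrac1n)^n-(1-\tfrac1n)^{n-1}\big) \to n\,(1-2/e)$, and since $1-2/e > \tfrac14$ we get $\E[Z]/m \ge \E[Z]/(n/2) = 2\Pr[\mathrm{Bin}(n,\tfrac1n)\ge 2] \ge \tfrac12$ for all sufficiently large $n$. Hence $\E[\min(k,Z)] \ge k/2$ and $\E[N_k] \ge \tfrac32 k$, as claimed.

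The only mildly delicate point I anticipate is the truncation: one cannot simply combine $\E[N_k] \ge k + \E[Z]$ with $\E[Z] \approx 0.264\,n \ge 0.52\,k$, because the deterministic bound caps at $k$ the number of "extra" balls attributable to the top $k$ bins when $Z > k$; the monotone-averaging step is exactly what converts the uniform-in-$k$ requirement into the single worst case $k = m$. An alternative route, closer to the Chernoff phrasing, is to note that $P \ge k$ and $Z \ge \min\{k,(1-o(1))\E[Z]\}$ hold with probability $1 - e^{-\Omega(n)}$ by standard concentration, so that in expectation $N_k \ge k + \min(k,Z) \ge (1-o(1))\big(k + \min\{k, 0.26\,n\}\big) > 1.5k$ for every $k \le n/2$; the purely deterministic argument above avoids even this, so no obstacle beyond routine bookkeeping is expected.
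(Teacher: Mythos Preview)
Your proof is correct, and in fact somewhat cleaner than the paper's. The deterministic inequality $N_k \ge k + \min(k,Z)$ for $k \le n/2$ is established by your three-case split, and the monotone-averaging step correctly reduces $\E[\min(k,Z)]/k$ to its value at $k = m = \lfloor n/2\rfloor$, where $Z \le m$ deterministically turns the minimum into $Z$ itself. The final numerical check $2(1-2/e) > 1/2$ is routine.

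The paper's route is different: it works with the number $P$ of \emph{non-empty} bins (rather than your $Z$, the number of bins with load $\ge 2$), uses the Poisson approximation of \Cref{lem:poisson_approx} together with a Chernoff bound to show $\Pr[P > (1-1/e+0.01)n] < 0.01$, and then deduces $\E[N_t] \ge 0.99n$ for $t = (1-1/e+0.01)n$; the same Ces\`aro-type monotonicity ($\E[N_k]/k$ non-increasing) then gives $\E[N_k]/k \ge 0.99/(1-1/e+0.01) > 1.5$ for all $k \le n/2 \le t$. So both arguments share the ``reduce to the largest $k$ via averaging'' idea, but your version avoids invoking either the Poisson coupling or any tail bound, relying instead on the exact expectation $\E[Z] = n\Pr[\mathrm{Bin}(n,1/n)\ge 2]$. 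This makes your argument entirely self-contained and arguably more transparent; the paper's version has the minor advantage of reusing machinery (\Cref{lem:poisson_approx}) already introduced for the companion \Cref{lem:top-k-balls-and-bins}.
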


\subsection{Relating Balls and Bins to Stochastic Metric Matching}
\label{sec:loglog}

We now bound the expected cost incurred by \algoname at time $k$ by
appealing to the above balls-and-bins argument; this will give us our stronger bound of $O((\log \log n)^2)$. Specifically, we will derive
another lower bound for $\E[\OPT]$ in terms of
$\E_{S_k\sim \calU_k}[M(S_k)]$. In our bounds we will partition the
probability space $\calI_{n}$ (corresponding to $n$ i.i.d.~requests)
into disjoint parts, based on $\Tk$, the top $k$ most frequently
requested locations (with ties broken uniformly at random). By
symmetry,
$\Pr[\Tk = T] = 1/ \binom{n}{k}$ for all $T \in \binom{S}{k}$. By coupling $\Tk$ with $\calU_k$,
we will lower-bound $\E[OPT]$ by $\E_{S_k\sim \calU_k}[M(S_k)]$ times $\E[N_k]-k$, the expected imbalance between number of requests and servers in $\Tk$. Here $\E[N_k]$ is the expected occupancy of the $k$ most
loaded bins in the balls and bins process discussed in
\Cref{sec:poisson}.

To relate $\E[OPT \mid \Tk = S_k]$ to $M(S_k)$, 
we will bound both these quantities by the 
cost of a min-cost perfect $b$-matching between $S_k$ and $S\setminus S_k$; i.e., each vertex $v$ has some (possibly fractional) demand $b_v$ which is the extent to which it must be matched. 
To this end, we need the following simple lemma, which asserts that for any min-cost metric $b$-matching instance, there exists an optimal solution which matches co-located servers and requests maximally.
We defer the lemma's proof, which follows from a local change argument and triangle inequality, to \Cref{sec:app-proofs}. 
\begin{restatable}{lem}{MatchToSelf}
	\label{lem:match-to-self}
	Let $\calI$ be a fractional min-cost bipartite metric $b$-matching instance, 
	with demand $\ell_i$ and $r_i$ for the servers and requests at location $i$.
	Then, there exists an optimal solution $x$ for $\calI$ with $x_{ii} = \min\{\ell_i,r_i\}$ for every point $i$ in the metric.
\end{restatable}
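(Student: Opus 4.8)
The plan is to pass to an \emph{extremal} optimal solution. Write the $b$-matching with equality demand constraints $\sum_{j} x_{i,j} = \ell_i$ for every server location $i$ and $\sum_{i} x_{i,j} = r_j$ for every request location $j$ (this is the relevant case; $\le$-demands are handled identically). The feasible region is a bounded polytope, so the set of optimal solutions is a nonempty compact face, and the linear functional $\sum_i x_{i,i}$ attains its maximum on it; let $x$ be an optimal solution maximizing this ``diagonal mass'' $\sum_i x_{i,i}$. I claim $x_{i,i} = \min\{\ell_i, r_i\}$ for every point $i$, which is exactly the statement of the lemma.

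Suppose instead that $x_{i,i} < \min\{\ell_i, r_i\}$ for some $i$. Since $\sum_{j} x_{i,j} = \ell_i > x_{i,i}$, we have $\sum_{j \ne i} x_{i,j} > 0$, so $x_{i,j} > 0$ for some request location $j \ne i$; symmetrically, $\sum_{i'} x_{i',i} = r_i > x_{i,i}$ gives $x_{i',i} > 0$ for some server location $i' \ne i$. Now apply a local rerouting: let $\delta := \min\{x_{i,j},\, x_{i',i}\} > 0$, decrease $x_{i,j}$ and $x_{i',i}$ each by $\delta$, and increase $x_{i,i}$ and $x_{i',j}$ each by $\delta$. Every server-side and request-side constraint is preserved, since each affected row sum and column sum is unchanged, and nonnegativity is maintained by the choice of $\delta$, so the resulting $x'$ is feasible.

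The cost changes by $\delta\,(d_{i,i} + d_{i',j} - d_{i,j} - d_{i',i}) = \delta\,(d_{i',j} - d_{i,j} - d_{i',i}) \le 0$, using $d_{i,i}=0$ and the triangle inequality. As $x$ is optimal this change must in fact be exactly $0$, so $x'$ is also optimal. But the diagonal mass strictly increased: $x_{i,i}$ went up by $\delta > 0$, the only other diagonal entry that could be touched ($x_{i',j}$ in the corner case $i' = j$) also only increased, and the two decreased entries $x_{i,j}$ and $x_{i',i}$ are off-diagonal because $j \ne i$ and $i' \ne i$. This contradicts the maximality of $\sum_i x_{i,i}$ among optimal solutions, proving the claim.

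The only place that needs care is the bookkeeping of which of the four modified entries lie on the diagonal --- in particular the corner case $i' = j$, where the rerouted mass returns to a diagonal entry and one must verify this does not hurt the potential (in fact it helps) --- together with the routine check that the local swap leaves every $b$-matching constraint intact. Everything else is the one-line triangle-inequality estimate above.
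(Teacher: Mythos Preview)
Your argument is correct and essentially identical to the paper's own proof: both pick an optimal solution maximizing $\sum_i x_{ii}$, find off-diagonal entries $x_{i,j}>0$ and $x_{i',i}>0$ when $x_{i,i}<\min\{\ell_i,r_i\}$, and apply the same four-entry swap, using the triangle inequality to show the cost does not increase. Your write-up is in fact slightly more careful than the paper's (you justify existence of the extremal optimum via compactness and explicitly check the corner case $i'=j$), but the idea is the same.
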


We are now ready to prove our main technical lemma, lower-bounding $\E[OPT \mid \Tk = S_k]$ in terms of $M(S_k)$ and the imbalance between number of requests of the $k$ most requested locations, $N_k$, and the number of servers in those locations.
\begin{lem}\label{balls-and-bins-bound-coupling}
	For all $k < n$ and $S_k\in \binom{S}{k}$, we have $\E[OPT \mid \Tk = S_k] \geq (\E[N_k] - k)\cdot M(S_k).$
\end{lem}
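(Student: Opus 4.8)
The plan is to lower-bound $\E[\OPT\mid\Tk=S_k]$ by LP duality, taking as dual potentials the very potentials that certify the value of $M(S_k)$. Note first that $M(\cdot)$ is an optimal-transport LP: $M(S_k)=W_1(\mu_{S_k},\mu_S)$, where $\mu_{S_k}$ places mass $1/k$ on each element of $S_k$ and $\mu_S$ places mass $1/n$ on each element of $S$ (several elements may coincide as metric points, which is immaterial since distances depend only on metric location). By Kantorovich--Rubinstein duality there is a $1$-Lipschitz $f\colon S\to\mathbb{R}$ with $M(S_k)=\frac{1}{k}\sum_{i\in S_k}f(i)-\frac{1}{n}\sum_{i\in S}f(i)$; shifting $f$ by a constant I may assume $\sum_{i\in S}f(i)=0$, so that $\sum_{i\in S_k}f(i)=k\,M(S_k)$ and $\sum_{i\notin S_k}f(i)=-k\,M(S_k)$.

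The second ingredient is the trivial ``packing'' lower bound on $\OPT$, applied pointwise. Fix any realization of the $n$ requests and let $c_i$ denote how many of them land at location $i$. Since $\OPT$ matches the $n$ requests to the $n$ servers, summing the Lipschitz inequality $d_{a,b}\ge f(a)-f(b)$ over the matched (request-location $a$, server-location $b$) pairs yields
\[
\OPT\ \ge\ \sum_{\text{matched pairs}}\big(f(a)-f(b)\big)\ =\ \sum_{i\in S}c_i f(i)-\sum_{i\in S}f(i)\ =\ \sum_{i\in S}c_i f(i),
\]
using $\sum_{i\in S}f(i)=0$. This holds for every realization, with the fixed $f$ above (which depends only on $S_k$).

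Now condition on $\Tk=S_k$ and take expectations: $\E[\OPT\mid\Tk=S_k]\ge\sum_{i\in S}f(i)\,\E[c_i\mid\Tk=S_k]$. It remains to evaluate these conditional expectations by symmetry. Since $(c_i)_{i\in S}$ is (unconditionally) exchangeable and ties defining $\Tk$ are broken uniformly at random, $\Tk$ is independent of the multiset of counts, so $\E[N_k\mid\Tk=S_k]=\E[N_k]$; moreover, conditionally on $\Tk=S_k$ the counts are exchangeable within $S_k$ and (separately) within $S\setminus S_k$. As $\sum_{i\in S_k}c_i=N_k$ and $\sum_{i\notin S_k}c_i=n-N_k$, this gives $\E[c_i\mid\Tk=S_k]=\E[N_k]/k$ for $i\in S_k$ and $\E[c_i\mid\Tk=S_k]=(n-\E[N_k])/(n-k)$ for $i\notin S_k$. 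Plugging these in, together with $\sum_{i\in S_k}f(i)=kM(S_k)$ and $\sum_{i\notin S_k}f(i)=-kM(S_k)$,
\[
\E[\OPT\mid\Tk=S_k]\ \ge\ \frac{\E[N_k]}{k}\cdot kM(S_k)-\frac{n-\E[N_k]}{n-k}\cdot kM(S_k)\ =\ \frac{n(\E[N_k]-k)}{n-k}\,M(S_k)\ \ge\ (\E[N_k]-k)\,M(S_k),
\]
proving the lemma (with a spare factor of $n/(n-k)$).

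The parts requiring the most care are (i) identifying $M(S_k)$ with a transportation cost and normalizing its dual potential $f$ — here, if one wishes to avoid duality, the structural Lemma~\ref{lem:match-to-self} together with a uniform $b$-matching between $S_k$ and $S\setminus S_k$ of total mass $\E[N_k]-k$ would take over, bounding $\OPT$ from below and $M(S_k)$ from above by that common object — and (ii) the exchangeability computation of $\E[c_i\mid\Tk=S_k]$, especially the fact that conditioning on which set realizes the top $k$ leaves the distribution of $N_k$ unchanged. Neither step is deep, but both must be stated precisely; the one real idea is that one should not analyze $\OPT$'s (metric-dependent) matching directly, but instead insert the dual certificate of $M(S_k)$ into the trivial lower bound for $\OPT$ and then symmetrize.
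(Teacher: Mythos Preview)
Your proof is correct and takes a genuinely different route from the paper's.

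The paper argues on the \emph{primal} side. It first uses Lemma~\ref{lem:match-to-self} to rewrite $M(S_k)$ as a biregular $b$-matching between $S_k$ (demand $\tfrac{1}{k}-\tfrac{1}{n}$) and $S\setminus S_k$ (demand $\tfrac{1}{n}$). It then lower-bounds $\E[\OPT\mid\Tk=S_k]$ by starting from the probabilities $p_{ij}=\Pr[(i,j)\in\OPT]$ and performing an explicit flow-rerouting (shortcutting paths and cycles among vertices of $S\setminus S_k$) to massage this into another biregular $b$-matching between $S_k$ and $S\setminus S_k$, now with demands $(\E[N_k]-k)/k$ and $(\E[N_k]-k)/(n-k)$. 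Comparing the two scalings gives the same $\tfrac{n}{n-k}(\E[N_k]-k)$ factor you obtain.

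Your argument replaces all of this by a single application of Kantorovich--Rubinstein duality: the $1$-Lipschitz potential $f$ certifying $M(S_k)$ is plugged into the pointwise bound $\OPT\ge\sum_i c_i f(i)$, and the exchangeability of the counts within $S_k$ and within $S\setminus S_k$ (together with the independence of $N_k$ from $\Tk$ under uniform tie-breaking) does the rest. This is substantially shorter and more transparent; in particular it completely bypasses the flow-rerouting paragraph, which is the most delicate part of the paper's proof. The paper's primal route has the mild advantage of staying self-contained (no appeal to transportation duality) and of making the ``scaled $b$-matching'' picture explicit, which may aid intuition, but your dual argument is the cleaner proof of the lemma as stated. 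Your parenthetical remark that Lemma~\ref{lem:match-to-self} plus a uniform $b$-matching of mass $\E[N_k]-k$ would serve as a duality-free alternative is exactly the paper's approach.
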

\begin{proof}
Applying \Cref{lem:match-to-self} to $M(S_k)$, we find that the optimal value of $M(S_k)$ is equal to that of a min-cost bipartite perfect $b$-matching instance with left vertices associated with $S_k$, each with demand $\frac{1}{k}-\frac{1}{n}$, and right vertices associated with $S\setminus S_k$, each with demand $\frac{1}{n}$.

We now turn to the meat of the proof -- lower bounding $\E[OPT \mid \Tk = S_k]$. In particular, we will lower bound $\E[OPT \mid \Tk = S_k]$ by a 
min-cost bipartite perfect $b$-matching instance with left and right vertices as 
above (i.e., $S_k$ and $S\setminus S_k$, respectively), but with uniform demands on both sides of at least $(\E[N_k]-k)/k$ and $(\E[N_k]-k)/(n-k)$, respectively. That is, the 
biregular min-cost bipartite $b$-matching whose cost $C$ we showed lower bounds $M(S_k)$, but 
scaled by an $f\geq \frac{(\E[N_k]-k)}{k\cdot (1/k-1/n)}$ factor.	Before proving this lower bound on $\E[OPT \mid \Tk = S_k]$, we note that it implies our desired bound, as
\begin{equation*}
	\E[\OPT \mid \Tk = S_k] \geq \frac{(\E[N_k] - k)}{k\cdot (1/k - 1/n)}\cdot C > (\E[N_k] - k) \cdot C = (\E[N_k] - k) \cdot M(S_k).
\end{equation*}
It remains to lower bound $\E[OPT \mid \Tk = S_k]$ in terms of such a biregular $b$-matching instance.

For the remainder of this proof, for notational simplicity we denote by  $\Omega$ the probability space induced by conditioning on the event $\Tk = S_k$. 
To lower bound $\E_\Omega[OPT]$, we will provide a fractional perfect matching $\vec{x}$ of the 
expected instance (in $\Omega$), and show that 
$\E_\Omega[OPT] \geq \sum_{ij} d_{ij}\cdot x_{ij}$, while $\sum_{j\in S\setminus S_k} x_{ij} \geq (\E[N_k]-k)/k$ 
for all $i\in S_k$ and $\sum_{i\in S} x_{ij} \geq (\E[N_k]-k)/(n-k)$ for all $j\in S\setminus S_k$. 
Consequently, focusing on edges $(i,j)\in S_k\times (S\setminus S_k)$, we find that the min-cost biregular bipartite perfect $b$-matching above lower bounds 
$\sum_{i\in S_k,j\in S\setminus S_k} d_{ij}\cdot x_{ij}\leq \sum_{ij} d_{ij}\cdot x_{ij} \leq \E_\Omega[OPT]$. 
We now turn to producing an $\vec{x}$ satisfying our desired properties. 

For any two locations $i,j\in S$, we let $(i,j)\in OPT$ indicate that a request in location $i$ 
is served by the server in location $j$.
Let $p_{ij} := \Pr_{\Omega}[(i,j)\in OPT]$. We will show how small modifications to $\vec{p}$ will yield a fractional perfect matching $\vec{x}$ as discussed in the previous paragraph. Let $Y_i$ be the number of requests at server $i$.
By \Cref{lem:match-to-self}, we know that $(i,i)\in OPT \iff Y_i\geq 1$. So, $p_{ii} = \Pr_{\Omega}[Y_i\geq 1]$. 
Consequently, if we let $\Delta_{in}(j) := \sum_{j'\in S\setminus\{j\}} p_{j'j}$ and $\Delta_{out}(j) := \sum_{j'\in S\setminus\{j\}} p_{jj'}$, 
we have by \Cref{lem:match-to-self} that $\Delta_{in}(j) = \Pr[Y_i \geq 1]$ and 
$\Delta_{out}(i) = \E[(Y_i - 1)^+]$ for all $i\in S$. (As usual, $x^+=\max\{x,0\}$.) 
Consequently, $\Delta_{in}(j) = \Delta_{in}(j')$  and $\Delta_{out}(j) = \Delta_{out}(j')$ for 
all $j,j'\in S\setminus S_k$, as $[Y_j \mid \Omega]$ and $[Y_j' \mid \Omega]$ are 
identically distributed. Moreover, as 
$\sum_{j\in S\setminus S_k} \left(\Delta_{in}(j) - \Delta_{out}(j)\right) = N_k - k \geq 0$, we find that $\Delta_{in}(j) - \Delta_{out}(j)\geq 0$ for all $j\in S\setminus S_k$.
Now, suppose $Y_i\geq 1$ for all $i\in S_k$ (conditioning on the complementary event is similar), we have by \Cref{lem:match-to-self} that $p_{ji}=0$ for all $i\in S_k$ and $j\in S\setminus\{i\}$. Moreover, by symmetry we have $\Delta_{out}(i) = (\E[N_k] - k)/k$ for all $k$ locations $i\in S_k$. We now show how to obtain from $\vec{p}$ a fractional matching $\vec{x}$ between $S_k$ and $S\setminus S_k$ of no greater cost than $\vec{p}$, such that $p_{jj'}=0$ for all $j\neq j'\in S\setminus S_k$ and such that the values $\Delta_{in}(j) - \Delta_{out}(j)$ are unchanged for all $j\in S$. Consequently, all (simple) edges in the support of $\vec{x}$ go between $S_k$ and $S\setminus S_k$, and $\Delta_{out}(i)= (\E[N_k] - k)/k$ for all $i\in S_k$ and $\Delta_{in}(j) = (\E[N_k] - k)/(n-k)$ for all $j\in S\setminus S_k$, yielding our desired lower bound on $\E_\Omega[OPT]$ in terms of a biregular bipartite $b$-matching instance.

We start by setting $\vec{x}\leftarrow \vec{p}$. 
While there exists a pair $j\neq j'\in S\setminus S_k$ with $x_{j'j}>0$, we pick such a pair. As $\Delta_{in}(j) - \Delta_{out}(j) \geq 0$, there must also be some flow coming into $j$. We follow a sequence of edges $j_1 \leftarrow j_2 \leftarrow j_3 \leftarrow \dots$ with each $j_r\in S\setminus S_k$ and with $x_{j_rj_{r-1}} > 0$ until we either repeat some $j_r \in S\setminus$ or reach some $j_r$ with $x_{ij_r} 0$ for some $i\in S$. (Note that one such case must happen, as $\Delta_{in}(j) - \Delta_{out}(j) \geq 0$ for all $j\in S\setminus S_k$.) If we repeat a vertex, $j_r$, we only consider the sequence of nodes given by the obtained cycle, $j_1 \leftarrow j_2 \leftarrow j_3 \dots \leftarrow j_r = j_1$.
Let $\epsilon = \min_{r} x_{j_rj_{r-1}}$ be the smallest $x_{jj'}$ in our trail. If we repeated a vertex, we found a cycle, and we decrease $x_{jj'}$ by $\epsilon$ for all consecutive $j,j'$ in the cycle. If we found some $i\in S$ and $x_{ij_r} > 0$, we decrease all $x_{jj'}$ values along the path (including $x_{ij_r}$) by $\epsilon$ and increase $x_{ij_1}$ by $\epsilon$. In both cases, we only decrease the cost of $\vec{x}$ (either trivially, or by triangle inequality) and we do not change $\Delta_{in}(j) - \Delta_{out}(j)$ for any $j\in S$, while decreasing $\sum_{j\neq j'\in S\setminus S_k} x_{jj'}$. As the initial $x$-values are all rational, repeating the above terminates, with the above sum equal to zero, which implies a biregular fractional solution $\vec{x}$ as required. The lemma follows.
\end{proof}
 	
Coupling the distribution of $\Tk$ and the set of $k$ free servers, we obtain the following.
\begin{lem}\label{balls-and-bins-bound-raw}
   	$\E_{S_k\sim \calU_k}[M(S_k)] \leq \E[\OPT] / (\E[N_k] - k)$. 
\end{lem}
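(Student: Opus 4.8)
The plan is to combine \Cref{balls-and-bins-bound-coupling} with the fact, noted just before that lemma, that the top-$k$ set $\Tk$ (with ties broken uniformly) is distributed uniformly over $\binom{S}{k}$, i.e.\ $\Tk\sim\calU_k$. Averaging the conditional bound of \Cref{balls-and-bins-bound-coupling} over the realization of $\Tk$ then reconstructs $\E_{S_k\sim\calU_k}[M(S_k)]$ on the right-hand side, and a single division finishes the proof.

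Concretely, first I would write, by the law of total expectation over the realization of $\Tk$,
$$\E[\OPT] \;=\; \sum_{T\in\binom{S}{k}} \Pr[\Tk = T]\cdot \E[\OPT\mid \Tk = T] \;\geq\; \sum_{T\in\binom{S}{k}} \Pr[\Tk = T]\cdot (\E[N_k]-k)\cdot M(T),$$
where the inequality applies \Cref{balls-and-bias-placeholder} — more precisely \Cref{balls-and-bins-bound-coupling} — to each $T$. Since $\Pr[\Tk = T] = 1/\binom{n}{k}$ for every $T\in\binom{S}{k}$, the last sum equals $(\E[N_k]-k)\cdot\E_{S_k\sim\calU_k}[M(S_k)]$, so $\E[\OPT] \geq (\E[N_k]-k)\cdot\E_{S_k\sim\calU_k}[M(S_k)]$. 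It then remains only to divide by $\E[N_k]-k$: note $N_k\ge k$ deterministically, since the $k$ most loaded bins have average load at least the global average load of $1$, so $\E[N_k]-k\ge 0$; when this quantity is strictly positive we divide to obtain the claim, and when it is zero the right-hand side is $+\infty$ and there is nothing to prove. (In the regime $k\le n/2$ used later, \Cref{lem:top-k-bins-and-balls1} gives $\E[N_k]\ge 1.5k$, so $\E[N_k]-k\ge k/2>0$.)

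The only point requiring (mild) justification, beyond invoking \Cref{balls-and-bins-bound-coupling}, is the identity $\Pr[\Tk = T] = 1/\binom{n}{k}$, and this is pure symmetry: relabeling the $n$ server locations permutes which $k$-subset is the top-$k$ set, while the i.i.d.\ uniform request process together with the uniform tie-breaking rule is invariant under such relabelings, so every $k$-subset is equally likely to be $\Tk$. All of the genuine work has already been done in \Cref{balls-and-bins-bound-coupling}, so I do not anticipate any real obstacle here — the lemma is essentially a reweighting of that coupling bound.
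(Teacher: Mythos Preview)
Your proposal is correct and takes essentially the same approach as the paper: both arguments combine \Cref{balls-and-bins-bound-coupling} with the symmetry fact $\Pr[\Tk=T]=1/\binom{n}{k}$ and sum over $T\in\binom{S}{k}$; the only difference is cosmetic order (you start from $\E[\OPT]$ and bound downward, the paper starts from $\E_{S_k\sim\calU_k}[M(S_k)]$ and bounds upward). Your added remark that $N_k\ge k$ deterministically, so the division is legitimate (or the bound vacuous), is a nice detail the paper leaves implicit.
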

\begin{proof}
 	Taking expectations over $S_k\sim \calU_k$, we obtain our claimed bound.
	\begin{align*}
	\E_{S_k\sim \calU_k} [M(S_k)] &= \sum_{S_k\in \binom{S}{k}}\frac{1}{\binom{n}{k}}\cdot M(S_k) & \textrm{defn.\ of } \calU_k \\
	&\leq \sum_{S_k\in \binom{S}{k}} \frac{1}{\binom{n}{k}}
    \frac{1}{(\E[N_k] - k)}\cdot \E[\OPT \mid \Tk = S_k] & \textrm{\Cref{balls-and-bins-bound-coupling}} \\
	&= \frac{1}{(\E[N_k] - k)}  \cdot \E[\OPT]. &
	\Pr[\Tk = S_k] = \frac{1}{\binom{n}{k}}.& \qedhere
	\end{align*}
\end{proof}

Plugging in the lower bounds of Lemmas \ref{lem:top-k-balls-and-bins} and \ref{lem:top-k-bins-and-balls1} for the top $k$ most loaded bins' loads, $\E[N_k]$, 
we obtain the following bounds on \algoname's per-step cost in terms of $\E[OPT]$.
\begin{lem}\label{balls-and-bins-bound}
   	For $C_0$ a constant as in \Cref{lem:top-k-balls-and-bins},  there exists a constant $C$ such that
   	\begin{equation*}
   	\E_{S_k\sim \calU_k}[M(S_k)] \leq \begin{cases}
   	C\cdot \frac{\log\log(n/k)}{k\log(n/k)}\cdot \E[\OPT] & \text{ if } k < C_0 n \\
   	\frac{2}{k}\cdot\E[\OPT] & \text{ if } C_0 n \leq k \leq n/2.
   	\end{cases} 
   	\end{equation*}
\end{lem}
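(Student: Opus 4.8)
The plan is to simply combine \Cref{balls-and-bins-bound-raw}, which gives $\E_{S_k\sim\calU_k}[M(S_k)]\le \E[\OPT]/(\E[N_k]-k)$, with the two lower bounds on $\E[N_k]$ supplied by \Cref{lem:top-k-balls-and-bins} and \Cref{lem:top-k-bins-and-balls1}, treating the two ranges of $k$ separately. Throughout I use the (costless) observation that $C_0$ may be taken as small as convenient: shrinking the threshold in \Cref{lem:top-k-balls-and-bins} only restricts the set of $k$ to which it applies, so that lemma continues to hold, and \Cref{lem:top-k-bins-and-balls1} is unaffected. In particular I take $C_0\le \tfrac1{16}$, so that $k<C_0 n$ forces $n/k>e^e$; then $\log(n/k)>1$, $\log\log(n/k)>0$, and the map $x\mapsto \log x/\log\log x$ is increasing for $x>e^e$ (its derivative in $u=\log x$ is $(\log u-1)/(\log u)^2>0$). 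This ensures the right-hand side of the claimed bound in the first case is a genuine positive quantity.

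For the range $C_0 n\le k\le n/2$, \Cref{lem:top-k-bins-and-balls1} gives (for $n$ sufficiently large) $\E[N_k]\ge 1.5k$, hence $\E[N_k]-k\ge k/2$, and \Cref{balls-and-bins-bound-raw} immediately yields $\E_{S_k\sim\calU_k}[M(S_k)]\le \tfrac{2}{k}\cdot\E[\OPT]$, which is exactly the claimed second case.

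For the range $k<C_0 n$, let $c_1>0$ be the constant hidden inside the $\Omega(\cdot)$ of \Cref{lem:top-k-balls-and-bins}, so $\E[N_k]\ge c_1\,k\,\frac{\log(n/k)}{\log\log(n/k)}$. The one technical point — and the only real obstacle — is that the denominator in \Cref{balls-and-bins-bound-raw} is $\E[N_k]-k$ rather than $\E[N_k]$, so I need $\E[N_k]$ to exceed $k$ by a constant factor; this is precisely where the freedom to shrink $C_0$ is used. Since $x\mapsto \log x/\log\log x$ is increasing and unbounded for $x>e^e$, I further shrink $C_0$ (still $\le\tfrac1{16}$) so that $n/k>1/C_0$ forces $\frac{\log(n/k)}{\log\log(n/k)}\ge 2/c_1$, giving $\E[N_k]\ge 2k$ and hence $\E[N_k]-k\ge \tfrac12\E[N_k]\ge \tfrac{c_1}{2}\,k\,\frac{\log(n/k)}{\log\log(n/k)}$. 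Plugging into \Cref{balls-and-bins-bound-raw} gives $\E_{S_k\sim\calU_k}[M(S_k)]\le \tfrac{2}{c_1}\cdot\frac{\log\log(n/k)}{k\log(n/k)}\cdot\E[\OPT]$, so taking $C:=2/c_1$ settles the first case and completes the proof. (As a sanity check, at the boundary $k=C_0 n$ the first bound reads $\Theta(1/k)\cdot\E[\OPT]$, since $\log(1/C_0)$ and $\log\log(1/C_0)$ are constants, consistent with the second bound.)
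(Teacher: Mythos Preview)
Your proof is correct and follows exactly the approach the paper intends: plug the lower bounds on $\E[N_k]$ from \Cref{lem:top-k-balls-and-bins} and \Cref{lem:top-k-bins-and-balls1} into \Cref{balls-and-bins-bound-raw}. The paper's ``proof'' is the single sentence preceding the lemma statement; you have simply filled in the details it omits, including the (legitimate) maneuver of shrinking $C_0$ to ensure $\E[N_k]-k$ remains a constant fraction of $\E[N_k]$ and that $\log\log(n/k)>0$.
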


The following lemma allows us to leverage \Cref{balls-and-bins-bound}, as it allows us to focus on $\E_{S_k\sim \calU_k}[M(S_k)]$ for $k\leq n/2$. 
Its proof relies on our characterization of $M(S_k)$ in terms of a balanced $b$-matching instance between $S_k$ and $S\setminus S_k$ as in 
the proof of \Cref{balls-and-bins-bound-coupling}, which implies that $M(S_k) \leq M(S_{n-k})$ for all $k\leq n/2$. Its proof is deferred to \Cref{sec:app-proofs}.
\begin{restatable}{lem}{sumhalf}\label{lem:sum-only-half}
	$\sum_{k=1}^n \E_{S_k\sim \calU_k}[M(S_k)] \leq 2\cdot \sum_{k=1}^{n/2} \E_{S_k\sim \calU_k}[M(S_k)]$.
\end{restatable}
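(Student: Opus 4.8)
The plan is to pair the term for index $k$ in the sum with the term for index $n-k$, and to show that for $k\le n/2$ the latter is dominated by the former; the full sum is then at most twice its lower half. The one nontrivial ingredient is a scaling identity between $M(T)$ and $M(S\setminus T)$, which falls out of the $b$-matching characterization of $M(\cdot)$ already used in the proof of \Cref{balls-and-bins-bound-coupling}.

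Concretely, I would first recall that characterization: by \Cref{lem:match-to-self} (exactly as in the proof of \Cref{balls-and-bins-bound-coupling}), for any $T\subseteq S$ with $|T|=j$ the value $M(T)$ equals the cost of a min-cost bipartite $b$-matching between $T$ with each vertex demanding $\tfrac1j-\tfrac1n$ and $S\setminus T$ with each vertex demanding $\tfrac1n$. Both sides carry total demand $\tfrac{n-j}{n}$, so scaling every demand down by this common total makes it exactly the optimal transportation cost $W(T,S\setminus T)$ between the uniform distribution on $T$ and the uniform distribution on $S\setminus T$ (w.r.t.\ $d$); that is, $M(T)=\tfrac{n-j}{n}\cdot W(T,S\setminus T)$. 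Applying the same reduction to the complement set $S\setminus T$, which has size $n-j$, gives $M(S\setminus T)=\tfrac{j}{n}\cdot W(S\setminus T,T)$. Since transportation cost is symmetric in its two marginals, $W(T,S\setminus T)=W(S\setminus T,T)$, whence
\[
M(S\setminus T)=\frac{j}{\,n-j\,}\,M(T)\;\le\;M(T)\qquad\text{whenever }j=|T|\le n/2 .
\]
(Only the inequality is needed below.)

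I would then transfer this to expectations via the complementation bijection between $\calU_k$ and $\calU_{n-k}$: if $S_k\sim\calU_k$ then $S\setminus S_k\sim\calU_{n-k}$, so for every $k\le n/2$,
\[
\E_{S_{n-k}\sim\calU_{n-k}}[M(S_{n-k})]=\E_{S_k\sim\calU_k}\big[M(S\setminus S_k)\big]\le\E_{S_k\sim\calU_k}[M(S_k)].
\]
Finally, splitting the sum at $\lfloor n/2\rfloor$, reindexing the upper half by $k\mapsto n-k$ (the $k=n$ term being $M(S)=0$), and applying the last display term by term,
\begin{align*}
\sum_{k=1}^{n}\E_{S_k\sim\calU_k}[M(S_k)]
&=\sum_{k=1}^{\lfloor n/2\rfloor}\E_{S_k\sim\calU_k}[M(S_k)]+\sum_{j=0}^{\lceil n/2\rceil-1}\E_{S_{n-j}\sim\calU_{n-j}}[M(S_{n-j})]\\
&\le 2\sum_{k=1}^{\lfloor n/2\rfloor}\E_{S_k\sim\calU_k}[M(S_k)] ,
\end{align*}
which is the claim.

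The only real obstacle is the scaling identity in the second paragraph — specifically, verifying that after fixing the self-matches the residual instances on $T$ and on $S\setminus T$ are both balanced and have uniform demands on each of their two sides, so that normalizing by the common total mass genuinely produces the same transportation problem. This is essentially already carried out inside the proof of \Cref{balls-and-bins-bound-coupling}, so it should reduce to citing that computation. Everything after it — the $\calU_k\leftrightarrow\calU_{n-k}$ bijection, the reindexing, dropping the zero term $M(S_n)$, and the floor/ceiling bookkeeping for odd $n$ — is routine.
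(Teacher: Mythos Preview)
Your proposal is correct and follows essentially the same approach as the paper: both arguments use the $b$-matching characterization from \Cref{lem:match-to-self} to establish the scaling identity $M(S\setminus T)=\tfrac{|T|}{\,n-|T|\,}M(T)$ and then apply the complementation bijection $\calU_k\leftrightarrow\calU_{n-k}$. Your detour through the symmetric transportation cost $W(T,S\setminus T)$ is a clean way to see why the two residual instances are rescalings of one another, but it is the same computation the paper carries out directly by comparing the demand vectors.
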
 

Using our upper bound on $\E_{S_k\sim \calU_k}[M(S_k)]$ of \Cref{balls-and-bins-bound} and summing the two ranges of $k\leq n/2$ in \Cref{lem:sum-only-half} we find that \algoname is $O((\log \log n)^2)$ competitive. We do not elaborate on this here, as we obtain an even better bound in the following section.

\subsection{Our Main Result}
\label{sec:logloglog}

We are now ready to prove our main result, by combining our per-step cost bounds given by our balls and bins argument (\Cref{balls-and-bins-bound}) and  our implicit tree embedding argument (\Cref{tree-embedding-bound}).
\begin{thm}\label{iid-general-metrics}
  Algorithm \algoname is $O((\log \log \log n)^2)$-competitive for the
  online bipartite metric matching problem under \IID arrivals on
  general metrics.
\end{thm}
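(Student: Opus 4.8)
The plan is to bound $\E[\ALG] = \sum_{k=1}^n \E_{S_k\sim\calU_k}[M(S_k)]$ (\Cref{structure}) by combining the two per-step bounds at our disposal, using whichever is stronger in each regime of $k$. By \Cref{lem:sum-only-half} it suffices to bound $\sum_{k=1}^{n/2}\E_{S_k\sim\calU_k}[M(S_k)]$, and I will split this sum at the threshold $k^\star := \lceil n/\log^2 n\rceil$. For $k\le k^\star$ I will use the implicit tree-embedding bound $\E_{S_k\sim\calU_k}[M(S_k)]\le \frac{O(\log n)}{\sqrt{nk}}\cdot\E[\OPT]$ of \Cref{tree-embedding-bound}; for $k^\star < k\le n/2$ I will use the balls-and-bins bound of \Cref{balls-and-bins-bound}. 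The intuition is that the tree-embedding bound is cheap exactly when $k$ is small, since its sum is dominated by a convergent $\sum 1/\sqrt k$, whereas the only expensive terms of the balls-and-bins bound come from tiny $k$, so that bound becomes cheap once $k$ exceeds $n/\mathrm{polylog}(n)$.

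For the small-$k$ range, summing \Cref{tree-embedding-bound} gives $\sum_{k=1}^{k^\star}\frac{O(\log n)}{\sqrt{nk}} = \frac{O(\log n)}{\sqrt n}\sum_{k=1}^{k^\star}\frac{1}{\sqrt k} = \frac{O(\log n)}{\sqrt n}\cdot O(\sqrt{k^\star}) = O(\log n\cdot\sqrt{k^\star/n}) = O(1)$, since $k^\star = \Theta(n/\log^2 n)$; so this range contributes $O(1)\cdot\E[\OPT]$. In the large-$k$ range, the portion $C_0 n\le k\le n/2$ contributes $\sum_{k\ge C_0 n}\frac{2}{k}\cdot\E[\OPT] = O(1)\cdot\E[\OPT]$ (here $C_0\le\frac12$ is the constant of \Cref{lem:top-k-balls-and-bins}), so the crux is to show that the portion $k^\star < k < C_0 n$ contributes only $O((\log\log\log n)^2)\cdot\E[\OPT]$, i.e.\ that $\sum_{k^\star < k < C_0 n}\frac{\log\log(n/k)}{k\log(n/k)} = O((\log\log\log n)^2)$.

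To estimate this last sum I will decompose the range of $k$ dyadically into bands $B_i := \{k : 2^i\le n/k < 2^{i+1}\}$. On $B_i$ we have $\log(n/k)\in[i,i+1)$ and hence $\log\log(n/k) = O(\log i)$, while $\sum_{k\in B_i}1/k = O(1)$ because $B_i$ spans only a constant multiplicative range of $k$-values; thus $B_i$ contributes $O(\log i/i)$. The constraint $k > k^\star = \Theta(n/\log^2 n)$ forces $n/k = O(\log^2 n)$, hence $i = O(\log\log n)$, and the constraint $k < C_0 n$ forces $i\ge 1$ (so $\log\log(n/k)$ is genuinely nonnegative on every band that occurs). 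Summing over bands, $\sum_{i=1}^{O(\log\log n)}\frac{\log i}{i} = O((\log(\log\log n))^2)$, since the partial sums of $\log i/i$ grow like $\frac12(\log i)^2$. Putting the three pieces together and paying the factor $2$ from \Cref{lem:sum-only-half} yields $\E[\ALG] \le O((\log\log\log n)^2)\cdot\E[\OPT]$, which is \Cref{iid-general-metrics}.

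The main obstacle is choosing the split point $k^\star$ correctly and carrying out the band sum: one must verify that making the tree-embedding contribution $O(1)$ (which pins $k^\star$ near $n/\log^2 n$) simultaneously truncates the balls-and-bins band sum at $i = O(\log\log n)$ instead of $i = O(\log n)$ — a naive split at $k^\star = O(1)$ would leave all bands up to $i = O(\log n)$ and recover only the $O((\log\log n)^2)$ bound of the previous subsection. The one point requiring care is that $\log\log(n/k)$ is a valid nonnegative lower-bound factor only when $n/k$ is bounded away from $1$; this is precisely why \Cref{balls-and-bins-bound} drops to the weaker $2/k$ bound on $C_0 n\le k\le n/2$, and why that sub-range must be handled separately.
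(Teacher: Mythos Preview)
Your proposal is correct and follows essentially the same three-range decomposition as the paper, with the same threshold $k^\star = n/\log^2 n$ and the same use of \Cref{tree-embedding-bound}, \Cref{balls-and-bins-bound}, and \Cref{lem:sum-only-half}. The only cosmetic difference is how you estimate the middle sum $\sum_{k^\star<k<C_0 n}\frac{\log\log(n/k)}{k\log(n/k)}$: the paper observes that the integrand has antiderivative $-\tfrac12(\log\log(n/k))^2$ and evaluates directly, whereas you perform an equivalent dyadic decomposition into bands $B_i$ and sum $\sum_{i\le O(\log\log n)}\log i/i$; both yield $O((\log\log\log n)^2)$.
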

\begin{proof}
	By the structure lemma (\Cref{structure}) and \Cref{lem:sum-only-half}, we have that
\begin{equation}\label{half-range}
	\E[ALG] = \sum_{k=1}^n \E_{S_k\sim \calU_k}[M(S_k)] \leq 2\cdot \sum_{k=1}^{n/2} \E_{S_k\sim \calU_k}[M(S_k)].
\end{equation}
We use the three bounds from \Cref{tree-embedding-bound} and
\Cref{balls-and-bins-bound} for different ranges of $k$ to
bound the above sum. Specifically, by relying on \Cref{tree-embedding-bound} for $k\leq n/\log^2n$, we have that
\begin{align*}
	\sum_{k=1}^{n/\log^2n} \E_{S_k\sim \calU_k}[M(S_k)] & \leq \sum_{k=1}^{n/\log^2n} \frac{O(\log n)}{\sqrt{nk}}\cdot \E[OPT] \\
	& \leq O\left(\sqrt{\frac{n}{\log^2n}} \cdot \frac{\log n \cdot \E[OPT]}{\sqrt n}\right) = O(1)\cdot \E[OPT].
\end{align*}
Next, by the first bound of \Cref{balls-and-bins-bound} applied to $k\in [n/\log^2n, C_0n]$, we have that
\begin{align*}
	\sum_{k=n/\log^2n}^{C_0 n} \E_{S_k\sim \calU_k}[M(S_k)] & \leq \sum_{k=n/\log^2n}^{C_0 n} \frac{O(\log \log (n/k))}{k\cdot \log (n/k)}\cdot \E[OPT] \\
	& \leq O\left(-(\log \log (n/k))^2 \Big|_{n/\log^2n}^{C_0 n}\right)\cdot \E[OPT]\\
	& = O((\log \log \log n)^2)\cdot \E[OPT].
\end{align*}
Finally, by the second bound of \Cref{balls-and-bins-bound} applied to $k\geq C_0n$, we have that
\begin{align*}
	\sum_{k=C_0 n}^{n/2} \E_{S_k\sim \calU_k}[M(S_k)] & \leq \sum_{C_0 n}^{n/2} \frac{2}{k}\cdot \E[OPT] \leq O\left(\log\left(\frac{n/2}{C_0 n}\right)\right)\cdot \E[OPT] = O(1)\cdot \E[OPT].
\end{align*}	 

Combining all three bounds with \Cref{half-range}, the theorem follows.
\end{proof}
\section{A Simple $O(1)$ Bound for Tree Metrics}
\label{sec:trees}

In this section we show the power of the structure lemma, by 
analyzing \algoname on tree metrics.
Recall that a \emph{tree metric} is defined by shortest-path distances
in a tree $T = (V,E)$, with edge lengths $d_e$.  By adding zero-length
edges, we may assume that the tree has $n$ leaves, and that servers are
on the leaves of the tree. For any edge $e$ in the tree, deleting this
edge creates two components $T_1(e)$ and $T_2(e)$; denote by $T_1(e)$ the component with fewer servers/leaves. Let $n_e$
denote the number of leaves on this smaller side, 
$T_1(e)$. Hence $n_e \leq n/2$ for all edges $e$.

We now lower bound $\E[\OPT]$, by considering the mean average deviation
of the number of requests which arrive in $T_1(e)$ for each edge $e$.
\begin{lem}\label{lem:variance-lower-bound-tree}
	The expected optimal matching cost in a tree metric on $n\geq 2$ vertices is at least
	$\E[\OPT] \geq \frac{1}{2} \cdot \sum_{e \in T}d_e \cdot \sqrt{n_e}$.
\end{lem}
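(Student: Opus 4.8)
The plan is to bound $\OPT$ \emph{pointwise}, for every realization of the $n$ i.i.d.\ requests, by the cost of optimally routing the request multiset to the servers along the tree, and then take expectations. Fix a tree edge $e$; deleting it splits the servers into the $n_e$ servers of $T_1(e)$ and the $n-n_e$ servers of $T_2(e)$, and let $R_e$ be the number of requests that land in $T_1(e)$. In any perfect matching $\sigma$, the number of matched pairs whose connecting path passes through $e$ is at least $|R_e - n_e|$: if $R_e > n_e$, at least $R_e - n_e$ requests inside $T_1(e)$ must be served by servers outside it, and symmetrically if $R_e < n_e$. Since in a tree metric $d_{u,v} = \sum_{f\in P(u,v)} d_f$ along the unique $u$--$v$ path $P(u,v)$, expanding the cost of $\sigma$ and regrouping by tree edge gives $\mathrm{cost}(\sigma) = \sum_e d_e\cdot\#\{\text{pairs of }\sigma\text{ crossing }e\} \geq \sum_e d_e\,|R_e-n_e|$. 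Hence $\OPT \geq \sum_e d_e\,|R_e - n_e|$ for every realization, and taking expectations, $\E[\OPT] \geq \sum_e d_e\cdot\E\,|R_e - n_e|$.

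It remains to show $\E\,|R_e - n_e| \geq \tfrac12\sqrt{n_e}$ for each edge. Since each request independently lands in $T_1(e)$ with probability $n_e/n$, we have $R_e \sim \mathrm{Bin}(n, n_e/n)$; note $\E[R_e] = n_e$ is an integer and $q := 1 - n_e/n \geq \tfrac12$ because $n_e \leq n/2$. The key tool is De~Moivre's exact formula for the mean absolute deviation of a binomial which, specialized to an integer mean, reads $\E\,|R_e - n_e| = 2\,n_e\,q\cdot\Pr[\mathrm{Bin}(n-1, n_e/n) = n_e]$; this can be re-derived in a few lines from $\E\,|R_e - n_e| = 2\,\E[(R_e-n_e)^+]$ together with the identity $j\binom{n}{j} = n\binom{n-1}{j-1}$. (One may also note $\Pr[\mathrm{Bin}(n-1, n_e/n) = n_e] = \Pr[\mathrm{Bin}(n, n_e/n) = n_e]$, the value of the binomial pmf at its ``flat'' mode.) So it suffices to lower bound this point probability.

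For $n_e \geq 2$ I would apply the two-sided Stirling bounds $\sqrt{2\pi m}\,(m/e)^m \leq m! \leq \sqrt{2\pi m}\,(m/e)^m e^{1/(12m)}$ to the three factorials in $\binom{n}{n_e}$; the powers of $e$ cancel exactly, leaving $\Pr[\mathrm{Bin}(n, n_e/n) = n_e] \geq \tfrac{1}{\sqrt{2\pi}}\sqrt{\tfrac{n}{n_e(n-n_e)}}\cdot e^{-1/(12 n_e) - 1/(12(n-n_e))}$. Plugging this into De~Moivre's formula, the prefactors combine to $\E\,|R_e - n_e| \geq \sqrt{\tfrac{2 n_e q}{\pi}}\cdot e^{-1/(12 n_e) - 1/(12(n-n_e))} \geq \tfrac{1}{\sqrt{\pi}}\sqrt{n_e}\cdot e^{-1/12}$, where the last step uses $q \geq \tfrac12$ and $n_e \geq 2$, $n - n_e \geq n_e \geq 2$ (so $\tfrac1{n_e} + \tfrac1{n-n_e} \leq 1$); since $\tfrac{1}{\sqrt{\pi}}e^{-1/12} > \tfrac12$, this case is done. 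For $n_e = 1$ the point probability equals $(1-1/n)^{n-1}$, so $\E\,|R_e - 1| = 2(1-1/n)^n$, and since $(1-1/n)^n$ is increasing in $n$ with value $\tfrac14$ at $n=2$, this is $\geq \tfrac12 = \tfrac12\sqrt{1}$. Combining the two cases gives $\E\,|R_e - n_e| \geq \tfrac12\sqrt{n_e}$ for every edge, and the lemma follows.

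The main obstacle is precisely this last estimate. The inequality $\E\,|R_e - n_e| \geq \tfrac12\sqrt{n_e}$ is \emph{tight} — equality holds for $n=2$, $n_e=1$, i.e.\ two servers and two uniform requests, where $\E[\OPT] = \tfrac12 d_e$ on the nose — so it cannot survive any lossy comparison, e.g.\ bounding the mean absolute deviation below by a constant times the standard deviation via a symmetrization that loses a factor of $2$, or using a one-sided Stirling estimate with a blanket $e^{1/12}$ slack. The care needed is to keep the Stirling error as $e^{1/(12 n_e)}$, which is genuinely small exactly when $n_e$ is large (the regime where $q$ is close to $\tfrac12$ and the slack in the bound is smallest), and to dispatch the small-$n_e$ corner cases directly. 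The cut decomposition of $\OPT$ and the passage to expectations are routine.
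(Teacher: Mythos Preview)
Your proof is correct. The cut decomposition of $\OPT$ and the passage to $\E[\OPT] \geq \sum_e d_e\,\E|R_e - n_e|$ are identical to the paper's argument. The difference lies in how the mean-absolute-deviation lower bound $\E|R_e - n_e| \geq \tfrac12\sqrt{n_e}$ is established.

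The paper simply quotes a result of Berend and Tassa (their Theorem~1): for $Y \sim \mathrm{Bin}(n,p)$ with $p \in [1/n, 1-1/n]$ one has $\E|Y - \E Y| \geq \STD(Y)/\sqrt{2}$, and then uses $\STD(R_e) = \sqrt{n_e(1-n_e/n)} \geq \sqrt{n_e/2}$. You instead derive the bound from first principles, combining De~Moivre's exact formula $\E|R_e - n_e| = 2n_e q\,\Pr[\mathrm{Bin}(n,n_e/n)=n_e]$ with two-sided Stirling estimates on the central binomial point mass, and handle the boundary case $n_e=1$ by direct computation. Your route is more self-contained (no external citation needed) and makes transparent why the constant $\tfrac12$ is exactly tight at $n=2$, $n_e=1$; the paper's route is shorter and immediately gives the asymptotic refinement $\E|Y-\E Y| \geq \STD(Y)\cdot 2/(e+o(1))$ that they later use to sharpen the competitive ratio. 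Both arrive at the same inequality with the same constant.
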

\begin{proof}
Let $X_e$ denote the number of requests that arrive in
the component with fewer leaves, $T_1(e)$. Every matching will 
match at least $|X_e - n_e| = |X_e - \E[X_e]|$ requests across the edge $e$ (with the equality due to the uniform IID arrivals). Summing
over all edges and taking expectations, we find that
\begin{equation}\label{mad-bound}
\E[\OPT] \geq \sum_{e}d_e \cdot \E\big[|X_e - n_e|\big] =
\sum_{e}d_e\cdot \E\big[|X_e - \E
[X_e]|\big].
\end{equation}
It remains to lower bound $\E[|X_e-\E[X_e]|]$, the mean average deviation of $X_e$.
Observe that $X_e \sim \text{Bin}(n, n_e/n)$, with $n_e\in [1,n-1]$. The
following probabilistic bound appears in 
\cite[Theorem~1]{berend2013sharp}:
\begin{cla}
\label{clm:mad}
Let $Y \sim \text{Bin}(n,p)$, with $n\geq 2$ and $p\in [1/n,1- 1/n]$. Then, we have both
\begin{align*}
\E|Y - \E Y| & \geq \STD(Y)/\sqrt{2},\label{mad-small}
\end{align*}
\end{cla}

(Note that convexity implies that $\E|Y-\E Y| \leq \STD(Y)$ holds for all
distributions, so this is a partial converse.)
Applying \Cref{clm:mad}
to our case, where $p=n_e/n\in [1/n,1- 1/n]$, 
\begin{align*}
	\E[|X_e - \E X_e|] & \geq \STD(X_e)/\sqrt{2} = \sqrt{n_e(1-n_e/n)/2} \geq \sqrt{n_e/4}, 
\end{align*}
where the second inequality 
follows from $n_e\leq n/2$. 
Combined with \eqref{mad-bound}, the lemma follows.
\end{proof}

To upper bound $\E[M(S_k)]$, we again consider the mean average
deviation of the number of requests in $T_1(e)$, but this time when drawing
$k$ \emph{i.i.d.}\ samples. First, we need to bound the cost of $M(S_k)$ for
a set $S_k$ resulting from $k$ draws  \emph{without replacement}
by the cost for a multiset obtained by taking $k$
i.i.d.\ draws \emph{with replacement}.
\begin{lem}(Replacement Lemma)\label{repetition-MSk}
	For all $S$ and $k\in [|S|]$, we have $$\E_{S_k \sim \calU_k}[M(S_k)] \leq \E_{S_k \sim \calI_k}[M(S_k)].$$
\end{lem}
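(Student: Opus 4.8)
The plan is to exploit the fact that $M(T)$ depends on the multiset $T$ only through its profile of multiplicities, and that it is a \emph{convex} function of that profile. First I would observe that merging the flow variables of co-located copies turns the LP $M(T)$ into a transportation problem: writing $m_v$ for the number of copies of location $v$ in $T$ (so $\sum_v m_v = k$), one gets $M(T)=\min\{\sum_{u,v} d_{uv}y_{uv}: y\ge 0,\ \sum_v y_{uv}=m_u/k\ \forall u,\ \sum_u y_{uv}=1/n\ \forall v\}$, i.e.\ the min-cost transportation distance from the empirical measure $\mu_T:=(m_1,\dots,m_n)/k$ to the uniform measure $\nu$. As the optimal value of an LP whose right-hand side is affine in $\mu_T$ (equivalently, by Kantorovich duality, a supremum of affine functions of $\mu_T$), this is a convex function $\Phi(\mu_T)$ of $\mu_T$ over the whole probability simplex. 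Thus it suffices to compare $\E_{S_k\sim\calU_k}[\Phi(\mu_{S_k})]$ with $\E_{S_k\sim\calI_k}[\Phi(\mu_{S_k})]$, and I will do this via two martingale couplings plus conditional Jensen.

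The second step is a monotonicity statement: $j\mapsto \E_{S_j\sim\calU_j}[M(S_j)]$ is non-increasing in $j$. For $j'\le j$, couple $S_{j'}\subseteq S_j$ by taking $S_j\sim\calU_j$ and letting $S_{j'}$ be a uniformly random $j'$-subset of $S_j$; then $S_{j'}\sim\calU_{j'}$, and by symmetry $\E[\mu_{S_{j'}}\mid S_j]=\tfrac1j\sum_{v\in S_j} e_v=\mu_{S_j}$. Conditional Jensen for the convex $\Phi$ gives $\E_{\calU_{j'}}[M]=\E[\Phi(\mu_{S_{j'}})]\ge \E[\Phi(\E[\mu_{S_{j'}}\mid S_j])]=\E[\Phi(\mu_{S_j})]=\E_{\calU_j}[M]$.

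The third step proves the lemma. Generate a sample from $\calI_k$ in two stages: let $J\le k$ be its number of distinct elements and $S_J$ its (random) support. By symmetry of i.i.d.\ sampling, conditioned on $\{J=j\}$ the support is uniform over $\binom{S}{j}$, i.e.\ $S_J\sim\calU_J$; and conditioned further on $S_J$, the multiplicity counts on $S_J$ are exchangeable, so $\E[\mu_{S_k}^{\calI_k}\mid S_J]=\tfrac1J\sum_{v\in S_J}e_v=\mu_{S_J}$. Conditional Jensen then yields $\E_{\calI_k}[M]=\E[\Phi(\mu_{S_k}^{\calI_k})]\ge \E[\Phi(\mu_{S_J})]=\E_J\big[\E_{S_J\sim\calU_J}[M(S_J)]\big]$, and since $J\le k$ surely, the monotonicity step bounds each term $\E_{\calU_J}[M]$ from below by $\E_{\calU_k}[M]$; taking $\E_J$ completes the proof.

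The main obstacle, I expect, is spotting the right decomposition: that one should condition the i.i.d.\ sample on its support to expose a martingale coupling with a \emph{uniform} subset, that this coupling delivers a uniform subset of a \emph{random} size $J\le k$ rather than of size exactly $k$, and hence that an auxiliary monotonicity lemma for $\E_{\calU_j}[M]$ (itself proved by an analogous martingale coupling) is exactly what bridges the gap. Once the convexity of $M$ in the multiplicity profile and the two martingale couplings are in place, the remainder is routine: the only computations are the two conditional-mean identities, both immediate from symmetry.
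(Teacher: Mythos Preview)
Your proof is correct, and it takes a genuinely different route from the paper's. Both arguments rest on the same underlying observation---that $M(T)$ is a convex function of the multiplicity vector $\chi_T$ (equivalently, of the empirical measure $\mu_T$)---and both finish with a conditional Jensen step. The difference is in the coupling. You proceed in two stages: first condition the i.i.d.\ sample on its support $S_J$ (a uniform $J$-subset, with $J$ random and $J\le k$), apply Jensen there to land at $\E_{\calU_J}[M]$, and then invoke an auxiliary monotonicity lemma $\E_{\calU_{j'}}[M]\ge\E_{\calU_j}[M]$ for $j'\le j$ (itself proved by a second martingale coupling and Jensen) to close the gap from $J$ to $k$. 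The paper instead builds a \emph{single} coupling directly between $\calI_k$ and $\calU_k$: given an ordered multiset $M$ of size $k$, it rolls a uniform die over all $k$-sets $T\supseteq\mathrm{support}(M)$, checks that $\Pr[A_T]=1/\binom{n}{k}$ and that $\E[\chi_M\mid A_T]=\chi_T$, and then applies Jensen once. The paper's coupling is slicker---one Jensen, no auxiliary lemma---and is phrased to apply to any convex program whose constraints and objective are convex in $\chi_T$, not just to $M(\cdot)$. Your decomposition, on the other hand, is perhaps easier to discover from first principles, and the intermediate monotonicity statement $j\mapsto\E_{\calU_j}[M(S_j)]$ nonincreasing is a clean fact of independent interest.
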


We defer the proof of this lemma to \Cref{sec:replacement}, where we prove a
more general statement regarding stochastic convex optimization with constraints and coefficients determined
by elements of a set chosen uniformly with and without
replacement.
 Armed with this lemma, it suffices to bound
$\E_{S_k\sim \calI_k}[M(S_k)]$ from above, which we do in the following.

\begin{lem}\label{lem:upper-bound-tree}
    $\E_{S_k \sim \calI_k} [M(S_k)] \leq  \sum_{e \in T} d_e \cdot \sqrt{n_e/(k n)}.$
\end{lem}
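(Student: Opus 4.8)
The plan is to use the closed form for optimal transportation on a tree. Regard $M(S_k)$ as the cost of moving the distribution $\mu$ that puts mass $1/k$ on each of the $k$ sampled copies (equivalently, mass $(\text{mult of }i)/k$ on location $i$) to the uniform distribution $\nu$ on $S$. For an edge $e$, let $X_e$ be the number of the $k$ i.i.d.\ samples landing on the smaller side $T_1(e)$, so $\mu(T_1(e)) = X_e/k$ and $\nu(T_1(e)) = n_e/n$. The first step is to establish
\[ M(S_k) \;\le\; \sum_{e\in T} d_e\cdot\Big|\tfrac{X_e}{k}-\tfrac{n_e}{n}\Big|. \]
Indeed, in any feasible matching the amount matched between locations $i$ and $j$ traverses every edge on the unique tree path between $i$ and $j$, so the total cost equals $\sum_e d_e\cdot(\text{flow across }e)$; a feasible matching attaining flow exactly $|\mu(T_1(e))-\nu(T_1(e))|$ on each edge is obtained by peeling leaves---at a leaf $\ell$ we match the co-located supply and demand maximally (at zero cost, which is legitimate since the right-hand side of the LP $M(\cdot)$ is all of $S$), push the residual imbalance across the unique edge at $\ell$, contract, and recurse. (This is in fact an equality, the standard tree earthmover identity; only $\le$ is needed here.)

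Now take expectations over $S_k\sim\calI_k$. Since the $k$ samples are i.i.d.\ uniform over the $n$ leaves and $T_1(e)$ contains $n_e$ of them, $X_e\sim\mathrm{Bin}(k,n_e/n)$ and $\E[X_e]=kn_e/n$, so
\[ \E\Big[\,\Big|\tfrac{X_e}{k}-\tfrac{n_e}{n}\Big|\,\Big] \;=\; \tfrac1k\,\E\big[\,|X_e-\E X_e|\,\big] \;\le\; \tfrac1k\,\STD(X_e) \;=\; \tfrac1k\sqrt{k\cdot\tfrac{n_e}{n}\big(1-\tfrac{n_e}{n}\big)} \;\le\; \sqrt{\tfrac{n_e}{kn}}, \]
where the first inequality is Jensen's inequality ($\E|Z|\le\sqrt{\E Z^2}$, i.e.\ the bound $\E|Y-\E Y|\le\STD(Y)$ already used in the proof of \Cref{lem:variance-lower-bound-tree}) and the last uses $1-n_e/n\le 1$. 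Summing over all edges and using linearity of expectation gives $\E_{S_k\sim\calI_k}[M(S_k)]\le\sum_{e\in T}d_e\sqrt{n_e/(kn)}$, which is the claim.

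I expect the only delicate point to be the first displayed inequality---identifying $M(S_k)$ with (an upper bound by) the edge-separator sum. This is routine for transportation problems on trees, but one should check that repeated locations in the multiset $S_k$ merely add their demands at the corresponding leaf, and that co-located supply and demand may be matched at no cost because the $M(\cdot)$ LP matches $S_k$ against all of $S$ (not against $S\setminus S_k$); these two facts are exactly what makes the net flow forced across $e$ equal $|X_e/k-n_e/n|$ rather than something larger. Everything downstream is the binomial mean-absolute-deviation estimate, which mirrors the computation in \Cref{lem:variance-lower-bound-tree}.
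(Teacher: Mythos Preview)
Your proof is correct and follows essentially the same approach as the paper: both identify $M(S_k)$ with the tree earthmover cost $\sum_e d_e\,|X_e/k - n_e/n|$, observe that $X_e\sim\mathrm{Bin}(k,n_e/n)$, and bound the mean absolute deviation by the standard deviation via Jensen. The paper in fact asserts equality in your first displayed line (writing simply ``Since this is a tree metric''), whereas you give a more explicit leaf-peeling justification and note that only the inequality is needed; this extra care is fine but not a substantive departure.
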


\begin{proof}
	Fix some edge $e$ and let $T_1(e)$ be its smaller subtree, containing $n_e\leq n/2$ leaves. 
    Let $X_e \sim \mathrm{Bin}(k,n_e/n)$ be the random variable denoting the number of servers in
    $T_1(e)$ chosen in $k$ i.i.d samples from $S$. 
    For any given realization of $S_k$ (and therefore of $X_e$) the fractional
    solution to $M(S_k)$ utilizes edges between the different subtrees of $e$
    by exactly
    $|X_e/k - n_e/n|$.  Since this is a tree metric, we have
    \[M(S_k) = \sum_{e\in T} d_e\cdot \left|\frac{X_e}{k} - \frac{n_e}{n}\right| = \sum_{e\in T}d_e\cdot \frac{1}{k}\cdot \left|X_e - \frac{k}{n}\cdot n_e \right| = \sum_{e\in T}d_e\cdot \frac{1}{k}\cdot |X_e - \E[X_e]|. \]
    Taking expectations over $S_k$, and using the fact that the mean
    average deviation is always upper bounded by the standard
    deviation (by Jensen's inequality), we find that indeed
	\begin{align*}
	\E_{S_k\sim \calI_k}[M(S_k)]
	= & \sum_{e\in T}d_e\cdot \frac{1}{k}\cdot \E[|X_e - \E[X_e]|]
	\leq  \sum_{e\in T} d_e\cdot \frac{1}{k} \cdot \STD(X_e) \\
	= & \sum_{e\in T} d_e\cdot \frac{1}{k} \cdot \sqrt{k\cdot \frac{n_e}{n}\left(1-\frac{n_e}{n}\right)} 
	\leq \sum_{e\in T}d_e \cdot \sqrt{\frac{n_e}{k\cdot n}}.\qedhere
	\end{align*}
\end{proof}

Combining the replacement lemma (\Cref{repetition-MSk}) with
Lemmas~\ref{lem:upper-bound-tree} and
\ref{lem:variance-lower-bound-tree}, we obtain the following upper bound
on $\E_{S_k\sim \calU_k}[M(S_k)]$ in terms of $\E[OPT]$.

\begin{lem}\label{lem:MSk-OPT-bound}
	$\E_{S_k\sim \calU_k}[M(S_k)]\leq 2\cdot \frac{\E[OPT]}{\sqrt{nk}}.$
\end{lem}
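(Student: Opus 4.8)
The plan is to chain together the three lemmas that precede this statement. The target inequality is
\[
\E_{S_k\sim \calU_k}[M(S_k)] \leq 2\cdot \frac{\E[OPT]}{\sqrt{nk}},
\]
and all the analytic work has already been done; what remains is bookkeeping. First I would invoke the Replacement Lemma (\Cref{repetition-MSk}) to pass from sampling without replacement to sampling with replacement, giving $\E_{S_k\sim \calU_k}[M(S_k)] \leq \E_{S_k\sim \calI_k}[M(S_k)]$. This is the step that lets the subsequent binomial/variance computation go through cleanly, since with replacement the counts $X_e$ are exactly $\mathrm{Bin}(k, n_e/n)$ and independent across the edges in the sense needed for the per-edge decomposition.

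Next I would apply \Cref{lem:upper-bound-tree} to bound the right-hand side by $\sum_{e\in T} d_e\cdot\sqrt{n_e/(kn)}$, which I can rewrite as $\frac{1}{\sqrt{kn}}\sum_{e\in T} d_e\sqrt{n_e}$. Then I would apply \Cref{lem:variance-lower-bound-tree}, which gives $\sum_{e\in T} d_e\sqrt{n_e} \leq 2\,\E[OPT]$. Substituting,
\[
\E_{S_k\sim \calU_k}[M(S_k)] \leq \frac{1}{\sqrt{kn}}\sum_{e\in T} d_e\sqrt{n_e} \leq \frac{2\,\E[OPT]}{\sqrt{kn}},
\]
which is exactly the claim. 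So the proof is essentially a three-line substitution once the component lemmas are in hand.

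There is no real obstacle in this particular lemma — the content lives entirely in the three lemmas it cites. The one place to be slightly careful is that \Cref{lem:variance-lower-bound-tree} requires $n\geq 2$ and uses the $\STD \geq$-type bound from \cite{berend2013sharp}, which needs $n_e\in[1,n-1]$; since we have reduced to servers on leaves with $n_e\le n/2$ and $n_e\ge 1$ for every non-trivial edge, those hypotheses are met (and edges with $n_e=0$, if any, contribute nothing and can be dropped). Likewise the Replacement Lemma is stated for all $k\in[|S|]$, so it applies across the whole range we care about. Hence I would simply state that the lemma follows by composing \Cref{repetition-MSk}, \Cref{lem:upper-bound-tree}, and \Cref{lem:variance-lower-bound-tree}, exhibiting the two-line chain of inequalities above.
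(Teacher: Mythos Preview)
Your proposal is correct and matches the paper's approach exactly: the paper states that \Cref{lem:MSk-OPT-bound} follows by combining the Replacement Lemma (\Cref{repetition-MSk}) with \Cref{lem:upper-bound-tree} and \Cref{lem:variance-lower-bound-tree}, and you have simply spelled out the two-line chain of inequalities that this combination yields.
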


We can now prove our simple result for tree metrics.

\begin{thm}(Tree Bound)\label{main-tree}
	Algorithm \algoname is $4$-competitive on tree metrics with $n\geq 2$ nodes, if the requests are drawn from the uniform distribution.
\end{thm}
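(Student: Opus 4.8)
The plan is to combine the per-step bound of \Cref{lem:MSk-OPT-bound} with the Structure Lemma (\Cref{structure}), exactly as we did for general metrics in \Cref{iid-general-metrics}, but now the per-step bound $\E_{S_k\sim\calU_k}[M(S_k)] \leq 2\E[\OPT]/\sqrt{nk}$ has no $\log n$ loss, so summing over $k$ will give a constant. Concretely, by \Cref{structure} the algorithm's cost is $\E[\ALG] = \sum_{k=1}^n \E_{S_k\sim\calU_k}[M(S_k)]$, and plugging in \Cref{lem:MSk-OPT-bound} term-by-term gives
\begin{equation*}
	\E[\ALG] \;\leq\; \sum_{k=1}^n \frac{2\,\E[\OPT]}{\sqrt{nk}} \;=\; \frac{2\,\E[\OPT]}{\sqrt n}\sum_{k=1}^n \frac{1}{\sqrt k}.
\end{equation*}

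The only remaining work is to bound $\sum_{k=1}^n k^{-1/2}$. The standard integral comparison $\sum_{k=1}^n k^{-1/2} \leq 1 + \int_1^n x^{-1/2}\,dx = 1 + 2(\sqrt n - 1) = 2\sqrt n - 1 \leq 2\sqrt n$ does the job, yielding $\E[\ALG] \leq \frac{2\E[\OPT]}{\sqrt n}\cdot 2\sqrt n = 4\,\E[\OPT]$, which is the claimed $4$-competitiveness. (If one wants a slightly tighter constant one can use $\sum_{k=1}^n k^{-1/2} \le 2\sqrt{n}-1$, but $2\sqrt n$ suffices.)

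There is essentially no obstacle here: all the substance is in the earlier lemmas — the Structure Lemma that reduces the analysis to bounding $\sum_k \E_{S_k\sim\calU_k}[M(S_k)]$, the Replacement Lemma (\Cref{repetition-MSk}) that passes from sampling without replacement to sampling with replacement, the upper bound on $\E_{S_k\sim\calI_k}[M(S_k)]$ via the standard deviation of the edge-cut binomials (\Cref{lem:upper-bound-tree}), and the matching lower bound on $\E[\OPT]$ via the mean absolute deviation of those same binomials (\Cref{lem:variance-lower-bound-tree}, using \Cref{clm:mad}). Given \Cref{lem:MSk-OPT-bound}, this theorem is just the harmonic-type sum above. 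I would also remark that combined with \Cref{transshipment} this yields the $9$-competitive algorithm for arbitrary distributions on tree metrics promised in \Cref{thm:tree}, since $2\cdot 4 + 1 = 9$.
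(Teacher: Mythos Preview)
Your proposal is correct and essentially identical to the paper's proof: both apply the Structure Lemma, plug in \Cref{lem:MSk-OPT-bound} term-by-term, and bound $\sum_{k=1}^n k^{-1/2}$ by the same integral comparison $1 + \int_1^n x^{-1/2}\,dx \leq 2\sqrt{n}$ to obtain $\E[\ALG] \leq 4\,\E[\OPT]$. Your closing remark about combining with \Cref{transshipment} to get the $9$-competitive bound for arbitrary distributions also matches the paper's discussion following the theorem.
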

\begin{proof}
	We have by the structural lemma (\Cref{structure}) and \Cref{lem:MSk-OPT-bound} that 
	\begin{align*}
	\E[\ALG] & = \sum_{k=1}^n \E[M(S_k)] 
	\leq \sum_{k=1}^n 2 \cdot \frac{\E[OPT]}{\sqrt{nk}} \\
	& \leq 2 \cdot \frac{\E[OPT]}{\sqrt{n}}\cdot \left(1+\int_{x=1}^n \frac{1}{\sqrt x} dx\right) \leq 4\cdot \E[OPT].\qedhere
	\end{align*}
\end{proof}

The above bound holds for all $n\geq 2$ (for $n=1$ any algorithm is trivially $1$ competitive). For $n$ large, however, our proof yields an improved 
asymptotic bound of $\sqrt{2}\cdot e + o(1) \approx (3.845+o(1))$, by relying on the asymptotic counterpart of \Cref{clm:mad} in
\cite[Corollary~2]{berend2013sharp}, $\E|Y - \E Y| \geq \STD(Y)/(e/2+o(1))$.
Combining \Cref{main-tree} with our transshipment argument (\Cref{transshipment}), we 
obtain a $9$-competitive algorithm under any i.i.d.~distribution on tree metrics on $n\geq 2$ nodes, and even better than $9$-competitive
algorithms for large enough $n$.
\section{Open Questions}

In this work, we presented algorithm \algoname and proved that it is
$O((\log \log \log n)^2)$-competitive for general metrics, and
$9$-competitive for tree metrics. Perhaps the first question is whether
our algorithm (or indeed any algorithm) is $O(1)$ competitive for (known
or unknown) i.i.d arrivals for general metrics. Indeed, we do not know of any instances where Algorithm \algoname's 
performance is worse than $O(1)$ competitive. However, it is not clear how to extend our proofs
to establish an $O(1)$ competitive ratio. 

Another question is the relationship between the known and unknown
i.i.d.\ models and the random order model. The optimal competitive
ratios for the various arrival models for online problems can be sorted
as follows (see e.g. \cite[Theorem 2.1]{mehta2013online})
\begin{align*}
C.R.(Adversarial) & \geq C.R.(Random\ Order) \geq C.R.(Unknown\ IID) \geq C.R.(Known\ IID).
\end{align*}
For the online metric matching problem the best bounds known for the
above are, respectively, $\Theta(n)$
\cite{khuller1994line,kalyanasundaram1993online},
$\Theta(\log n), O(\log n)$ (both \cite{raghvendra2016robust}), and
$O((\log \log \log n)^2)$ (this work).  Given the lower bound of
\cite{raghvendra2016robust}, our work implies that one or both of the
inequalities in
$C.R.(Random\ Order) \geq C.R.(Unknown\ IID) \geq C.R.(Known\ IID)$ is
strict (and asymptotically so). It would be interesting to see which of
these inequalities is strict, by either presenting a
$o(\log n)$-competitive algorithm for unknown i.i.d or a
$\omega((\log \log \log n)^2)$ lower bound for this model.  For the line
metric, given the lower bound of \cite{fuchs2005online}, our work
implies that one of the three inequalities above must be
strict. Understanding the exact relationships between these arrival
models for this simple metric may prove useful in understanding the
relationships between the different stochastic arrival models more
broadly. Moreover, it would be interesting to study these questions for
other combinatorial optimization problems with online stochastic
arrivals. 

\appendix
\section*{Appendix}
\section{Distribution over Server Locations (Transshipment Argument)}\label{sec:transshipment}

In this section, we show that the assumption that the requests are drawn from
$\mathcal{U}$, the uniform distribution over server locations, is without loss
of generality. 
\transshipment*

\begin{proof}
As before, we identify the set of servers $S$ with the $n$ points on the metric and let 
$r_1,\dots,r_n$ be the requests that arrive according to the 
distribution $\mathcal{D}$. Define $p_i := \Pr_{r \sim \mathcal{D}} [r=i]$.  

Consider the linear program defined by the transshipment problem between
the distribution $\mathcal{D}$ to the uniform distribution on the servers $S$.
\begin{align*}
    LP:= \min & \sum_{i,j}  d_{i,j} \cdot x_{i,j} \\
    \text{s.t. }  \sum_{j} x_{i,j} &= p_i \qquad \forall i\in \text{metric} \\
    \sum_{i} x_{i,j} &= \frac{1}{n}  \qquad \forall j\in S \\
    x &\geq 0
\end{align*}
Let $M = n \cdot LP$.  Given a request sequence $\{r_1,\dots,r_n\}$ drawn from
$\mathcal{D}$, we create a coupled sequence $\{\tilde{r}_1,\dots,\tilde{r}_n\}$
by moving an arrived request $r_k$ at server location $j$ to location $i$ in
the metric with probability $x_{i,j}/p_i$ Each server location $j \in S$
appears with probability $\sum_{i} x_{i,j} = \frac{1}{n}$ and hence the
sequence $\{ \tilde{r}_1,\dots,\tilde{r}_n\}$ is distributed according to the
uniform distribution $\mathcal{U}$.  After this move, it matches the request
according to $\ALG_{\mathcal{_{U}}}$. 

We bound this algorithm's cost as follows.  First, the probability of a given request being moved from some location $i$ to
$j$ is precisely $p_i \cdot x_{i,j}/p_i = x_{i,j}$. Summing up over all $i,j$,
the expected movement cost for all $n$ time steps is precisely $M=n\cdot LP$.
Secondly, the expected cost of matching from $\tilde{r}_i$ is precisely
$\E[\ALG_{\mathcal{U}}]$. By the triangle inequality, we can bound the total
cost by the sum of the initial costs and the matching costs according to
$\ALG_{\mathcal{U}}$, yielding the relation
\begin{equation}\label{algd-bound}
    \E[\ALG_{\mathcal{D}}] \leq \E[\ALG_{\mathcal{U}}] + M.
\end{equation}

We use the same coupling as above, but in the other direction to relate
$\OPT_{\mathcal{U}}$ to $M$.  In particular, given a request sequence
$\{r_1,\dots,r_n\}$ drawn from $\mathcal{U}$, we create a coupled sequence
$\{\tilde{r}_1,\dots,\tilde{r}_n\}$ by moving an arrived request $r_k$ at
server location $j$ to location $i$ in the metric with probability $n\cdot
x_{i,j}$.  Now $\Pr[\tilde{r}_k = i] = \frac{1}{n}\cdot \sum_{j} n\cdot x_{i,j}
= \sum_{j} x_{i,j} = p_i$. That is, the resulting distribution is
$\mathcal{D}$.  One way to bound the optimal solution for distribution
$\mathcal{U}$ is to match request $r_k$ to the match of $\tilde{r}_k$. As
before, the expected movement cost to locations
$\{\tilde{r}_1,\dots,\tilde{r}_n\}$ is $M$, and by triangle inequality, we find
that
\begin{equation}\label{optu-bound}
    \E[\OPT_{\mathcal{U}}] \leq \E[\OPT_{\mathcal{D}}] + M.
\end{equation}

We now bound $\E[\OPT_{\mathcal{D}}]$ in terms of $M$. Each location $i$ in the
metric has an expected $np_i$ appearances, who must therefore be matched an
expected $np_i$ many times. Each server, on the other hand, is matched
precisely once in expectation. Therefore, the probabilities $p_{i,j}$ of an
arrival at location $i$ being matched to a server at location $j$ constitute a
feasible solution to $n\cdot LP$, and so must have 
$\sum_{i,j} d_{i,j} \cdot p_{i,j} \geq n\cdot LP = M$. 
Therefore, $\E[\OPT_{\mathcal{D}}]$ satisfies 
\begin{equation}\label{optd-bound}
	\E[\OPT_{\mathcal{D}}] \geq M.
\end{equation}

Combining equations \eqref{algd-bound}, \eqref{optu-bound} and \eqref{optd-bound} with $\ALG_{\mathcal{U}}$'s $\alpha$-competitiveness, we obtain our desired result.
\begin{align*}
    \E[\ALG_{\mathcal{D}}] & \leq \E[\ALG_{\mathcal{U}}] + M  & \text{\Cref{algd-bound}} & \\
     &\leq \alpha \cdot \E[\OPT_{\mathcal{U}}] + M & \ALG_{\mathcal{U}} \text{ is $\alpha$-comp.} & \\
     &\leq \alpha \cdot (\E[\OPT_{\mathcal{D}}] + M) + M & \text{\Cref{optu-bound}} & \\
     &\leq (2\alpha+1) \cdot \E[\OPT_{\mathcal{D}}]. & \text{\Cref{optd-bound}} & \qedhere
\end{align*}
\end{proof}
\section{Stochastic Convex Optimization,\\ with and without Replacement}\label{sec:replacement}

In \Cref{repetition-MSk} we claimed that the expected cost of the linear
program $M(S_k)$ for $S_k$ chosen at random from the $k$-subsets of $S$ is
lower than its counterpart when $S_k$ is obtained from $k$ i.i.d draws from
$S$. More succinctly, we claimed that $\E_{S_k\sim \calU_k}[M(S_k)]\leq
\E_{S_k\sim \calI_k}[M(S_k)]$. In this section we prove a more general claim for any 
linear program (and more generally, any convex program), implying the above. 
Let $S$ be some $n$-element set, and for any multiset $T$ with all its elements taken from $S$, 
let $P(T)$ be the following convex program.
\begin{align*}
	P(T) := \min & f(x,\chi_T) \tag{$P(\cdot)$} \\
	\text{s.t. } & g_i(x,\chi_T) \leq 0 \qquad \forall i\in [m] \\
				 & h_j(x,\chi_T) = 0 \qquad \forall j\in [\ell]
\end{align*}
Here $f(x,\chi_T)$ and all $g_i(x,\chi_T)$ are convex functions and $h_j(x,\chi_T)$ are affine in their arguments $x$ and $\chi_T$, and $\chi_T$ is the incidence vector of the multiset $T$. (That is, for any $s\in S$, we let $\chi_T(s)$ denote the number of appearances of $s$ in $T$.) Note that $M(T)$ defined in \Cref{sec:notation} is a linear program of the above form. As such, the following lemma generalizes -- and implies -- \Cref{repetition-MSk}.

\begin{lem}[Replacement Lemma]\label{replacement-general} For any convex program $P$ as above,
we have 
	$$\E_{S_k \sim \calU_k}[P(S_k)] \leq \E_{S_k \sim \calI_k}[P(S_k)].$$
\end{lem}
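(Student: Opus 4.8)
The plan is to strip the optimization away and reduce everything to a single convexity fact about the optimal value viewed as a function of the count vector $\chi$, and then to prove that fact by exhibiting a martingale coupling of the two sampling schemes. In effect the whole statement comes down to the assertion that the occupancy vector of a uniformly random $k$-subset lies below that of $k$ i.i.d.\ uniform samples in the \emph{convex order}.

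\emph{Step 1 (remove the program).} I would first note that $\chi\mapsto P(\chi)$ is a convex function of the parameter vector $\chi\in\mathbb{R}^S$ on the domain where $P$ is feasible with finite value --- which, we assume, contains every integer vector $\chi_T$ with $\sum_s\chi_T(s)=k$. Indeed, $K:=\{(x,\chi):g_i(x,\chi)\le 0\ \forall i,\ h_j(x,\chi)=0\ \forall j\}$ is convex because the $g_i$ are convex and the $h_j$ affine, so the jointly convex function equal to $f$ on $K$ and $+\infty$ off $K$ has a convex partial minimization $P(\chi)=\inf_x\{f(x,\chi):(x,\chi)\in K\}$ in $\chi$. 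Since $P(T)$ (and in particular $M(T)$) depends on $T$ only through $\chi_T$, it now suffices to show, for every function $\phi:\mathbb{R}^S\to\mathbb{R}$ that is convex and finite on $\{\chi\ge 0:\sum_s\chi_s=k\}$,
\[
\E_{S_k\sim\calU_k}[\phi(\chi_{S_k})]\ \le\ \E_{S_k\sim\calI_k}[\phi(\chi_{S_k})].
\]

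\emph{Step 2 (the coupling).} Draw $A\sim\calI_k$, set $V:=\mathrm{supp}(A)$ and $d:=|V|\le k$, and then draw $B$ uniformly among the $\binom{n-d}{k-d}$ size-$k$ subsets of $S$ that contain $V$. I claim that (i) $B\sim\calU_k$, and (ii) $\E[\chi_A\mid B]=\chi_B$. For (i): for a fixed $k$-subset $C$ we get $\Pr[B=C]=\sum_{A:\,\mathrm{supp}(A)\subseteq C}\Pr[A]/\binom{n-|\mathrm{supp}(A)|}{k-|\mathrm{supp}(A)|}$, which by relabeling symmetry is the same for all $k$-subsets $C$ and hence equals $1/\binom{n}{k}$. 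For (ii): on the event $B=C$ we have $\mathrm{supp}(A)\subseteq C$, so $\chi_A$ vanishes outside $C$; and by symmetry under permutations of $C$, $\E[\chi_A(s)\mid B=C]$ is the same for all $s\in C$. Since $\sum_{s\in C}\chi_A(s)=|A|=k$ identically on this event and $|C|=k$, each of these equal values must be $1$, i.e.\ $\E[\chi_A\mid B=C]=\chi_C$.

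\emph{Step 3 (Jensen) and the main obstacle.} Conditional Jensen then finishes it: $\E_{\calI_k}[\phi(\chi_A)]=\E_B[\E[\phi(\chi_A)\mid B]]\ge\E_B[\phi(\E[\chi_A\mid B])]=\E_B[\phi(\chi_B)]=\E_{\calU_k}[\phi(\chi_B)]$, using that all the multiset points $\chi_A$ with $\mathrm{supp}(A)\subseteq B$, together with their average $\chi_B$, lie in a common convex subset of $\{\chi\ge 0:\sum_s\chi_s=k\}$ on which $\phi$ is finite and convex; taking $\phi=P(\cdot)$ yields the Replacement Lemma. The step I expect to be most delicate is Step 2 --- especially pinning down the martingale identity $\E[\chi_A\mid B]=\chi_B$ --- since this is exactly where the asymmetry between sampling with and without replacement is used; the two symmetry arguments are short but must be stated carefully. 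A secondary point is making Step 1 fully rigorous: one should verify that $P$ is proper and finite on the set of count vectors of size-$k$ multisets (it is, for $M(\cdot)$, since that LP is feasible with a bounded feasible region), so that the partial-minimization-is-convex step and the ensuing Jensen inequality are legitimate.
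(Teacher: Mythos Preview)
Your proposal is correct and follows essentially the same route as the paper: the coupling in your Step~2 (draw the i.i.d.\ multiset $A$, then pick $B$ uniformly among $k$-sets containing $\mathrm{supp}(A)$) is exactly the paper's die-roll construction, and your claims~(i) and~(ii) are precisely the paper's observations that $\Pr[A_T]=1/\binom{n}{k}$ and $\E[\chi_{S_k}(s)\mid A_T]=1$. The only cosmetic difference is that the paper unpacks your Step~1 by explicitly averaging the optimizers $y^T:=\E[x^M\mid A_T]$ and checking feasibility of $y^T$ for $P(T)$ constraint-by-constraint via Jensen on each $g_i$ and affinity of each $h_j$, whereas you compress this into the single observation that partial minimization of a jointly convex function is convex and then apply Jensen once to $\phi=P(\cdot)$; these are equivalent formulations of the same step.
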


\begin{proof}
	Our proof relies on a coupling argument, starting with a refined partition of the probability space of $S_k\sim \calI_k$. 
	This space is partitioned into equiprobable events $A_M$ for each ordered multiset $M$ of size $k$ supported in $S$, corresponding to $M$ being sampled. 
	For each ordered multiset $M$, we denote by $\support(M) := \{s\in S \mid s\in M\}$ the set of elements in $M$. 
	Next, we denote by $\SUP(M):=\{T\in \binom{S}{k} \mid T\supseteq \support(M) \}$ the family
	of $k$-sets which contain $M$'s elements (i.e., supersets of $M$'s support). 
	We will wish to ``equally partition'' the event $A_M$ among the $k$-tuples in $\SUP(M)$. To this end, when $M$ 
	is sampled from $\calI_k$, we roll a $|\SUP(M)|$-sided die labeled by the members of $\SUP(M)$. 
	For any $k$-set $T\in \SUP(M)$, we denote by $A_{M,T}$ the event that $M$ was sampled from $\calI_k$ 
	and the die-roll came out $T$, and 
	for any $k$-tuple $T\in \binom{S}{k}$, we let $A_T := \bigcup_{M}{A_{M,T}}$.
	It is easy to verify that by symmetry we have $Pr[A_T] = 1/\binom{|S|}{k}$ for every $T\in \binom{S}{k}$.
	
	We now wish to couple the above refinement of the probability space of $\calI_k$ and 
	the optimal solution to $P(S_k)$ with their counterpart under $\calU_k$. We will need the following claim.
	
	\begin{cla}\label{exp-apps}
		For all $k$-set $T\in \binom{S}{k}$ and element $s\in T$, we have $\E_{S_k \sim \calI_k}[\chi_{S_k}(s) \mid A_T] = 1.$
	\end{cla}
	\begin{proof}
        By definition, each non-empty $A_{M,T}\subseteq A_T$ satisfies
        $\E_{S_k \sim \calI_k}[\sum_{s\in T} \chi_{S_k}(s) \mid A_{M,T}] = k$, since any ordered
        multiset $M$ of size $k$ with $\SUP(M)\ni T$ has all its elements in
        $T$.  Therefore, taking total expectation over $M$ with $\SUP(M)\ni T$,
        we get $\E_{S_k \sim \calI_k}[\sum_{s\in T} \chi_{S_k}(s) \mid A_{T}] = k$. 
        Therefore, by symmetry, we find that indeed each of the $k$ elements $s\in T$ 
        has $\E_{S_k \sim \calI_k}[\chi_{S_k}(s) \mid A_{T}] = 1$.
	\end{proof}
	
	Now, consider some $k$-set $T\in \binom{S}{k}$. For any ordered multiset of 
	$k$ elements $M$ such that $SUP(M)\ni T$, denote by $x^M \in \arg\min P(M)$ a solution 
	of $P(M)$ of minimum cost. By definition, for each $i\in [m]$ we have that
	$g_i(x^M,\chi_M)\leq 0$ and for each $j\in [\ell]$ we have that $h_j(x^M,\chi_M) = 0$. Therefore, if we let $y^T := \E_{M\sim \calI_k}[x^M \mid A_T]$ be the ``average'' optimal solution for $P(M)$ over all $M$ with $SUP(M)\ni T$, 
	then by Jensen's inequality and convexity of $g_i$, we have that 
	\begin{align*}
		0 & \geq \E_{M\sim \calI_k}[g_i(x^M,\chi_M) \mid A_T] & \textrm{linearity}\\
		& \geq g_i(\E_{M\sim \calI_k}[x^M \mid A_T], \E_{M\sim \calI_k}[\chi_M \mid A_T]) & \textrm{Jensen's Ineq.}\\
		& = g_i(y^T, \chi_T). & \textrm{\Cref{exp-apps}}
	\end{align*}
	Similarly, we have that $h_j(y^T,\chi_T) = \E_{M\sim \calI_k}[h_j(x^M,\chi_M) \mid A_T] = 0$ for all $j\in [\ell]$, as $h_j$ is affine.
	We conclude that $y^T$ is a feasible solution to $P(T)$, and therefore $f(y^T,\chi_T) \geq P(T)$. Again appealing to Jensen's inequality, recalling that $y^T = \E_{M\sim \calI_k}[x^M \mid A_T]$ and that $\E_{M\sim \calI_k}[\chi_M \mid A_T] = \chi_T$ by \Cref{exp-apps}, we find that
	\begin{align*}
		\E_{M\sim \calI_k}[f(x^M,\chi_M) \mid A_T] & \geq f(y^T,\chi_T) \geq P(T).
	\end{align*}
	The lemma follows by total expectation over $M$, relying on $\Pr[A_T] = 1/\binom{|S|}{k}$ for each $T\in \binom{S}{k}$.
	\begin{align*}
		\E_{M\sim \calI_k}[P(M)] & = \sum_{T\in \binom{S}{k}}\E_{M\sim \calI_k}[P(M) \mid A_T] \cdot \Pr[A_T] \\
		& \geq \sum_{T\in \binom{S}{k}} P(T)\cdot \Pr[A_T] 
		= \E_{T\sim \calU_k}[P(T)]. \qedhere
	\end{align*}
\end{proof}
\section{Deferred Proofs of \Cref{sec:general}}
\label{sec:app-proofs}
In this section we provide the proofs deferred from \Cref{sec:general}.

\subsection{Implicit Tree Embedding}
\label{sec:implicit-tree}

In \Cref{sec:trees}, we proved that algorithm \algoname is $O(1)$-competitive on tree metrics. 
Therefore, as noted in \Cref{sec:general}, using tree embeddings and applying algorithm \algoname 
to the points according to distances in the obtained tree embedding yields an $O(\log n)$-competitive algorithm for general metrics.
Here we present an upper bound on \algoname's expected per-arrival cost which implies the same competitive bound, by relying on an \emph{implicit} tree embedding.

\TreeEmbed*

\begin{proof}
	For our proof we rely on low-stretch tree embeddings \cite{fakcharoenphol2004tight}. Given an $n$-point metric with distances $d_{i,j}$, this embedding is a distribution $\calD$ over tree metrics $T$ over the same point set, with tree distances $d^T_{i,j}$ satisfying the following for any two points $i,j$ in the metric.
	\begin{align}
	& d_{i,j} \leq d^T_{i,j}. \label{FRT-lower} \\ 
	& \E_{T\sim\calD} [d^T_{i,j}] \leq O(\log n)\cdot d_{i,j}.\label{FRT-upper}
	\end{align}
	
	For such a tree metric $T$, let $M^T(S)$ denote $M(S)$ with the distances $d_{i,j}$ replaced by $d^T_{i,j}$. (As before, we also let this denote the optimum value of this program.)
	By \eqref{FRT-lower} we immediately have that 
	$M(S) \leq M^T(S)$ for any set $S$,
	as any solution $\vec{x}$ to $M^T(S)$ is feasible for $M(S)$ and has lower cost for this latter metric, 
	$\sum_{i,j} x_{i,j}\cdot d_{i,j} \leq \sum_{i,j} x_{i,j} \cdot d^T_{i,j}.$ Consequently, we have 
	\begin{equation}\label{M-versus-tree-M}
	M(S)\leq \E_{T\sim\calD}[M^T(S)].
	\end{equation}
	
	Next, we denote by $OPT^T$ the optimum cost of the min-cost perfect matching of the requests to servers for distances $d^T_{i,j}$. By \Cref{lem:MSk-OPT-bound} we have that for a tree metric $T$
	\begin{equation}\label{M^T-versus-OPT^T}
	\E_{S_k\sim \calU_k}[M^T(S_k)] \leq \frac{4\cdot \E[OPT^T]}{\sqrt{nk}}.
	\end{equation}
	
	Finally, for any realization of requests, the minimum-cost matching of requests to servers under $d_{i,j}$ has expected cost (over the choice of $T$) at most $O(\log n)$ times higher under $d^T_{i,j}$, by \eqref{FRT-upper}. Therefore, by a coupling argument we get the following bound on $\E_{T\sim\calD}\E[OPT^T]$ in terms of $\E[OPT]$.
	\begin{equation}\label{OPT^T-versus-OPT}
	\E_{T\sim\calD}[OPT^T] \leq O(\log n)\cdot \E[OPT].
	\end{equation}
	
	Combining Equations \eqref{M-versus-tree-M}, \eqref{M^T-versus-OPT^T} and \eqref{OPT^T-versus-OPT}, we obtain our desired bound.
	\begin{align*}
	\E_{S_k\sim \calU_k}[M(S_k)] & \leq \E_{T\sim\calD}\E_{S_k\sim \calU_k}[M^T(S_k)] \leq \frac{4\cdot \E_{T\sim\calD} \E[OPT^T]}{\sqrt{nk}} \leq \frac{O(\log n)\cdot \E[OPT]}{\sqrt{nk}}.\qedhere
	\end{align*}
\end{proof}

\subsection{Load of $k$ Most Loaded Bins}
Here we prove our lower bounds on the sum of loads of the $k$ most loaded bins in a balls and bins process with $n$ balls and bins.

\topkballsandbins*
\begin{proof}
	Let $t = \frac{\log(n/k)}{\log\log\,(n/k)}$, and define
	\begin{gather*}
	f(x_1, \cdots, x_n) = \left\{
	\begin{matrix}
	1 & \text{if the } k^{th} \text{  largest number in } x_1, \cdots, x_n  \text{ is less than } t/2\\
	0 & \text{otherwise}\\
	\end{matrix}
	\right..
	\end{gather*}
	Clearly, the function $f(x_1, \cdots, x_n)$ satisfies the condition
	in \Cref{lem:poisson_approx}, i.e., $f(x_1, \cdots, x_n)$ is
	nonnegative and $\E[f(X_1^{m}, \cdots, X_{n}^{m})]$ is monotonically
	decreasing with $m$. Since we have an equal number of balls and bins, we consider the case $m=n$. We
	abbreviate $X_i^{n}$ to $X_{i}$ and $Y_{i}^{n}$ to $Y_{i}$. Let $M_k$
	be the $k^{th}$ largest number among $Y_1, \cdots, Y_n$. Applying
	\Cref{lem:poisson_approx},
	\begin{gather*}
		\Pr\left[L_{k} < t/2\right] = \E\left[f(X_1, \cdots, X_{n})\right] 
		\leq 2\cdot \E\left[f(Y_1, \cdots, Y_n)\right] = 2\cdot \Pr\left[M_k < t/2\right].
	\end{gather*}
	Define the indicator variable $Z_i := \mathbf{1}_{(Y_i \geq t/2)}$, and
	observe that $\Pr[M_k < t/2] = \Pr[ \sum_i Z_i < k]$. 
	We bound the
	latter via a Chernoff bound, so we need a lower bound on $\E[\sum_i Z_i]$.
	\begin{align}
		\E[\sum_i Z_i] = n\cdot \Pr[ Y_i \geq t/2] \geq  n\cdot 
		\Pr[Y_i = t/2] \stackrel{(a)}{=} \frac{n}{e(t/2)!}
		\stackrel{(b)}{\geq}  \frac{4n}{t!} \stackrel{(c)}{\geq}
		4k. \label{bound-z-expection} 
	\end{align}
	The equality~(a) uses the definition of the Poisson distribution, the
	inequality~(b) uses that $t! \geq 4e(t/2)!$ for sufficiently large
	$t$. For inequality~(c), we know $t! \leq \nicefrac{\sqrt{t}}{e} \,(t/e)^t$ from
	Stirling's approximation, and so when $n/k$ is sufficiently large,
	plugging in $t= \frac{\log(n/k)}{\log\log\,(n/k)}$ gives
	\begin{align*}
		\log(t!) &\leq (t+1/2)\log t - t - 1 \leq t\log t\leq \log(n/k).
	\end{align*}
	Putting things together, and using a Chernoff bound, we get
	\begin{gather*}
		\Pr\left[L_{k} < t/2\right] \leq 2\cdot \Pr\left[M_k < t/2\right] = 2\cdot \Pr[\sum_{i}Z_i < k ] \leq 2 e^{-\frac{(3/4)^2.4k}{2}} \leq 2e^{-k}.
	\end{gather*}
	The lemma then follows directly, as
	\begin{align*}
		\E[N_k] &\geq \E\left[N_k \mid L_k \geq t/2 \right] \cdot \Pr\left[ L_k \geq t/2 \right] 
		\geq k \cdot (t/2) \cdot (1 - 2e^{-k})
		=  \Omega\Big(\frac{k\cdot \log (n/k)}{\log \log (n/k)}\Big). \qedhere
	\end{align*}
\end{proof}

The following simple lemma states that in the min cost perfect matching, we can always match requests and servers in the same location as much as possible. That is, $x_{ii} = \frac{1}{n}$ for every requested location $i$.

\topkchernoff*
\begin{proof}
	In expectation, there are $n\left(1 - 1/n\right)^{n} \sim n/e$ empty
	bins, thus on average  one would expect  $1/(1-1/e) > 1.5$ balls in each
	non-empty bin. To make this intuition formal, let
	$t = (1 - 1/e + 0.01)n$ and define
	\begin{gather*}
		f(x_1, \cdots, x_n) = \left\{
		\begin{matrix}
		1 & \text{if more than } t \text{ of } x_1, \cdots, x_n \text{ are greater than 0}\\
		0 & \text{otherwise}.\\
		\end{matrix}
		\right.
	\end{gather*}
	It is easy to verify that the function $f(x_1, \cdots, x_n)$ is non-negative and
	$\E[f(X_1^{m}, \cdots, X_n^{m})]$ is monotonically increasing in
	$m$. Define the variable $Z_i := \mathbf{1}_{(Y_i > 0)}$; then
	$Z_i \sim \text{Bernoulli}( 1 - 1/e)$. \Cref{lem:poisson_approx} and a
	Chernoff bound now give that for sufficiently large $n$,
	\begin{gather*}
		\E[f(X_1, \cdots, X_n)] \leq 2\cdot \E[f(Y_1, \cdots, Y_n)] = 2\cdot \Pr\left[\sum_i Z_i > tn\right] \leq 2e^{-\frac{0.01^2\cdot(1-1/e)n}{2}} < 0.01.
	\end{gather*}
	Hence
	\begin{gather*}
		\E[N_t] \geq \E\left[N_t\mid f(X_1, \cdots, X_n) = 0 \right] \cdot \Pr\left[f(X_1, \cdots, X_n) = 0\right]  \geq n\cdot (1 - 0.01) =0.99n.
	\end{gather*}
	Finally, for $k \leq n/2 (\leq t)$, we have that indeed
	$\frac{\E[N_k]}{k} \geq \frac{\E[N_{t}]}{t} \geq \frac{0.99n}{(1 - 1/e + 0.01)n} \geq \frac{3}{2}$.
\end{proof}

\subsection{Further Deferred Proofs}
\MatchToSelf*
\begin{proof}
	Fix an optimal solution $x^*$ of $\calI$ of maximum $\sum_i x^*_{ii}$ among optimal solutions of $\calI$. Suppose for contradiction that there exists some $i\in S_k$ such that $x^*_{ii}<\min\{\ell_i,r_i\}$. WLOG $\ell_i \leq r_i$ and so there 
	exists some locations $j,j'$ such that $x^*_{ij} >0$ and $x^*_{j'i} > 0$. Let $\epsilon = \min\{x^*_{ij},x^*_{j'i}\}$. 
	Consider the solution $\tilde{x}$ obtained from $x^*$ by increasing $x^*_{ii}$ and $x^*_{j'j}$ by $\epsilon$ and decreasing $x^*_{ij}$ and $x^*_{j'i}$ by $\epsilon$. This $\tilde{x}$ is a feasible solution to $\calI$ (as sums of the form $\sum_i x_{ij}$ and $\sum_j x_{ij}$ are unchanged and $\tilde{x}\geq 0$). Moreover, we find that 
	\begin{align*}
	\sum_{ij} d_{ij}\cdot \tilde{x}_{ij} & = \left(\sum_{ij} d_{ij}\cdot x^*_{ij}\right) + \epsilon\cdot (d_{ii} + d_{jj'} - d_{ij} - d_{ij'}) \\
	& = OPT(\calI) + \epsilon\cdot (d_{jj'} - d_{ij} - d_{ij'}) \leq OPT(\calI),
	\end{align*}
	by triangle inequality. That is, $\tilde{x}$ is an optimal solution to $\calI$ with a higher $\sum_i x_{ii}$ than $x^*$, contradicting our assumption. The lemma follows. 
\end{proof}

\sumhalf*
\begin{proof}
	As noted in the proof of \Cref{balls-and-bins-bound-raw}, 
	by \Cref{lem:match-to-self}, the optimal value of $M(S_k)$ is equal to that of a min-cost bipartite perfect $b$-matching instance with left vertices associated with $S_k$ with demand $\frac{1}{k}-\frac{1}{n}$ and right vertices associated with $S\setminus S_k$ with demand $\frac{1}{n}$. Similarly, $M(S\setminus S_k)$ is equal to the same, but with each $i\in S_k$ having demand $\frac{1}{n}$ and each $i\in S\setminus S_k$ having demand $\frac{1}{n-k}-\frac{1}{n}$. That is, these programs are just scaled versions of each other, and we we have that for any $k \leq n/2$,
	\[M(S_k) = \frac{1/k - 1/n}{1/n} \cdot M(S\backslash S_k) =  \left(\frac{n}{k} - 1\right)\cdot M(S\backslash S_k) \geq M(S\backslash S_k).
	\]  
	Consequently, taking expectation over $S_k$ (equivalently, over $S\setminus S_k$), we find that  for any $k \leq n/2$, we have $E_{S_k\sim \calU_k}[M(S_k)] \geq E_{S_{n-k}\sim \calU_{n-k}}[M(S_{n - k})]$. The lemma follows.
\end{proof}
\section{Max Weight Perfect Matching Problem in \IID Model}
\label{sec:max-weight-matching}
Here we prove that, with a small modification, \algoname achieves the optimal competitive ratio, i.e $\nicefrac{1}{2}$, in the max weight perfect matching problem introduced in ~\cite{chang2018dispatch}. 
Here, rather than compute a minimum cost perfect matching, we are tasked with computing a maximum weight perfect matching, which need not correspond to a metric.
Since we are now in a maximization problem and we are no longer in a metric space, we will not make the assumption that the distribution of all requests is uniform among all servers. Moreover, we make the following modification to our algorithm: in each round of \algoname, instead of finding a min cost perfect matching, we would find the max weight perfect matching. Correspondingly, we change the notation for $M(T)$: instead of being a min cost perfect b-matching induced by the set of free servers $T$ and requests $R$, now $M(T)$ refers to the \emph{max} weight perfect b-matching between the set of free servers $T$ and requests $R$. More formally, we have
\begin{align}
	M(T) := \max & \sum_{i\in T,j\in R}  w_{i, j} \cdot x_{i,j}\label{eq:max-weight-lp} \\
	\text{s.t. }  \sum_{j \in T} x_{i,j} & = \frac{1}{|T|} \qquad \forall i\in T\notag \\
	\sum_{i \in R} x_{i,j} & = p_i  \qquad\,\,\,\, \forall j \in R \notag\\
	x &\geq 0.\notag
\end{align}
Generalizing \algoname, if $S_k$ is the realized set of free servers and $x^{S_k}$ an optimal solution to $M(S_k)$, 
then upon arrival of a request at location $i$ (which happens with probability $p_i$),
we randomly pick a server $s$ to match this request to, chosen with probability $x^{S_k}_{i,s}/p_i$.

\textbf{Difference compared to \cite{chang2018dispatch}.} We note that Chang et al.~\cite{chang2018dispatch} used a similar LP 
to $M(T)$. 
Essentially, they used $M(S)$, the program obtained by considering \emph{all} servers (and not just free ones). 
Following \cite{feldman2009online,haeupler2011online}, they refer to this as the optimum of the ``expected graph''.
Their algorithm picks a preferred server among all servers with probability $x^{S_k}_{r,s}/p_i$. If this server is already 
matched, in order to output a perfect matching they randomly (i.e., uniformly) pick an alternative server to match to.
Our algorithm does not need to fall back on a second random choice, as it only picks a server among free servers. 
As we shall see, our algorithm's analysis follows rather directly from
our analysis of \algoname for the minimization
variant.

A key observation is that the structure lemma (\Cref{structure}) still holds for our maximization variant of \algoname. We restate it here.

\begin{cla}\label{clm:structure-restated}(Structure Lemma, Restated)
	For each time $k$, the set $S_k$ is a uniformly-drawn $k$-subset of $S$; i.e., $S_k\sim \calU_k$. Consequently, the weight of the algorithm's output matching is $$\E[ALG] = \sum_{k = 1}^{n}\E_{S_k \sim \;\calU_k}[M(S_k)].$$
\end{cla}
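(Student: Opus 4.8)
The plan is to reuse the proof of the Structure Lemma (\Cref{structure}) almost verbatim, the only new wrinkle being that the request distribution $\{p_i\}$ is no longer uniform. First I would show by downward induction on $k$ (from $n$ to $1$) that $S_k\sim\calU_k$. The base case $S_n=S$ is trivial. For the inductive step, the fact I need is that, conditioned on $S_{k+1}=T'$, each server $s\in T'$ is the one matched at step $k+1$ with probability exactly $1/(k+1)$, independently of $T'$. This is precisely where non-uniformity is absorbed: a request lands at location $j$ with probability $p_j$ and is then assigned to $s$ with probability $x^{T'}_{j,s}/p_j$, so the marginal probability that $s$ gets matched is $\sum_j p_j\cdot x^{T'}_{j,s}/p_j=\sum_j x^{T'}_{j,s}=1/|T'|=1/(k+1)$, the $p_j$'s cancelling and the last equality being the server-side constraint of \eqref{eq:max-weight-lp}. (For completeness I would also note that $x^{T'}_{j,\cdot}/p_j$ is a bona fide distribution over $T'$ for each $j$ with $p_j>0$, by the request-side constraint $\sum_s x^{T'}_{j,s}=p_j$.) Plugging this into $\Pr[S_k=T]=\sum_{s\notin T}\Pr[S_{k+1}=T\cup\{s\}]\cdot\Pr[s\text{ matched at }k{+}1\mid S_{k+1}=T\cup\{s\}]=(n-k)\cdot\binom{n}{k+1}^{-1}\cdot(k+1)^{-1}=\binom{n}{k}^{-1}$ closes the induction exactly as in \Cref{structure}.

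Second, I would compute the expected weight of the edge added at step $k$ conditioned on $S_k=T$. Summing over the arriving location $j$ and the server $s$ it is matched to, and writing $x^T$ for the chosen optimal solution of $M(T)$, this expectation equals $\sum_{j}p_j\sum_{s\in T}\frac{x^T_{j,s}}{p_j}\,w_{s,j}=\sum_{j,s}x^T_{j,s}w_{s,j}=M(T)$ by the definition of $M(\cdot)$ in \eqref{eq:max-weight-lp}. Combining this with the first part and summing over $k$ gives $\E[ALG]=\sum_{k=1}^n\E[w_{s,r_k}]=\sum_{k=1}^n\sum_{T\in\binom{S}{k}}\Pr_{S_k\sim\calU_k}[S_k=T]\cdot M(T)=\sum_{k=1}^n\E_{S_k\sim\calU_k}[M(S_k)]$, which is the claim.

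I do not expect a genuine obstacle: the sole difference from the minimization argument is checking that the non-uniform weights $p_j$ cancel both in the per-step marginal removal probability and in the per-step expected weight, and both cancellations are immediate from the way the assignment probabilities $x^{S_k}_{j,s}/p_j$ are defined together with the two families of LP constraints. Everything else — the downward induction, the binomial bookkeeping, and the final telescoping sum over $k$ — is identical to the proof of \Cref{structure}, so the restated claim follows.
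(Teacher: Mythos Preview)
Your proposal is correct and follows exactly the approach the paper takes: the paper's own justification of \Cref{clm:structure-restated} is simply a one-paragraph remark that the proof of \Cref{structure} goes through verbatim once one observes that the $p_j$'s cancel in both the per-step removal probability and the per-step expected weight, which is precisely what you check. If anything, your write-up is more explicit than the paper's, which leaves the binomial bookkeeping and the final summation implicit.
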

\Cref{clm:structure-restated} holds due to the same argument in \Cref{structure}. Notice that all we needed in the proof of \Cref{structure} is that upon arrival of a request $r_k = i$ when there are $k$ free servers $S_k$ we match $r_k=i$ to a any free server $s$ with probability $x^{S_k}_{i,s}/p_i$, and so we use edge $(i,s)$ with probability precisely $x^{S_k}_{i,s}$. This implies that each free server $s\in S_k$ is matched with probability precisely $\frac{1}{k}$ and that the expected weight of the edge matched is precisely $\sum_{i\in S,j\in S_k} w_{i,j}\cdot x^{S_k}_{i,j}$.

Next, we note that $\E[OPT]$ can be upper bounded in terms of $M(S)$.
\begin{cla}\label{clm:opt-bound}
	$\E[\OPT] \leq n \cdot M(S).$
\end{cla}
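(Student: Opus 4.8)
The plan is to use the standard ``expected instance'' averaging argument: I will exhibit a single feasible fractional solution to the linear program $M(S)$ whose objective value equals $\tsty\frac1n\E[\OPT]$, obtained by averaging the (random) offline optimum over the randomness of the $n$ i.i.d.\ arrivals. Recall that $\OPT$ is the random variable equal to the maximum weight of a perfect matching of the $n$ realized requests $r_1,\dots,r_n$ to the $n$ servers; fix, for each realization of the requests, one such optimal perfect matching. For a location $i$ in the support of the request distribution and a server $j$, let $z_{i,j}$ denote the \emph{expected} number of requests located at $i$ that this optimal matching assigns to server $j$.

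First I would record the two marginal identities that follow from the fact that, since there are exactly $n$ requests and $n$ servers, the offline optimum is a \emph{perfect} matching. In every realization each server $j$ is matched exactly once, to a request at some location, so summing over locations gives $\sum_i z_{i,j}=1$ for every server $j$. Likewise, in every realization each request is matched, so $\sum_j z_{i,j}$ equals the expected number of requests at location $i$, which is exactly $np_i$ because each of the $n$ arrivals lands at $i$ independently with probability $p_i$. By linearity of expectation, $\E[\OPT]=\sum_{i,j} w_{i,j}\,z_{i,j}$.

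Then I would set $x_{i,j}:=z_{i,j}/n\ge 0$. The two identities give $\sum_j x_{i,j}=p_i$ for every location $i$ and $\sum_i x_{i,j}=1/n$ for every server $j$, so $x$ is a feasible solution to $M(S)$ (whose server capacities are $1/n$ and whose request capacities are $p_i$). Since $M(S)$ is a maximization program, $M(S)\ge \sum_{i,j} w_{i,j}\,x_{i,j}=\tsty\frac1n\sum_{i,j} w_{i,j}\,z_{i,j}=\tsty\frac1n\E[\OPT]$, which is exactly the claimed bound $\E[\OPT]\le n\cdot M(S)$. There is essentially no serious obstacle; the one point to be careful about is that, because $|R|=|S|=n$, the offline optimum is a perfect matching, which is precisely what forces the server-side marginals to equal $1$ (and the request-side marginals to sum to $np_i$). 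Equivalently, one can phrase the argument as: $n\cdot M(S)$ is the optimal value of the fractional max-weight $b$-matching on the ``expected instance'' in which location $i$ has $np_i$ units of demand, and the expectation over realizations of any fixed choice of integral optimal perfect matchings is a feasible fractional solution of that expected instance of value $\E[\OPT]$.
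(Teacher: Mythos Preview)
Your proposal is correct and is essentially the same argument the paper invokes: the paper does not write out a separate proof for this claim but simply says ``The proof is exactly the same as \Cref{optd-bound},'' i.e., the averaging argument from the transshipment section in which the expected assignment of the (random) offline optimum yields a feasible fractional solution of the scaled LP. Your write-up is exactly that argument adapted to the maximization setting, with the roles of lower and upper bounds swapped appropriately.
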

The proof is exactly the same as \Cref{optd-bound}. See also \cite[Lemma 1]{chang2018dispatch}.

Now we can prove that the maximization variant of \algoname is $\nicefrac{1}{2}$ competitive for the max weight perfect matching problem in the \IID model.
\begin{thm}\label{thm:max-weight-matching}
	The max-weight variant of \algoname is $\nicefrac{1}{2}$ competitive. 
\end{thm}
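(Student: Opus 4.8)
The plan is to reduce everything to one per-step estimate and then invoke the two facts already in hand: the restated Structure Lemma (Claim~\ref{clm:structure-restated}), which gives $\E[\ALG]=\sum_{k=1}^{n}\E_{S_k\sim\calU_k}[M(S_k)]$, and Claim~\ref{clm:opt-bound}, which gives $\E[\OPT]\le n\cdot M(S)$. Combining these, it suffices to prove the per-step bound $\E_{S_k\sim\calU_k}[M(S_k)]\ge \tfrac{k}{n}\,M(S)$ for every $k$: summing it over $k=1,\dots,n$ yields $\E[\ALG]\ge\big(\sum_{k=1}^n\tfrac kn\big)M(S)=\tfrac{n+1}{2}M(S)\ge\tfrac n2\,M(S)\ge\tfrac12\,\E[\OPT]$, which is exactly $\nicefrac12$-competitiveness.

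For the per-step bound I would fix an optimal solution $x^{*}$ of the expected-graph LP $M(S)$ and set $c_i:=\sum_{j\in R}w_{ij}x^{*}_{ij}$, the weight contributed by server $i$, so that $\sum_{i}c_i=M(S)$. Fix any $k$-set $T$ and let $x^{*}|_{T}$ be $x^{*}$ with all variables on servers outside $T$ deleted. Under $x^{*}|_{T}$ each server $i\in T$ carries mass $\tfrac1n$ instead of the required $\tfrac1k$, and each request $j$ receives $q_j^{T}:=\sum_{i\in T}x^{*}_{ij}\le\sum_{i\in S}x^{*}_{ij}=p_j$; the key point is that restricting only removes mass, so no request is over-served. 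The total server deficit $\sum_{i\in T}(\tfrac1k-\tfrac1n)$ and the total request deficit $\sum_j(p_j-q_j^{T})$ are both equal to $\tfrac{n-k}{n}$, so there is a nonnegative transportation $\Delta$ on the complete bipartite graph $T\times R$ with exactly these marginals (e.g.\ the product plan $\Delta_{ij}=\frac{(\tfrac1k-\tfrac1n)(p_j-q_j^{T})}{(n-k)/n}$); then $z^{T}:=x^{*}|_{T}+\Delta$ is feasible for $M(T)$. Since the weights are nonnegative (as is standard for max-weight matching), $\langle w,z^{T}\rangle\ge\langle w,x^{*}|_{T}\rangle=\sum_{i\in T}c_i$, hence $M(T)\ge\sum_{i\in T}c_i$. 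Taking the expectation over $T\sim\calU_k$ and using $\Pr[i\in T]=k/n$ gives $\E_{S_k\sim\calU_k}[M(S_k)]\ge\tfrac kn\sum_i c_i=\tfrac kn M(S)$, as required.

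The only genuine obstacle is the per-step bound, and the single observation that makes it go through is that a ``downward'' restriction of the expected-graph optimum to a subset of the servers never over-serves any request, and can therefore be completed to a feasible $M(T)$-solution by routing the missing mass along (nonnegative-weight) edges, an operation that cannot decrease the objective. Everything else is either already proved or a one-line computation, so I expect the write-up to be short. I would also remark that $\nicefrac12$ is the best ratio attainable for this problem (cf.\ \cite{chang2018dispatch}), so there is nothing to be gained by sharpening the per-step estimate.
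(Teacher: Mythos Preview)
Your proposal is correct and follows essentially the same route as the paper: invoke the Structure Lemma, prove the per-step bound $\E_{S_k\sim\calU_k}[M(S_k)]\ge\tfrac{k}{n}M(S)$ by restricting the expected-graph optimum $x^S$ to $S_k$, then sum and apply Claim~\ref{clm:opt-bound}. Your treatment of the per-step bound is in fact more careful than the paper's: the paper simply asserts $M(S_k)\ge\sum_{i\in S_k,j\in R}w_{ij}x^S_{ij}$ ``by definition of $x^{S_k}$ and $M(S_k)$'', whereas you explicitly note that $x^S|_T$ is only sub-feasible and show how to complete it to a feasible $M(T)$-solution without decreasing the objective (using nonnegativity of the weights), which is the honest justification of that inequality.
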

\begin{proof}
	Letting $x^{S_k} \in \arg\max M(S_k)$ for every $S_k$, we have the following bound
	\begin{align*}
	\E_{S_k \sim \calU_k}[M(S_k)] &= \sum_{S_k }\frac{1}{\binom{n}{k}}\sum_{i \in S_k, j \in R}w_{i, j}\cdot x_{i, j}^{S_k} &\text{def. of }x^{S_k} \\
	&\geq \sum_{S_k}\frac{1}{\binom{n}{k}}\sum_{i \in S_k, j \in R} w_{i, j}\cdot x_{i, j}^{S} & \text{def. of }x^{S_k}\text{ and }M(S_k)   \\ 
	&= \sum_{i \in S}\Pr_{S_k \sim \calU_k} [i \in S_k] \cdot \sum_{j \in R}w_{i, j}\cdot x_{i, j}^{S}\\
	&= \frac{k}{n}\cdot M(S) &\text{def. of }M(S) \\
	&\geq \frac{k}{n^2}\cdot \E[\OPT]. &\Cref{clm:opt-bound}
	\end{align*}
	Summing these up, by the structure lemma (\Cref{clm:structure-restated}) we have
	\[
	\E[\ALG] = \sum_{i = 1}^{n}\E_{S_k \sim \calU_{k}}[M(S_k)] \geq \sum_{k = 1}^{n}\frac{k}{n^2}\cdot \E[\OPT] \geq \frac{1}{2}\cdot \E[\OPT].\qedhere 
	\]
\end{proof}
\section{Need for Metricity (or other assumptions)}
Here we outline simple examples showing that even under \IID arrivals, online minimum cost perfect matching does not admit even a polynomially-bounded competitive ratio. For unknown \IID, and therefore for random order and adversarial arrivals, even \emph{one edge} violating the triangle inequality is enough to rule out sub-exponential competitive ratio. For random order and adversarial arrivals one such edge is enough to cause the competitive ratio to be unbounded.

\begin{lem}\label{kiid-imp}
	The competitive ratio of any online min cost perfect matching algorithm under known \IID arrivals is at least $\Omega(2^{n/2}/n^3)$. This is true even under a uniform distribution and if the costs of all but $2$ request types obey triangle inequality.
\end{lem}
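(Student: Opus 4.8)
The plan is to exhibit a single hard metric: the distribution is forced to be uniform, so only the metric (and the constant violation of the triangle inequality) is ours to design. Since only two request types may be ``bad,'' all the non-metricity must sit on two special points $s$ and $t$; the remaining $n-2$ ``ordinary'' points are laid out as a \emph{cow-path / doubling} gadget. Concretely, I would place the ordinary points so that, together with $s$ and $t$, almost all pairwise distances form a low-diameter tree metric that offers cheap ``detours'' around $s$ and $t$, while a single edge incident to $s$ and $t$ is assigned an enormous length $D$ of order $2^{n}$, dwarfing every detour around it. One first checks the cheap bookkeeping fact that this assignment keeps every triangle inequality valid except those whose long side is an edge at $s$ or $t$; so exactly the two request types $s$ and $t$ violate the triangle inequality, as required. (Assume $n$ even, rounding otherwise.)

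Granting such a construction, the easy half is the upper bound on $\E[\OPT]$: an offline optimum never uses the overlong edge — it routes the corresponding flow through the detour — so its cost is just the cost of balancing the imbalance between the number of requests and the number of servers at each ordinary location, which is $O(\mathrm{poly}(n))$ in expectation by standard Chernoff/Poisson concentration (exactly the balls-and-bins estimates underlying \Cref{lem:top-k-balls-and-bins} and \Cref{lem:top-k-bins-and-balls1}). The rare large-deviation events (e.g.\ $\Omega(n)$ requests piling on one point) contribute a negligible additive term even after multiplying by $D$, so $\E[\OPT]=O(n^{3})$ (the exact power falling out of the gadget).

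The crux — and the main obstacle — is the lower bound on $\E[\ALG]$. The whole point of spending two bad points (rather than a pure metric, which \Cref{main-tree} shows is easy under i.i.d.\ arrivals) is that it destroys the algorithm's ability to hedge: to avoid the overlong edge the algorithm must, over the course of the run, irrevocably commit to ``protecting'' one of the two special servers, and it gets only about $n/2$ chances to dodge the trap. The i.i.d.\ arrival stream — which, conditioned on its multiset, is a uniformly random order — defeats each such chance independently with probability about $\tfrac12$ (different pairs of requests being independent), so with probability $\approx 2^{-n/2}$ all of them are defeated, the arrival order exhausts exactly the pool of servers the algorithm would fall back on, and the ``wrong'' special server is the only free one precisely when its matching special request arrives, forcing the edge of length $\approx 2^{n}$. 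Making this rigorous for an arbitrary (even randomized) online algorithm is the hard part: one needs (i) an averaging/potential argument identifying the algorithm's state just before the last few arrivals; (ii) the combinatorial core of the gadget, showing that a constant fraction of reachable states are already doomed and that each ``escape move'' really does succeed only with probability $\tfrac12$ over the random order; and (iii) assembling these to get $\Pr[\ALG\text{ uses the long edge}]=\Omega(2^{-n/2})$, hence $\E[\ALG]\ge \Omega(2^{-n/2}\cdot 2^{n})=\Omega(2^{n/2})$. Dividing by $\E[\OPT]=O(n^{3})$ yields the claimed $\Omega(2^{n/2}/n^{3})$ bound.
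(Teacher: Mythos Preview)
Your proposal has a genuine gap: the entire lower bound on $\E[\ALG]$ is deferred. You write that ``making this rigorous for an arbitrary (even randomized) online algorithm is the hard part'' and then list three ingredients you would need, without supplying any of them. In particular, the claim that ``each escape move succeeds only with probability $\tfrac12$'' is never tied to any concrete gadget --- the cow-path construction is only named, not described --- so there is no way to check that an adaptive algorithm really is trapped with probability $\Omega(2^{-n/2})$. As it stands this is a proof outline with the crux missing, not a proof.

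More importantly, the approach is far harder than necessary. You are trying to force an exponentially rare trap (probability $2^{-n/2}$) compensated by an edge of length $2^n$; this requires controlling the algorithm's behaviour over $\Theta(n)$ adaptive steps, which is exactly the delicate part you could not fill in. The paper instead uses a \emph{polynomially} rare trap and a correspondingly shorter long edge of length $2^{n/2}$. Concretely: make the first $n-2$ request types cost $1$ to every server; make request type $n-1$ cost $1$ to servers in $[n/2]$ and $2^{n/2}$ to servers in $[n/2+1,n]$; and make request type $n$ the mirror image. Now the bad event is simply ``exactly one request of the last two types arrives, and it is the final arrival,'' which happens with probability $\Omega(1/n^2)$ by direct calculation. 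Conditioned on this, the algorithm has already committed $n-1$ servers to ordinary requests and has one server left; because the two special types are symmetric with \emph{opposite} expensive halves, whichever server the algorithm kept free is the wrong one with probability $\tfrac12$. Hence $\E[\ALG]=\Omega(2^{n/2}/n^2)$, while $\E[\OPT]\le 2n$ (fewer than $n/2$ special requests arrive except with probability $\le 2^{-n/2}$), giving the ratio. The symmetry between the two bad types is what removes the need for any multi-step trapping argument: the algorithm cannot hedge against both simultaneously, so a single forced step suffices.
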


\begin{proof}
	Let $n$ be even and let $[n]$ be the set of servers. Consider the following set of request types (each with probability $1/n$ of being drawn at each arrival): the first $n-2$ request types have cost $1$ to be served by all servers. So far the instance corresponds to the uniform metric on $2n-2$ points. Now, second to last request type has cost $1$ to be served by serves in $[n/2]$, and cost $2^{n/2}$ to be served by serves in $[n/2+1,n]$, and the last request type has the exact opposite costs. When fewer than $n/2$ of the last two types arrive, $OPT$ is exactly $n$, whereas in the opposite case, which happens with probability at most $2^{-n/2}$ by standard Chernoff Bounds, $OPT$ is at most $n\cdot \exp^{n/4}$, and so $E[OPT]\leq 2n$. 
	On the other hand, with probability $\Omega(1/n^2)$, exactly one request from the last two request types arrives, and this is the last of all arrivals. In this case, as the algorithm must match $n-1$ servers before this arrival, with constant probability the sole remaining unmatched server has cost $2^{n/2}$ to match to this last request. Therefore, we have $E[ALG] = \Omega(2^{n/2}/n^2)$.
\end{proof}

A similar argument implies that for the unknown \IID arrival model, even a single edge which violates triangle inequality is enough to rule out sub-exponential competitive ratio .

\begin{lem}\label{uiid-imp}
	The competitive ratio of any online min-cost perfect matching algorithm under unknown \IID arrivals is at least $n^{n-2}/2$. This is true even under a uniform distribution and if the costs of all edges but one satisfy the triangle inequality.
\end{lem}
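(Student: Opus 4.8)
The target is a lower bound of $n^{n-2}/2$ on the competitive ratio for \emph{unknown} i.i.d.\ arrivals, using a uniform distribution and only one bad edge. The plan is to construct two instances that look identical to the algorithm until it is too late to recover, and argue via Yao's principle (or directly, since we just need worst-case over instances). First I would take $[n]$ as the servers, and let $n-1$ of the request types have cost $1$ to every server (so this part is the uniform metric), while the last request type has cost $0$ to exactly one special server $s^\star$ and cost $n^{n-1}$ (say) to every other server — this is the single edge violating the triangle inequality, namely the cheap edge to $s^\star$ giving $d(s^\star, r_n) + d(r_n, s') = 0 + n^{n-1}$ while there is a unit-cost route. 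All request types have probability $1/n$.

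The key point is that the algorithm does not know which of the $n$ request types is the ``special'' one, nor which server is $s^\star$. I would consider the family of $n!$ instances obtained by relabeling: the identity of the special type and of $s^\star$ are chosen, and one couples them so that $\OPT$ is cheap. Concretely: condition on the event that exactly one request of the special type arrives and it is the \emph{last} arrival — this happens with probability $\Omega(1/n)$ (roughly $\binom{n}{1}(1/n)(1-1/n)^{n-1}\cdot\frac1n$ for the ``last'' refinement), and on this event $\OPT$ is at most $n$ (match the special request to $s^\star$ for cost $0$, everything else at cost $1$), whereas if zero or $\geq 2$ special requests arrive $\OPT \le n \cdot n^{n-1}$ trivially, and by the same Chernoff-type / direct counting argument the contribution of these events to $\E[\OPT]$ is dominated, so $\E[\OPT] = O(n)$. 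Meanwhile, before the last arrival the algorithm has matched $n-1$ servers; since the first $n-1$ arrivals give it no information distinguishing $s^\star$ from any other server (all non-special types are symmetric and cost $1$ everywhere), the server left unmatched is $s^\star$ with probability at most… well, this is the subtle part — see below. Conditioned on $s^\star$ \emph{not} being the leftover server, the last (special) request must be matched at cost $n^{n-1}$, giving $\E[\ALG] = \Omega(n^{n-1})$, and dividing yields the claimed $\Omega(n^{n-2})$; tuning constants gives $n^{n-2}/2$.

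The main obstacle is making rigorous the claim that the algorithm cannot ensure $s^\star$ is the leftover server. A deterministic algorithm facing the \emph{unknown}-i.i.d.\ setting sees only the realized arrivals, and on the conditioning event the first $n-1$ arrivals are all of non-special types — a sequence that is \emph{identical in distribution regardless of which server is $s^\star$}, because every non-special request is indifferent among servers. Hence the algorithm's leftover server is a random variable independent of the labeling; averaging over the $n$ equally-likely identities of $s^\star$ (i.e.\ over the instance family), the leftover server equals $s^\star$ with probability exactly $1/n$, so with probability $1-1/n$ we pay $n^{n-1}$ for the last edge. (For randomized algorithms, the same averaging argument over the instance family plus Yao gives the bound against the algorithm's internal randomness.) I would spell out this symmetry/coupling carefully — that is where all the content is — and then the arithmetic ($\E[\ALG]/\E[\OPT] \ge \Omega(n^{n-1})/O(n) \ge n^{n-2}/2$ after choosing the bad-edge weight large enough) is routine.
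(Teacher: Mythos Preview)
Your construction is inverted relative to the paper's, and this inversion breaks the bound on $\E[\OPT]$. You give the special request type one \emph{cheap} edge (cost~$0$ to $s^\star$) and $n-1$ \emph{expensive} edges (cost $n^{n-1}$ each). But then, whenever two or more copies of the special request arrive, all but one of them must be matched to a non-$s^\star$ server at cost $n^{n-1}$, so $\OPT \ge n^{n-1}$ on that event. Since the number of special arrivals is $\mathrm{Bin}(n,1/n)$, the event ``$\ge 2$ special arrivals'' has probability $1-(1-1/n)^n - (1-1/n)^{n-1} \to 1-2/e \approx 0.26$, i.e.\ \emph{constant} probability, not something a Chernoff bound will make small. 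Hence $\E[\OPT] = \Omega(n^{n-1})$, and your ratio $\E[\ALG]/\E[\OPT]$ collapses to $O(1)$.

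The paper's construction avoids this by flipping the roles: the special request type has cost~$1$ to all servers \emph{except one} server where the cost is $n^n$. Now $\OPT$ can always route every special request to a cost-$1$ server unless \emph{all $n$} arrivals are special (probability $n^{-n}$), so $\E[\OPT] \le 2n$. Meanwhile the algorithm, not knowing which server is the bad one, leaves it unmatched with probability $1/n$ when the special request arrives last, paying $n^n$; this yields $\E[\ALG] = \Omega(n^{n-2})$. Note also that with this orientation there really is exactly one non-metric edge (the single expensive one), whereas in your construction the $n-1$ expensive edges from the special request are all inconsistent with the surrounding unit costs, so the ``one bad edge'' claim is also in jeopardy. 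Your symmetry argument for why the algorithm cannot protect the critical server is fine; it is the instance, and specifically the $\E[\OPT]$ calculation, that needs to be fixed.
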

\begin{proof}[Proof (Sketch)]
	The distribution is similar to that of \Cref{kiid-imp}. We have $[n]$ denote the servers and have $n-1$ request types with service cost $1$ for each server. The final type has service cost $1$ for all servers except for one (unknown) server for which the service cost is $n^{n}$. Each request is drawn uniformly from this distribution. Unless $n$ copies of the last request type arrive (an even which happens with probability $1/n^n$), the cost of the optimal matching is $OPT=n$, and otherwise it is $n-1+n^n$, and so $E[OPT]\leq 2n$. On the other hand, with probability $\Omega(1/n)$, the special request type has exactly one arrival, and this is at the last time step, and so with probability $1/n$ this request's ``costly'' serve is the sole unmatched server, implying $E[ALG] = \Omega(n^{n-2})$.
\end{proof}

Finally, the same argument can show that the same input as in \Cref{uiid-imp}, with the sole costly edge being arbitrarily high, rules out any bounded competitive ratio, as having exactly one request of each type yields and input with $OPT=n$ but with $ALG$'s matching cost being unboundedly bad with probability $\Omega(1/n)$.

\begin{cor}
	The competitive ratio of any online min-cost perfect matching algorithm under random arrival order is \emph{unbounded}. This is true even if the costs of all edges but one edge satisfy the triangle inequality.
\end{cor}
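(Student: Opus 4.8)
The plan is to reuse the hard instance of \Cref{uiid-imp}, but now in the random-order model and with the single ``bad'' edge weight sent to infinity. Fix the number of servers $n\ge 2$ and a parameter $W$ (eventually $W\to\infty$). Take $n$ servers $[n]$ together with $n$ request types: $n-1$ \emph{uniform} types, each at service cost $1$ to every server, and one \emph{special} type, at service cost $1$ to every server except a designated server $s^*$, where it costs $W$. As the request set we take exactly one request of each of the $n$ types, and nature presents these $n$ requests in uniformly random order. Since every edge has cost at least $1$ and a perfect matching uses $n$ edges, while matching the special request to some server other than $s^*$ (at cost $1$) together with an arbitrary matching of the uniform requests (at cost $1$ each) is feasible, we get $\OPT = n$. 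Moreover, in the shortest-path metric on the point set consisting of all request types and all servers induced by these edge weights, the only edge whose weight exceeds the distance between its endpoints is $(\text{special},s^*)$: its endpoints lie at distance $3$ via special $\to$ server $\to$ uniform type $\to s^*$ once $W>3$, whereas every other edge has the minimum possible weight $1$ and hence cannot be shortcut. So all but one edge obey the triangle inequality.

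Next I would lower-bound any (possibly randomized) algorithm $A$ on this instance. The key observation is that the cost vectors of the $n-1$ uniform requests are all-ones \emph{independently of $s^*$}, so $A$'s behavior on any prefix of arrivals consisting only of uniform requests does not depend on the choice of $s^*$. Let $E$ be the event that the special request arrives last, so $\Pr[E]=1/n$. On $E$, after its first $n-1$ steps $A$ has matched the $n-1$ uniform requests to some $n-1$ servers, leaving one free server $s_{\mathrm{free}}$; write $q_s:=\Pr[\,s_{\mathrm{free}}=s\mid E\,]$, a quantity that (by the observation above) does not depend on $s^*$. Since $\sum_{s\in[n]} q_s = 1$, the adversary may choose $s^*$ with $q_{s^*}\ge 1/n$. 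On $E\cap\{s_{\mathrm{free}}=s^*\}$, which has probability at least $1/n^2$, algorithm $A$ is forced to match the special request to $s^*$ and pays $W$; hence $\E[\ALG]\ge W/n^2$.

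Combining the two bounds, the competitive ratio of $A$ is at least $\E[\ALG]/\OPT \ge W/n^3$, which is unbounded as $W\to\infty$ (with $n$ fixed). As $A$ was arbitrary, no online algorithm has bounded competitive ratio under random-order arrivals, even when only a single edge violates the triangle inequality, which is the claim. The only point requiring care is the step that lets the adversary commit to $s^*$ \emph{against} the algorithm: this is legitimate precisely because $A$ cannot distinguish the uniform requests from one another, so its state just before the special request arrives is independent of $s^*$, and a one-line pigeonhole over the $n$ candidate values of $s_{\mathrm{free}}$ finishes the argument. Everything else is a routine calculation.
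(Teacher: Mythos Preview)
Your proof is correct and follows essentially the same approach as the paper's one-sentence sketch: reuse the instance from \Cref{uiid-imp} with one request of each type and let the single bad edge weight $W\to\infty$, observing that $\OPT=n$ while the algorithm is forced to pay $W$ with probability $\Omega(1/n^2)$. You supply the details the paper omits---the explicit verification that only the edge $(\text{special},s^*)$ violates the triangle inequality, and the pigeonhole argument justifying the adversary's choice of $s^*$ against the algorithm's distribution over the last free server---and your probability bound of $1/n^2$ is in fact more careful than the paper's stated $\Omega(1/n)$.
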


\bibliographystyle{acmsmall}
\bibliography{abb,ultimate}

\end{document}